\tikzset{black node/.style={draw, circle, fill = black, minimum size = 5pt, inner sep = 0pt}}
\tikzset{white node/.style={draw, circle, fill = white, minimum size = 5pt, inner sep = 0pt}}
\tikzset{normal/.style = {draw=none, fill = none}}
\newcommand{\mynewtheorem}[2]{
  \newaliascnt{#1}{dummy}
  \newtheorem{#1}[#1]{#2}
  \aliascntresetthe{#1}
  \expandafter\def\csname #1autorefname\endcsname{#2}
}
\theoremstyle{plain}
\theoremstyle{remark}
\DeclareMathOperator{\genus}{\gamma}
\DeclareMathOperator{\cc}{\mathrm{\mathbf{cc}}}
\DeclareMathOperator{\tw}{\mathbf{tw}}
\DeclareMathOperator{\tpw}{\mathbf{tpw}}
\DeclareMathOperator{\tcw}{\mathbf{tcw}}
\DeclareMathOperator{\pw}{\mathbf{pw}}
\DeclareMathOperator{\polylog}{polylog}
\DeclareMathOperator{\desc}{desc}
\DeclareMathOperator{\lminor}{\mathbin{\leq_{\mathrm{m}}}}
\DeclareMathOperator{\ltminor}{\mathbin{\leq_{\mathrm{tm}}}}
\DeclareMathOperator{\limm}{\mathbin{\leq_{\mathrm{imm}}}}
\newcommand{\lleq}{\preceq}
\DeclareMathOperator{\xcover}{{\sf x-cover}}
\DeclareMathOperator{\xpack}{{\sf x-pack}}
\DeclareMathOperator{\vcover}{{\sf v-cover}}
\DeclareMathOperator{\vpack}{{\sf v-pack}}
\DeclareMathOperator{\ecover}{{\sf e-cover}}
\DeclareMathOperator{\epack}{{\sf e-pack}}
\newcommand{\sv}{{\sf v}}
\newcommand{\se}{{\sf e}}
\newcommand{\sx}{{\sf x}}
\newcommand{\ep}{{Erdős--Pósa}}
\newcommand{\N}{\mathbb{N}}
\newcommand{\R}{\mathbb{R}}
\newcommand{\floor}[1]{\left\lfloor#1\right\rfloor}
\newcommand{\intv}[2]{\left \{#1,\dots, #2 \right \}}
\newcommand{\itemref}[1]{\hyperref[#1]{(\ref*{#1})}}
\newcommand{\coref}[1]{\hyperref[#1]{Cor.\ \ref*{#1}}}
\date{\empty}
\title{Recent techniques and results on the\\  Erdős--Pósa property\thanks{Emails of the authors: ~ \href{mailto:jean-florent.raymond@mimuw.edu.pl}{\texttt{jean-florent.raymond@mimuw.edu.pl}}\ , \href{mailto:sedthilk@thilikos.info}{\texttt{sedthilk@thilikos.info}} .}}
\author{Jean-Florent Raymond\thanks{Institute of Computer Science of the University of Warsaw,
    Poland, and University of Montpellier, France. } \thanks{Supported by the
    grant PRELUDIUM
    2013/11/N/ST6/02706 of the Polish National Science Center (NCN).} \thanks{AlGCo project team, CNRS, LIRMM, Montpellier,
    France.}
\and Dimitrios  M. Thilikos$^\S$\thanks{Department of Mathematics, National
  and Kapodistrian University of Athens, Greece.}}
\begin{document}
\maketitle

\begin{abstract}
\noindent Several min-max relations in graph theory can be expressed in the framework of the \ep\ property.
Typically, this property reveals a connection between packing and covering problems on graphs. 
We describe some recent techniques for proving this property 
that are related to tree-like decompositions. We also provide an unified presentation of 
the current state of the art on this topic.
\end{abstract}

\medskip

\noindent{{\bf Keywords}: \ep\ property, min-max theorems, tree decompositions, tree partitions, girth, graph minors, topological minors, graph immersions.}

\section{Introduction}

A considerable part of combinatorics has been developed around  min-max theorems. 
Min-max theorems usually identify dualities between certain objects in graphs, 
hypergraphs, and other combinatorial structures. The target is to  prove 
that the absence of the primal object   
implies the presence of the dual one and vice versa.

A classic example of such a duality is Menger's theorem: the 
primal concept is the existence of $k$ internally disjoint paths between 
two vertex sets $S$ and $T$ of a graph $G$, while the 
dual concept is a collection of $k$ vertices
that intersect all $(S,T)$-paths. Another example is Kőnig's theorem
where the primal notion is the existence of a matching with $k$ edges 
in a bipartite graph and the dual one is the existence of a vertex cover of size $k.$
It is also known that, in case of general graphs, this duality 
becomes an approximate one, i.e., a vertex cover 
of size $2k.$
In both aforementioned examples, the duality 
relates the notions of  {\em packing} 
and {\em covering} of a collection ${\cal C}$
of combinatorial objects of a graph. In Menger's 
theorem  ${\cal C}$ consists of all $(S,T)$-paths 
of $G$ while in Kőnig's theorem  ${\cal C}$ is the 
set of all edges of $G.$
That way, both aforementioned min-max theorems can be stated, 
for some class of graphs ${\cal G}$ (called {\em host} class)
and some gap function $f:\N\rightarrow\N$,
as follows:
\begin{quote}
 {\sl For every graph $G$~in ${\cal G}$, either 
$G$ contains $k$-vertex disjoint objects in ${\cal C}$
or it contains $f(k)$ vertices intersecting all objects in ${\cal C}$ that appear in $G.$}
\end{quote}
Clearly, for the case of Menger's theorem the host class is the class of all graphs while 
in the case of Kőnig's theorem the host class is restricted to the class of bipartite 
graphs. In both cases the derived duality is an {\em exact} one in the sense that  $f$ is
the identity function. However, this is not 
the case if we want to extend the duality of  Kőnig's theorem 
in the case of all graphs, where we can consider $f:k\mapsto 2k$ 
(i.e., we have 
an {\em approximate} duality).

One of the most celebrated results about packing/covering dualities was obtained 
by Paul Erdős and Lajos Pósa in 1965 where the object to cover and pack was 
the set of all cycles of $G$~\cite{erdHos1965independent}.
In this case the host class contains all graphs, while $f:k\mapsto O(k\cdot\log k).$
Moreover, Erdős and Pósa proved that this gap is optimal in the sense that 
it cannot be improved to a function $f:k\mapsto o(k\cdot\log k).$
This result motivated a long line of research for min-max dualities that are 
exact or 
approximate. Since then, a multitude of results on the \ep\ property
have appeared for several combinatorial objects, including extensions 
to digraphs~\cite{lucchesi1978minimax, Seymour1996Packing, ReedRST96pack, havet:hal-00816135,Guenin2010packing}, rooted graphs~\cite{KakimuraKK12erdo,Pontecorvi20121134, 2014arXiv1411.6554J, 2014arXiv1412.2894B}, labeled graphs~\cite{Kawarabayashi2005271}, signed
graphs~\cite{HochstattlerNP06,Aracena1263580}, hypergraphs~\cite{Alon02cove,bousquet2013hitting,Bousquet15ecdi}, matroids~\cite{GeelenK09thee}, 
and other combinatorial structures~\cite{gyarfas1970helly}
(see~\cite{Reed97tree} for a survey on this topic). Also it is worth to stress that 
\ep\ dualities have been useful in more applied domains. 
For example, in bioinformatics, they were useful for upper-bounding
the number of fixed-points of a boolean 
networks~\cite{Aracena2008,Aracena1263580, Aracena2016RSnumber}.\medskip

The purpose of this paper is twofold. We first describe some recent techniques 
for proving \ep-type results, mainly based on techniques related to 
tree-like decompositions of graphs (\autoref{decompep}) and the parameter of girth (\autoref{fromgirth}). We focused our presentation
to  the description of general frameworks that, we believe, might be useful for 
further investigations. In \autoref{sec:cr}, we present negative results on classes defined by containment relations. Lastly, in \autoref{results}, we provide an extensive update 
of results on the \ep\ property, reflecting the current progress on this 
vibrant area of graph theory.

\section{Definitions}

Unless otherwise mentioned, graphs in this paper are finite, undirected, do
not have loops and they may have multiple edges. 
We call a graph {\em nontrivial} if it contains at least one edge.
We denote by $V(G)$ and $E(G)$ the
vertex and edge sets of a graph~$G$, respectively, and we set $|G| =
|V(G)|$ and $\|G\| = |E(G)|$ (counting multiplicities).
For every set $X$ of vertices of a graph $G$, the subgraph of $G$ \emph{induced} by $X$, that we write $G[X]$, is the graph $\left (X, E\cap {X\choose 2}\right)$. For every set $X$ of vertices (resp.\ edges), we define $G\setminus X$ as the graph $G[V(G) \setminus X]$ (resp.\ $(V(G), E(G)\setminus X)$). The \emph{degree} of a vertex $v$ of a graph $G$, that we write $\deg_G(v)$ is the number of vertices adjacent to $v$ in~$G$. We drop the subscript when there is no ambiguity.

For $\sx \in \{ \sv,
\se\}$, and $G$ a graph, let $A_\sx(G) = V(G)$ if $\sx=\sv$ and $A_\sx(G) =
E(G)$ if $\sx=\se.$ In this sense we use symbols ${\sf v}$ and ${\sf e}$
in order to distinguish the vertex and the edge variants of the properties/parameters  
that we are dealing with. A graph is {\em subcubic} its maximum degree is bounded by $3.$
For every $t\in \N$, we denote by $\theta_t$ the graph with two vertices and $t$~edges.

\paragraph{Local operations.}
The operation of {\em contracting an edge $\{x,y\}$} in a graph $G$ introduces a new vertex $v_{xy}$ and makes it adjacent 
with all neighbors of $x$ and $y$ and then deletes $x$ and $y.$
The operation of {\em lifting a pair of edges $\{x,y\}$, 
$\{y,z\}$} in a graph $G$
increases by one the multiplicity of the edge $\{x,z\}$ (or introduces this edge
if it does not exist)
and then reduces by one the  multiplicities of  $\{x,y\}$ and $\{y,x\}.$

\paragraph{Partial orders on graphs.}
Given two graphs $H$ and $G$, we say that $H$ is an {\em induced subgraph} 
of $G$ if $H$ can be obtained from $G$ after removing vertices. Additionally,  
$H$ is a {\em 
subgraph} of $G$ if it can be obtained by some induced subgraph of $G$ after removing edges.
We also say that $H$ is a {\em  minor} (resp. {\em topological minor}) of $G$ if it can be obtained by some subgraph of $G$ after contracting edges (after contracting edges with some endpoint of degree at most $2$). Finally, we say that a graph $H$ is an {\em immersion}
of a graph $G$ if it can be obtained from some subgraph of $G$
after lifting 
pairs of edges that share some common endpoint.

Given a graph $H$, we denote by ${\cal M}(H)$, ${\cal T}(H)$, ${\cal I}(H)$
the class of all graphs that contain $H$ as a minor, topological minor, or immersion respectively.


\paragraph{Packings and covers.}
Let $\mathcal{H}$ be a family of graphs and let $\sx \in \{\sv, \se\}.$
An {\em \sx-$\mathcal{H}$-cover} of $G$ is a set $C \subseteq A_\sx(G)$ such that
$G \setminus C$ does not contain any subgraph isomorphic to a member
of $\mathcal{H}.$
An {\em  \sx-$\mathcal{H}$-packing} in $G$ is a collection of $\sx$-disjoint subgraphs
of $G$, each being isomorphic to some graph of $\mathcal{H}.$

We denote by $\xpack_{\mathcal{H}}(G)$ the maximum size of an \sx-$\mathcal{H}$-packing
and by $\xcover_{\mathcal{H}}(G)$ the minimum size of an \sx-$\mathcal{H}$-cover in~$G.$
Clearly, by definition, it always hold that  $\xpack_{\mathcal{H}}(G)\leq \xcover_{\mathcal{H}}(G)$, for every graph~$G$.

\paragraph{The \ep\ property.}
Let $\mathcal{G}$ and $\mathcal{H}$ be two graph classes, and let $\sx
\in \{\sv,\se\}.$ We refer to ${\cal G}$ as the {\em host} graph class and by ${\cal H}$ 
as the {\em guest} graph class.
We say that
$\mathcal{H}$ has the \sx-\emph{\ep\
  property} for $\mathcal{G}$ if there is a function $f\colon \N\to\N$
such that the following holds:
\[
\forall G \in \mathcal{G},\ \xcover_{\mathcal{H}}(G) \leq f(\xpack_{\mathcal{H}}(G)).
\]
Any function $f$ satisfying the above inequality is called a
\emph{gap} of the \sx-\ep\ property of $\mathcal{H}$ for $\mathcal{G}.$
When a class of graphs has the \sx-\ep\ property for the class of finite
graphs, we simply say that it has the \sx-\ep\ property.

\paragraph{Rooted trees.}
A \emph{rooted tree} is a pair $(T,s)$ where $T$ is a tree and $s \in
V(T)$ is a vertex referred to as the \emph{root}. Given a vertex $x\in V(T)$, the
\emph{descendants} of $x$ in $(T,s)$,
denoted by $\desc_{(T,s)}(x)$, is the set containing each vertex
$w$ such that the unique path from $w$ to $s$ in $T$ contains $x.$
If $y$ is a descendant of $x$ and is adjacent to $x$, then it is a
\emph{child} of~$x.$

\paragraph{Tree partitions.}
A \emph{tree partition} of a graph $G$ is a pair ${\cal D}=({\cal X},T)$
where $T$ is a tree and  ${\cal X}=\{X_t\}_{t\in V(T)}$ is a
partition of $V(G)$ such that either $|T|=1$ or for every $\{x,y\}\in
E(G)$, there exists an edge $\{t,t'\}\in E(T)$ where
$\{x,y\}\subseteq X_{t}\cup X_{t'}.$  Given an edge $f=\{t,t'\}\in
E(T)$, we define $E_{f}$ as the set of edges with one endpoint in
$X_{t}$ and the other in $X_{t'}.$ The {\em width} of ${\cal D}$ is
defined as \[\max\{\max\{|X_{t}|\}_{t\in V(T)}, \max\{\|G[X_{t}]\|\}_{t\in V(T)},\max\{|E_{f}|\}_{f\in E(T)}\}.\]

The \emph{tree partition width} of $G$ is 
the minimum width over all  tree partitions of $G$ and will be denoted by $\tpw(G).$ 
Tree partitions have been introduced in~\cite{Seese} (see also \cite{Halin1991203}) and 
tree partition width has been defined for simple graphs in~\cite{Ding1996tree}.
The extension of this definition for multigraphs is due to~\cite{Chatzidimitriou2015logopt}.

A \emph{rooted tree partition} of a graph $G$ is a 
triple ${\cal D}=({\cal X},T,s)$ where $(T,s)$ is a rooted tree and 
$({\cal X},T)$ is a tree partition of $G.$

\paragraph{Tree decompositions.}
A \emph{tree decomposition} of a graph~$G$
is a pair~$(T,\mathcal{X})$, where $T$ is a tree and
$\mathcal{X}$ is a family $\{X_t\}_{t \in V(T)}$ of
subsets of $V(G)$ (called \emph{bags}) indexed by
elements of $V(T)$, such that the following holds
 \begin{enumerate}[(i)]
 \item $\bigcup_{t \in V(T)} X_t = V(G)$;
 \item for every edge~$e$ of~$G$ there is an element of~$\mathcal{X}$
containing both ends of~$e$;
 \item for every~$v \in V(G)$, the subgraph of~${T}$
induced by $\{t \in V(T),\ {v \in X_t}\}$ is connected.
 \end{enumerate}

The \emph{width} of a tree decomposition~${(T, \mathcal{X})}$ is defined as
equal to $\max_{t \in V(T)} {|X_t| - 1}.$ The
\emph{treewidth} of~$G$, written~$\tw(G)$, is the minimum width of any
of its tree decompositions.

\section{The  \texorpdfstring{Erdős--Pósa}{Erdös-Posa} property from graph decompositions}
\label{decompep}
%
%

Let ${\cal H}$ be a graph class, ${\bf p}$ be a graph parameter, and ${\sf x}\in\{{\sf v},{\sf e}\}.$
We say that a function $f \colon \N_{\geq 0} \to \N_{\geq 0}$ is a {\em ceiling} for the  triple $({\bf p},{\cal H},{\sf x})$
if 
for every graph $G$,  
${\bf p}(G)\leq f(\xpack_{{\cal H}}(G)).$ Intuitively, there is a
ceiling for the triple $({\bf p},{\cal H},{\sf x})$ if a large value
of $\bf p$ on a graph forces a large $\sx$-packing of elements of $\mathcal{H}.$

Given a graph parameter ${\bf p}$ and an integer $k$,
we denote $${\cal G}_{{\bf p}\leq k}=\{G,\ {\bf p}(G)\leq k\}.$$

\begin{theorem}\label{ceil}
Let ${\cal H}$ be a class of  graphs, ${\sf x}\in\{{\sf v},{\sf e}\}$,
${\bf p}$ be a graph parameter, let $f\colon\N \to \N$ be a
function and let $h_{r} \colon \N \to \N$ be a function, for every $r
\in \N.$ Suppose that the following two conditions hold:
\begin{enumerate}[\bf A.]
\item $f$ is a ceiling for the triple $({\bf p},{\cal H},{\sf x})$;
\item for every $r\in\N$, ${\cal H}$ has the ${\sf x}$-\ep\ property 
for ${\cal G}_{{\bf p}\leq r}$ with gap $h_{r}$;
\end{enumerate}
then ${\cal H}$ has  the ${\sf x}$-\ep\ property
with gap $k \mapsto h_{f(k)}(k).$
\end{theorem}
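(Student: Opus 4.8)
The plan is to prove the statement directly, by a single pointwise application of the two hypotheses, without induction and without any monotonicity hypothesis. Fix an arbitrary finite graph $G$ and set $k := \xpack_{{\cal H}}(G)$, which is a well-defined natural number since $G$ is finite. It suffices to show that $\xcover_{{\cal H}}(G) \leq h_{f(k)}(k)$, since this is exactly the inequality asserting that $k \mapsto h_{f(k)}(k)$ is a gap of the ${\sf x}$-\ep\ property of ${\cal H}$ on the class of all finite graphs.

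First I would use hypothesis~\textbf{A}: since $f$ is a ceiling for the triple $({\bf p}, {\cal H}, {\sf x})$, we have ${\bf p}(G) \leq f(\xpack_{{\cal H}}(G)) = f(k)$. Put $r := f(k)$; this is a legitimate index because $f(k) \in \N$ and $h_{r}$ is defined for every $r \in \N$. The inequality ${\bf p}(G) \leq r$ says precisely that $G \in {\cal G}_{{\bf p}\leq r}$.

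Next I would invoke hypothesis~\textbf{B} for this particular $r$: as ${\cal H}$ has the ${\sf x}$-\ep\ property for ${\cal G}_{{\bf p}\leq r}$ with gap $h_{r}$ and $G$ lies in that class, we obtain $\xcover_{{\cal H}}(G) \leq h_{r}(\xpack_{{\cal H}}(G)) = h_{f(k)}(k)$, using $k = \xpack_{{\cal H}}(G)$ and $r = f(k)$. This is the desired bound, so the argument is complete. I do not expect any real obstacle; the only subtlety worth keeping in mind is that the gap $h_{r}$ supplied by~\textbf{B} is evaluated only at the single argument $\xpack_{{\cal H}}(G)$, and that the index $r=f(k)$ is allowed to depend on $G$ — which is harmless, since~\textbf{B} quantifies over all $r \in \N$. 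In particular no monotonicity of $f$ or of the family $(h_r)_{r\in\N}$ is required.
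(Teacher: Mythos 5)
Your argument is correct and is essentially identical to the paper's own proof: both fix $G$, set $k=\xpack_{\mathcal H}(G)$, use the ceiling to place $G$ in $\mathcal{G}_{\mathbf{p}\leq f(k)}$, and then apply hypothesis \textbf{B} with $r=f(k)$. Your closing remarks about no monotonicity being needed are accurate but not required.
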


\begin{proof}
Let $G$ be a graph and let $k = \xpack_{\mathcal H}(G).$ We have ${\bf
  p}(G) \leq f(k)$, by definition of a ceiling. Therefore, $G\in {\cal
  G}_{{\bf p}\leq f(k)}$, and thus $\xcover_{\mathcal{H}}(G) \leq
h_{f(k)}(k).$
\end{proof}

\autoref{ceil} will be used as a master theorem for the results of this section. 

\subsection{Vertex version and tree decompositions}

In a breakthrough paper~\cite{ChekuriC13Larg}, Chekuri and Chuzhoy
proved that every graph of large treewidth can be partitioned into
several subgraphs of large treewidth, with a polynomial dependency
between the treewidth of the original graph, the one of the subgraphs,
and the number of subgraphs. In particular they proved the next result.

\begin{theorem}[\!\!{\cite[Theorem~1.1]{ChekuriC13Larg}}]\label{cc1}
  Let $G$ be a graph with $\tw(G) = k$, and let $h,r$ be two integers
  with $hr^2 \leq k/\polylog k$. Then there is a partition $G_1,
  \dots, G_h$ of $G$ into vertex-disjoint subgraphs such that
  $\tw(G_i) \geq r$ for every $i \in \intv{1}{h}$.
\end{theorem}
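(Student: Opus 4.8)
The plan is to turn the treewidth bound into a structural object — a long \emph{path-of-sets system} — that can be cut into $h$ vertex-disjoint pieces, each of which still certifies treewidth at least $r$ by containing a well-linked set of size $\Omega(r)$. The first and only heavy step is the extraction: I would invoke the theory of well-linked sets and path-of-sets systems (the structural objects at the core of the polynomial grid-minor theorem) to obtain, from a graph $G$ with $\tw(G)=k$ and with only polylogarithmic loss, a path-of-sets system of width $w$ and length $\ell$ — a sequence $S_1,\dots,S_\ell$ of pairwise vertex-disjoint clusters, each carrying $w$ interface vertices that are well-linked inside $G[S_i]$, together with $w$ pairwise vertex-disjoint paths linking the interface of $S_j$ to that of $S_{j+1}$ for every $j$, all internally disjoint from the clusters. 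The point is that such a system can be produced with $w=\Theta(r)$ and $\ell=h$ precisely under the hypothesis $hr^2\le k/\polylog k$. I would cite this as a black box; proving it with polylogarithmic (rather than polynomial) loss is the technical heart of the matter.

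Granting the extraction, the rest is bookkeeping. The clusters $S_1,\dots,S_h$ are pairwise vertex-disjoint by definition, so put $V_i:=S_i$ for $2\le i\le h$ and let $V_1$ absorb $S_1$ together with all vertices of $G$ not lying in any cluster; then $V_1,\dots,V_h$ partition $V(G)$. For each $i$, the interface of $S_i$ is a well-linked set of size $w=\Theta(r)$ in $G[S_i]$, hence yields a bramble of order $\Omega(w)$ and therefore $\tw(G[S_i])=\Omega(w)$; choosing the constant in $w=\Theta(r)$ large enough — the polylogarithmic slack absorbs it — gives $\tw(G[S_i])\ge r$. Since $G[S_i]$ is a subgraph of $G_i:=G[V_i]$ and treewidth does not increase under passing to a subgraph, $\tw(G_i)\ge r$ for every $i$, as required. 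Note that this argument certifies large treewidth \emph{one cluster at a time}, via brambles rather than grid minors, which is exactly why the loss stays polylogarithmic.

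The step I expect to be the obstacle is the extraction itself: producing, with only polylog loss, a path-of-sets system of the prescribed width and length inside an arbitrary graph of treewidth $k$ — this is where Menger's theorem, iterated well-linked decompositions, and careful flow arguments enter, and it is also where the square root comes from. Intuitively, stretching a single treewidth-$k$ region into $\ell$ disjoint clusters that are both internally well-linked (cost $\sim w$ each) and connected in series by $w$ disjoint interface paths across each of the $\ell-1$ gaps forces the ``budget'' to scale like $w^2\ell$ rather than $w\ell$; with $w=\Theta(r)$ and $\ell=h$ this is $\Theta(hr^2)$, which is why the hypothesis reads $hr^2\le k/\polylog k$ and not $hr\le k/\polylog k$. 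Everything downstream of the extraction, as sketched above, is routine.
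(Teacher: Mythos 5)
First, note that the paper does not prove this statement: it is imported verbatim from Chekuri and Chuzhoy \cite{ChekuriC13Larg}, so there is no in-paper argument to compare yours against. Judged on its own merits and against the actual proof in that reference, your sketch has a genuine gap located precisely in its one ``heavy step.'' You propose to extract, as a black box, a path-of-sets system of width $w=\Theta(r)$ and length $\ell=h$ from any graph satisfying $hr^2\le k/\polylog k$, i.e.\ with only polylogarithmic loss. No such extraction theorem is available to cite: the known constructions of path-of-sets systems (in the polynomial grid-minor paper \cite{ChekuriC13poly}) lose \emph{polynomial} factors in the width, because establishing the $w$ pairwise disjoint linkage paths between consecutive clusters and the well-linkedness of the interfaces is exactly the expensive part of that machinery. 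Worse, those constructions are themselves built on top of the decomposition theorem you are trying to prove --- partitioning the graph into many disjoint pieces of large treewidth is the first step toward assembling a path-of-sets system --- so invoking them here would be circular. Your final paragraph concedes that the extraction is ``the obstacle,'' but the theorem cannot be reduced to that obstacle: as stated, the black box is stronger than the currently known grid-minor bounds.

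The repair is to aim for a much weaker structure. Chekuri and Chuzhoy never construct the inter-cluster linkage (which your argument discards anyway once the clusters are in hand): they work directly with well-linked sets and prove a recursive \emph{splitting} lemma, showing that a graph containing a large well-linked set can be partitioned into two induced subgraphs, each retaining a well-linked set, with only polylogarithmic loss per split; iterating to produce $h$ parts and converting each surviving well-linked set of size $\Omega(r)$ into a treewidth lower bound via brambles yields the theorem. The $r^2$ in the hypothesis arises from this splitting procedure (together with a degree-reduction step), not from routing $w$ paths across each of $\ell$ gaps. Your closing ``bookkeeping'' --- well-linked interface gives a bramble, hence large treewidth; treewidth is monotone under subgraphs; leftover vertices are absorbed into one part --- is correct; it is the existence of the clusters themselves that your proposal does not actually establish.
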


Besides, the results in \cite{ChekuriC13poly} provide a polynomial
bound for the grid exclusion theorem.
The \emph{$(p\times q)$-grid} (for $p,q \in \N$) is the graph with vertex set
$\intv{1}{p} \times \intv{1}{q}$ and edge set $\{\{(i,j),(i',j')\},\ |i-i'| + |j-j'|=1\}$.
\begin{theorem}[\!\!{\cite[Theorem 1.1]{ChekuriC13poly}}]\label{cc2}
  There is constant $\delta$ such that every graph of treewidth $k$
  contains as a minor a $(\Omega(k^\delta)\times \Omega(k^\delta))$-grid.
\end{theorem}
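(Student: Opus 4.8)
The plan is to establish the statement in two phases: first, extract from a graph of treewidth $k$ a highly structured ``chain'' of pairwise disjoint pieces, each internally well-connected and joined to the next by many vertex-disjoint paths; then cross those joining paths against one another to carve out a grid minor. Following Chekuri and Chuzhoy, call a \emph{path-of-sets system of length $\ell$ and width $w$} a sequence $S_1,\dots,S_\ell$ of pairwise disjoint vertex sets together with, for each $i\in\intv{1}{\ell-1}$, a collection of $w$ pairwise vertex-disjoint $S_i$--$S_{i+1}$ paths internally avoiding all the $S_j$, such that in each $G[S_i]$ the endpoints of the paths meeting $S_i$ form a well-linked set. The goal of Phase~1 is a path-of-sets system with $\ell,w=k^{\Omega(1)}$, and that of Phase~2 is the (elementary-looking but technical) fact that a path-of-sets system of length and width $k^{\Omega(1)}$ contains a $(g\times g)$-grid as a minor with $g=k^{\Omega(1)}$; together these give the theorem, the exponent $\delta$ arising essentially as the product of the exponents of the two phases and absorbing all $\polylog$ losses.

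For Phase~1 I would begin with the standard equivalence between treewidth and well-linkedness: a graph of treewidth $k$ carries a well-linked set $T$ of size $\Omega(k)$ (for instance through a bramble of order $\Omega(k)$). The path-of-sets system is then assembled by iterative routing. Keeping a partial system whose terminal cluster still hosts a well-linked set of size $w'$, one exploits the connectivity forced by well-linkedness --- together with the fact that the graph is still ``large'' in the treewidth sense --- to route $w'$ disjoint paths out of that cluster into untouched territory and to carve a fresh well-linked cluster there, which is then appended to the chain. Each extension is a routing lemma built on a \emph{well-linked decomposition} (a sparsest-cut style recursive partition of the current frontier), and a single extension can be performed at the price of only a polynomial factor in the surviving width.

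The main obstacle --- and essentially the whole content of the Chekuri--Chuzhoy argument --- is that this iteration must not bleed too fast. A naive realization loses a constant factor of width per extension, so only $O(\log k)$ extensions survive and one obtains a system of merely polylogarithmic length; this is precisely the regime where Robertson and Seymour's original exponential bound gets stuck. Breaking out of it demands a global, amortized accounting: one proves a dichotomy at every stage --- either the structure built so far already certifies a long path-of-sets system, or ``cheap'' progress can be made whose amortized cost, summed over all $k^{\Omega(1)}$ steps, stays polynomial --- while tracking well-linkedness carefully along the frontier throughout. This is where the $\polylog$ factors (compare \autoref{cc1}) enter, and I expect virtually all the difficulty to lie here.

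Phase~2 is comparatively routine. Given a path-of-sets system of length and width polynomial in $k$, a path-crossing argument --- singling out a subcollection of the threads running through the chain as prospective rows, and, inside suitably spaced clusters, using the well-linkedness of the interface to route transversal paths as prospective columns, then contracting the pieces lying between consecutive crossings --- yields a $(g\times g)$-grid minor with $g=k^{\Omega(1)}$, which completes the proof.
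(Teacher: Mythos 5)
The paper does not prove this statement at all: it is imported verbatim as \cite[Theorem~1.1]{ChekuriC13poly}, a deep external result, so there is no internal proof to compare against. Your outline does faithfully reproduce the architecture of the actual Chekuri--Chuzhoy argument --- well-linked set of size $\Omega(k)$ from a bramble, iterative construction of a path-of-sets system, then extraction of a grid minor from that system --- so the \emph{approach} is the right one.

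As a proof, however, the proposal has a genuine gap exactly where you say you expect the difficulty to lie, and naming the difficulty is not the same as resolving it. The entire content of the theorem is the claim that each extension of the path-of-sets system can be performed with only a \emph{polynomial} (rather than constant-factor) loss in width, amortized over $k^{\Omega(1)}$ extensions; you assert that ``a global, amortized accounting'' and ``a dichotomy at every stage'' achieve this, but you give no mechanism. The actual argument requires several nontrivial ingredients that your sketch does not supply: the well-linked decomposition with its sparsest-cut charging scheme, the \emph{boosting} step that upgrades a weakly well-linked set to a (node-)well-linked one at polynomial cost, the degree-reduction and clustering arguments that keep the frontier manageable, and the precise statement of the dichotomy (either the current cluster already splits into two large well-linked pieces, extending the chain, or a cut argument recurses into a still-large subinstance). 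Without these, the claim that ``a single extension can be performed at the price of only a polynomial factor'' is precisely the theorem being assumed rather than proved. Phase~2 is indeed the easier half and your description of it is adequate as a sketch, but Phase~1 as written is a roadmap to the known proof, not a proof.
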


As every planar graph $G$ is a minor of the every $(p \times
 p)$-grid for $p=|G| + 2\|G\|$ (\cite[1.5]{robertson1994quickly}), these two results can be combined to give the following polynomial ceiling for planar
graphs.
\begin{corollary}[see also the proof of {\cite[Theorem 5.4]{ChekuriC13Larg}}]\label{chekceil}
There is a function $f_h(k)=h^{O(1)}\cdot k\cdot (\log k)^{O(1)}$ such that, 
for every planar graph $H$ on $h$ edges, $f_h$ is a ceiling for the triple $(\tw,{\cal M}(H),{\sf v})$.
\end{corollary}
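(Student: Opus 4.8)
The plan is to follow the argument sketched in the proof of \cite[Theorem~5.4]{ChekuriC13Larg}: combine the partition result \autoref{cc1} with the polynomial grid theorem \autoref{cc2}, and feed in the classical fact recalled just above that a planar graph on $h$ edges is a minor of a grid of side $O(h)$. Fix a planar graph $H$ with $\|H\|=h\ge 1$. We may assume $H$ has no isolated vertex (these play no role for the minor relation in the regime that matters), so $|H|\le 2h$ and, by \cite[1.5]{robertson1994quickly}, $H$ is a minor of the $(p\times p)$-grid with $p:=|H|+2h\le 4h$. Let $\delta$ be the constant of \autoref{cc2} and let $\gamma>0$ be a constant such that every graph of treewidth at least $r$ contains the $\bigl(\lfloor\gamma r^{\delta}\rfloor\times\lfloor\gamma r^{\delta}\rfloor\bigr)$-grid as a minor. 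Set $r=r(h):=\bigl\lceil(5h/\gamma)^{1/\delta}\bigr\rceil$, so that $\lfloor\gamma r^{\delta}\rfloor\ge 4h\ge p$; note that $r(h)=h^{O(1)}$.

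Now let $G$ be an arbitrary graph, write $t=\tw(G)$ and $k=\vpack_{{\cal M}(H)}(G)$, and assume first that $t$ is large enough that the integer $h^{\ast}$, defined as the largest one with $h^{\ast}\,r(h)^{2}\le t/\polylog t$, is positive. First I would apply \autoref{cc1} with this $r(h)$ and this $h^{\ast}$, obtaining a partition of $G$ into $h^{\ast}$ vertex-disjoint subgraphs, each of treewidth at least $r(h)$. By \autoref{cc2} and the choice of $r(h)$, each such subgraph contains the $(p\times p)$-grid, hence $H$, as a minor; being vertex-disjoint subgraphs of $G$ lying in ${\cal M}(H)$, they form a ${\sf v}$-${\cal M}(H)$-packing, so $k\ge h^{\ast}$. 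Unwinding the definition of $h^{\ast}$, this gives $k\ge c\cdot t/\bigl(r(h)^{2}(\log t)^{c'}\bigr)$ for absolute constants $c,c'$, that is,
\[
t\ \le\ \tfrac1c\,k\,r(h)^{2}\,(\log t)^{c'}\ =\ k\cdot h^{O(1)}\cdot(\log t)^{O(1)}.
\]
Since $x\mapsto x/(\log x)^{c'}$ is eventually increasing, this implicit inequality can be unwound to $t\le k\cdot h^{O(1)}\cdot(\log k)^{O(1)}$, which is the desired bound $\tw(G)\le f_{h}(k)$ for a function $f_{h}$ of the announced shape; that $f_{h}$ is then a ceiling for $(\tw,{\cal M}(H),{\sf v})$ is immediate from the definition of a ceiling.

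It remains to dispose of the case $h^{\ast}=0$, i.e.\ $t\le O\bigl(r(h)^{2}\polylog t\bigr)$. Here I would argue directly: since $x/(\log x)^{O(1)}\to\infty$, this inequality already forces $t\le h^{O(1)}$, which is absorbed into $f_{h}$ (as usual one reads $f_{h}$ with a harmless additive constant, or replaces $k$ by $k+1$ and $\log k$ by $\log(k+2)$ in its definition, which also covers $k=0$ — note that when $k=0$ the graph $G$ has no $H$-minor at all, hence bounded treewidth by \autoref{cc2}). The one genuinely technical point is the inversion of $t\le k\,h^{O(1)}(\log t)^{O(1)}$ into $t\le k\,h^{O(1)}(\log k)^{O(1)}$: this is a routine but slightly delicate estimate, it is the source of the surviving $(\log k)^{O(1)}$ factor, and it is the step I expect to require the most care. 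The exponent hidden in $r(h)=h^{O(1)}$, which propagates to the $h^{O(1)}$ factor of $f_{h}$, is governed by $\delta$ and is the single ``magic constant'' of the proof; everything else is bookkeeping.
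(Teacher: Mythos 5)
Your proposal is correct and follows exactly the route the paper itself sketches after the statement of \autoref{chekceil}: choose $r(h)=h^{O(1)}$ so that \autoref{cc2} forces a $(4h\times 4h)$-grid (hence $H$, via \cite[1.5]{robertson1994quickly}) in any graph of treewidth $r(h)$, apply \autoref{cc1} to extract $h^{\ast}$ disjoint such pieces, and invert the resulting implicit inequality in $t=\tw(G)$. The only differences are that you spell out the bookkeeping (the choice of $r(h)$, the $h^{\ast}=0$ and $k=0$ edge cases, and the $\log t \to \log k$ inversion) that the paper leaves implicit; all of these steps are handled correctly.
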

Indeed, according to \autoref{cc1}, every graph of \emph{large} treewidth can be
partitioned into \emph{many} disjoint
subgraphs each with treewidth \emph{large enough} (i.e. polynomial,
according to \autoref{cc2}) to force a \emph{large} grid as a minor,
which in turn contains the desired planar graph.

A function $f \colon \R \to \R$ is said to be {\em superadditive} if $f(x) +
f(y) \leq f(x+y)$ for every pair $x,y$ of positive reals.
The following argument has been first used in \cite{FominST11stre}
(see also \cite{ChekuriC13Larg, Raymond2013edge,
  Chatzidimitriou2015logopt}).

\begin{lemma}\label{connep}
Let $\mathcal{H}$ be a family of  connected graphs. If $f$ is a superadditive ceiling for
$(\tw, \mathcal{H}, \sv)$ then $\mathcal{H}$ has the ${\sf v}$-\ep\ 
property with gap~$k \mapsto 6\cdot f(k) \log(k+1).$
\end{lemma}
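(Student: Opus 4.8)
The plan is to prove \autoref{connep} by a recursive balanced-separator argument, using the ceiling hypothesis to control treewidth along the recursion and the superadditivity of $f$ to make the bounds add up. Concretely, set $k=\xpack_{\mathcal H}(G)$. We want to build an $\sv$-$\mathcal H$-cover of size at most $6\,f(k)\log(k+1)$. The key structural fact I would invoke is that a graph of treewidth $t$ has, for every weighting of its vertices, a set of at most $t+1$ vertices whose removal splits the graph into parts each carrying at most half the total weight — a standard consequence of tree decompositions. We will weight each vertex by how much ``packing'' it can locally absorb; since the ceiling gives $\tw(G)\le f(k)$, each separator we extract costs at most $f(k)+1\le 2f(k)$ vertices (for $k\ge 1$; the $k=0$ case is trivial).

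**Next I would set up the recursion.** Pick a balanced separator $S$ of $G$ with $|S|\le f(k)+1$, delete it, and recurse on the connected components of $G\setminus S$. Each component $C$ has $\xpack_{\mathcal H}(C)\le k$, and here is where connectedness of the members of $\mathcal H$ matters: because every graph in $\mathcal H$ is connected, any copy of a guest graph in $G\setminus S$ lies entirely within a single component, so the packing numbers of the components sum to at most $k$ — that is, $\sum_i \xpack_{\mathcal H}(C_i)\le \xpack_{\mathcal H}(G\setminus S)\le k$. Superadditivity of $f$ then yields $\sum_i f\bigl(\xpack_{\mathcal H}(C_i)\bigr)\le f(k)$, which is exactly the invariant that keeps the separator costs from blowing up as the recursion branches. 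The ``balanced'' choice of $S$ ensures that along any root-to-leaf branch of the recursion tree the packing number drops by a constant factor — roughly halving — so the recursion has depth $O(\log(k+1))$, and summing $2f(\cdot)$ over all nodes at a fixed level gives at most $2f(k)$ by superadditivity, for a total of $O(f(k)\log(k+1))$. A careful bookkeeping of the constants — choosing the separator balanced with respect to the right weight function and tracking the $+1$'s — delivers the stated bound $6\,f(k)\log(k+1)$.

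**The union of all the separators extracted over the whole recursion is the cover**, since at the leaves the components have been reduced to graphs with no copy of any member of $\mathcal H$ at all (the recursion bottoms out when the local packing number is $0$, i.e.\ the component is already $\mathcal H$-free), and every vertex of $G$ is either in some separator or in such a leaf component. Hence $G$ minus this union contains no subgraph isomorphic to a member of $\mathcal H$, so it is an $\sv$-$\mathcal H$-cover of the required size.

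**The main obstacle** I expect is getting the constant and the logarithm base exactly right rather than merely up to constants: one has to be precise about (a) how ``balanced'' the separator is — the weight function should probably be the packing number restricted to subsets, or a proxy for it, and balancedness should be stated so that the packing number strictly decreases enough at each step — and (b) how the depth-$O(\log(k+1))$ bound interacts with the per-level cost $2f(k)$ to produce $6f(k)\log(k+1)$ rather than, say, $2f(k)\log_2(k+1)$ times some slack. The conceptual content, though, is entirely in the interplay between the superadditive ceiling (which bounds treewidth, hence separator size) and the connectedness of $\mathcal H$ (which makes packing numbers additive over components, hence lets superadditivity of $f$ be applied at each branching). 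I would not expect any difficulty beyond this accounting; the argument is the one alluded to from \cite{FominST11stre}.
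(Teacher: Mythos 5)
Your overall strategy --- use the ceiling to get $\tw(G)\le f(k)$, extract a separator of order at most $\tw(G)+1$ that is balanced with respect to the packing, recurse, and let superadditivity of $f$ together with connectedness of the guests (which makes packing numbers additive across the separated parts) close the accounting --- is exactly the paper's. The gap is in the step you treat as standard. The weighted balanced-separator lemma you invoke applies to \emph{vertex weightings}, i.e.\ to set functions that are additive over $V(G)$; the packing number $\vpack_{\mathcal H}$ is not such a function, and no choice of vertex weights reduces it to one. If, say, you weight the vertices of a fixed maximum packing $M_1,\dots,M_k$ (or one representative per $M_i$), a separator balanced with respect to that weighting only controls how many of \emph{those particular} subgraphs land on each side; the induced subgraph on one side may still contain a completely different $\mathcal H$-packing of size $k$. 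In that case the recursive call does not decrease the packing number, the recursion depth is no longer $O(\log(k+1))$, and the bound degenerates to something like $|G|\cdot f(k)$. You flag this (``the weight function should probably be the packing number restricted to subsets, or a proxy for it'') but dismiss it as constant-chasing; it is in fact the main technical content of the proof.

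What is needed is \autoref{sep}: a separation $(A,B)$ of order at most $\tw(G)+1$ with $\vpack_{\mathcal H}(G[A\setminus B])\le 2k/3$ and $\vpack_{\mathcal H}(G[B\setminus A])\le 2k/3$, where the bound is on the genuine packing number of each side. The paper obtains this by rooting a nice tree decomposition and taking a lowest node $t$ with $\vpack_{\mathcal H}(G_t^-)>2k/3$, where $G_t^-$ is the part of $G$ strictly below the bag $X_t$; the argument works because $\vpack_{\mathcal H}(G_t^-)$ increases by at most $1$ at a forget node and is \emph{additive} at a join node (again by connectedness of the guests). This Lipschitz-plus-additivity structure is what replaces vertex-additivity in the standard balanced-separator argument, and it is not implied by it. Once \autoref{sep} is available, your level-by-level bookkeeping (cost at most $f(k)+1$ per separation, packing numbers summing to at most $k$ and each dropping to at most $2k/3$, hence logarithmic depth) is the paper's induction in disguise and does deliver the stated $6f(k)\log(k+1)$.
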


\begin{proof}
Let us show the following for every integer $k$: for every graph $G$, if $\vpack_{\mathcal{H}}(G) = k$ then $\vcover_{\mathcal{H}} \leq 6f(k) \log(k+1).$
The proof is by induction on $k.$
The base case $k=0$ is trivial. Let $k>0$, and let us assume that the above statement holds for every non-negative integer~$k' <k$ (induction hypothesis).

Let $G$ be a graph such that~$\vpack_{\mathcal{H}}(G) = k.$
A \emph{separation} of $G$ of order $p\in \N$ is a pair $(A,B)$ of subsets of
$V(G)$ such that $G$ has no edge with the one endpoint in $A\setminus B$ and the
other one in $B\setminus A$, and $|A\cap B|=p$.
We will rely on the following claim.

\begin{claim}\label{sep}
  There is a separation $(A,B)$ of order at most $\tw(G)+1$ of $G$ such that

  \begin{align*}
    \vpack_{\mathcal{H}}(G[A\setminus B]) &\leq 2k/3\quad \text{and}\\
    \vpack_{\mathcal{H}}(G[B\setminus A]) &\leq 2k/3.
  \end{align*}
  
\end{claim}
\begin{proof}
We consider a special type of tree decomposition 
 called {\em nice tree decomposition}.
    A triple $(T,r,\{X_t\}_{t \in V(T)})$ is said to be a \emph{nice} tree
  decomposition of a graph $G$ if $(T, \{X_t\}_{t \in V(T)})$ is a
  tree-decomposition where the following holds:
  \begin{enumerate}
  \item every vertex of $T$ has degree at most 3;
  \item $(T,r)$ is a rooted tree and the bag of the root $r$ is empty ($X_r = \emptyset$);
  \item every vertex $t$ of $T$ is
    \begin{itemize}
    \item either a \emph{base node}, i.e.~a leaf of $T$ whose
      bag is empty ($X_t = \emptyset$) and different from the root;

    \item or an \emph{introduce node}, i.e.~a vertex with only one
      child $t'$ such that $X_{t} = X_{t'} \cup \{u\}$ for some $u \in
      V(G)$;

    \item or a \emph{forget node}, i.e.~a vertex with only one
      child $t'$ such that $X_{t} = X_{t'} \setminus \{u\}$ for some $u \in
      X_{t'}$;

    \item or a \emph{join node}, i.e.~a vertex with two children $t_1$ and
      $t_2$ such that $X_t = X_{t_1} = X_{t_2}.$
    \end{itemize}
  \end{enumerate}
It is known that every graph $G$
has a nice tree decomposition with width  
 $\tw(G)$~\cite{Kloks94}.
 We therefore can assume that 
 $(T,r, (X_t)_{t \in V(T)})$ is a nice tree decomposition  
 of $G$ of optimal width. For each $t \in V(T)$, we define
 \[G_t = G\left [ \bigcup_{s \in
      \desc_{(T,r)}(t)} X_{s}\right ]\mbox{ \quad and\quad }G_t^- = G_t \setminus X_t.\]
      
Let $t$ be a vertex of $T$ at minimal distance from a leaf subject to
the requirement $\vpack_{\mathcal{H}}(G_t^-) > 2k/3.$ Such a vertex
exists, as~$\vpack_{\mathcal{H}}(G_r^-) = \vpack_{\mathcal{H}}(G_r) = k.$
Observe that $t$ is either a forget node, or a join node. Indeed, for
every base node $u$ we have $\vpack_{\mathcal{H}}(G_{u}^-) =
0.$ Moreover, every introduce node $u$ with child $v$ satisfies
$\vpack_{\mathcal{H}}(G_{u}^-) =\vpack_{\mathcal{H}}(G_{v}^-)$,
since~$G_u^- = G_v^-.$\smallskip

\noindent \textit{First case:} $t$ is a forget node with child $u.$ We set $A = V(G_u)$ and $B = V(G) \setminus
V(G_u^-)$. Observe that $(A,B)$ is a separation and that we have $A \cap B = X_u$, therefore the order of $(A,B)$ is at most $\tw(G)+1$. 
If $k  = 1$, then  $\vpack_{\mathcal{H}}(G[A\setminus B]) = \vpack_{\mathcal{H}}(G_u^-) = 0$ (by definition of $t$), whereas the fact that $\vpack_{\mathcal{H}}(G[A]) = \vpack_{\mathcal{H}}(G)$ implies $\vpack_{\mathcal{H}}(G[B\setminus A]) = 0 \leq 2k/3$. When $k\geq 2$, we have the following inequalities:
\begin{align*}
 \vpack_{\mathcal{H}}(G[A\setminus B]) &  = \vpack_{\mathcal{H}}(G_{u}^-) \\
 &\geq \vpack_{\mathcal{H}}(G_{t}^-) -1 \quad \text{(as
                                                                    $t$
                                                                    is
                                                                    a
                                                                    forget
                                                                    node)}\\
  & >  \frac{2k}{3} -1\\
  & \geq \frac{k}{3}.
\end{align*}
When $k=2$, the last inequality follows from the fact that $\vpack_{\mathcal{H}}(G[A\setminus B])$ is an integer.
Notice that we always have \[\vpack_{\mathcal{H}}(G[A\setminus B]) + \vpack_{\mathcal{H}}(G[B\setminus A]) \leq k.\] Together with the above inequality, this implies that $\vpack_{\mathcal{H}}(G[B\setminus A]) \leq 2k/3$, wheras it follows from the definition of $t$ that $\vpack_{\mathcal{H}}(G[A\setminus B]) \leq 2k/3$.

\smallskip

\noindent \textit{Second case:} $t$ is a join node with children $u_1,
u_2.$ We set $A  = V(G_{u_i})$ and $B = V(G) \setminus
V(G_{u_i}^-)$, where $u_i$ is a child of $t$ such that
$\vpack_{\mathcal{H}}(G_{u_i}^-) \geq k/3.$ Such child exists because
$\vpack_{\mathcal{H}}(G_{t}^-) = \vpack_{\mathcal{H}}(G_{u_1}^-) +
\vpack_{\mathcal{H}}(G_{u_2}^-)$ (as $t$ is a join node) and
$\vpack_{\mathcal{H}}(G_{t}^-)>2k/3$, by definition of~$t$.
Here again, $(A,B)$ is a separation and its order is at most $\tw(G)+1$ given that $A \cap B = X_{u_i}$.
The inequality $\vpack_{\mathcal{H}}(G[A\setminus B]) \leq 2k/3$
follows from the definition of~$t$ and the choice of $i$ ensures that $\vpack_{\mathcal{H}}(G[A\setminus B]) \geq k/3$, hence $\vpack_{\mathcal{H}}(G[B \setminus A]) \leq 2k/3$, as above.
\end{proof}

Observe that  $\tw(G) \leq  f(k)$, by definition of $f.$
According to \autoref{sep}, there is a separation $(A,B)$ of order
at most $\tw(G)+1$ in~$G$ such that $k_A, k_B\leq \floor{2k/3}$, where $k_A = \vpack_\mathcal{H}(G[A\setminus B])$ and $k_B =
\vpack_\mathcal{H}(G[B\setminus A]).$
Moreover, since $(A,B)$ is a separation, there is no connected graph of $G\setminus (A\cap B)$ that have vertices in both $G[A \setminus B]$ and $G[B \setminus A]$.
Therefore, given that every graph of $\mathcal{H}$ is connected, we can construct a $\sv$-$\mathcal{H}$-cover of $G \setminus (A\cap B)$ by taking the union of a $\sv$-$\mathcal{H}$-cover of $G[A \setminus B]$ and of one of $G[B \setminus A]$.
In other words, we have
\begin{align*}
  \vcover_\mathcal{H}(G) &\leq \vcover_\mathcal{H}(G[A\setminus B]) +
   \vcover_\mathcal{H}(G[B\setminus A]) + |A \cap B|\\
                             & \leq\vcover_\mathcal{H}(G[A\setminus B]) +
  \vcover_\mathcal{H}(G[B\setminus A]) + f(k)+1\\
& \leq 6f(k_A)\log(k_A+1)+ 6f(k_B)\log(k_B+1) + f(k)+1.
\end{align*}
The last inequality above is obtained by applying the induction hypothesis on both $G[A \setminus B]$ and $G[B \setminus A]$.
Notice that in the case where $k=1$, we get $k_A = k_B = 0$ and we
have $\vcover_\mathcal{H}(G) \leq  f(k) \leq 6\cdot f(k) \log
(k+1).$ Therefore we now assume $k\geq 2.$
We can then deduce that $\frac{2k}{3} +1\leq \frac{7}{9}(k+1)$.

We then have:
\begin{align*}
\vcover_\mathcal{H}(G) & \leq 6\cdot (f(k_A) + f(k_B)) \log \left
                        (\frac{2k}{3} + 1 \right) + f(k)+1\\
                          & \leq 6\cdot f(k)\log \left (\frac{7(k+1)}{9} \right) + f(k)+1&\text{\hspace{-1.5cm}(superadditivity of $f$)}\\
                         & \leq 6\cdot f(k) \log (k+1) - 6\cdot
                           \log (9/7) f(k) + 2f(k)\\
                          & \leq 6\cdot  f(k)\log (k+1).  
  \end{align*}
The second inequality also requires that $f$ is monotone, which is the case because it is superadditive and it never takes negative values.
  \let\qed\relax
\end{proof}
\vspace{-9mm}
\hfill\mbox{$\square$}\\

From the fact that the function of \autoref{chekceil} is superadditive, we get the following consequence of \autoref{connep}.
\begin{corollary}[see also \cite{ChekuriC13Larg}
  and \cite{ChekuriC13poly}]\label{epconn}
There is a function $f_h(k) = h^{O(1)}\cdot k \polylog(k)$ such that, for every connected planar graph $H$ with $h$ edges, the 
class $\mathcal{M}(H)$ has the \ep\ property with gap~$f_h$.
\end{corollary}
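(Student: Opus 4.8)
The plan is to combine the three ingredients already assembled in this subsection. First I would invoke \autoref{chekceil} to obtain a ceiling $f_h(k)=h^{O(1)}\cdot k\cdot(\log k)^{O(1)}$ for the triple $(\tw,{\cal M}(H),{\sf v})$, valid for every planar graph $H$ on $h$ edges. Then I would observe that $\mathcal{M}(H)$ is a class of connected graphs whenever $H$ is connected: a graph containing $H$ as a minor need not itself be connected, but every minimal such witness is, and more to the point the relevant family here should be taken to be the connected graphs that contain $H$ as a minor (equivalently, one restricts attention to connected members of $\mathcal{M}(H)$, which suffices for the packing/covering quantities since a minor model of a connected $H$ lives in a single connected component). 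So the hypothesis of \autoref{connep} that $\mathcal{H}$ consist of connected graphs is met.

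Next I would check the superadditivity requirement of \autoref{connep}. The function $f_h(k)=h^{O(1)}\cdot k\cdot(\log k)^{O(1)}$ is, up to adjusting constants, superadditive: $k\mapsto k$ is superadditive, $k\mapsto(\log k)^{O(1)}$ is nondecreasing and, on the positive reals, one has $g(x)+g(y)\le g(x+y)$ for $g$ of the form $x\,(\log x)^{c}$ after absorbing lower-order terms into the implied constant (this is the same elementary fact already used implicitly for \autoref{chekceil}). I would state this as a one-line remark rather than grind through it, since the paper already relies on exactly this kind of manipulation. With superadditivity in hand, \autoref{connep} applies and yields that $\mathcal{M}(H)$ has the ${\sf v}$-\ep\ property with gap $k\mapsto 6\cdot f_h(k)\log(k+1)$.

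Finally I would simplify the resulting gap: $6\cdot f_h(k)\log(k+1)=6\cdot h^{O(1)}\cdot k\cdot(\log k)^{O(1)}\cdot\log(k+1)$, and the extra $\log(k+1)$ factor is absorbed into the $(\log k)^{O(1)}=\polylog(k)$ term, so the gap has the form $h^{O(1)}\cdot k\,\polylog(k)$, which is exactly the claimed $f_h$. This completes the argument.

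I do not expect a genuine obstacle here, since the statement is explicitly flagged as a corollary and every component is in place; the only point requiring a modicum of care is the superadditivity bookkeeping (making sure the constants hidden in $h^{O(1)}$ and in the $\polylog$ can be chosen uniformly so that the exact inequality $f(x)+f(y)\le f(x+y)$ holds, not merely an asymptotic one), and the mild conceptual point that one should phrase the result in terms of connected graphs containing $H$ as a minor so that \autoref{connep} is literally applicable. Both are routine.
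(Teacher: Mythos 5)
Your proposal is correct and follows exactly the route the paper intends: combine the ceiling of \autoref{chekceil} with \autoref{connep}, noting superadditivity and absorbing the extra $\log(k+1)$ factor into the $\polylog(k)$ term. The two points you flag as needing care (the superadditivity bookkeeping and restricting attention to connected witnesses so that the connectedness hypothesis of \autoref{connep} is literally met) are precisely the points the paper also treats as routine.
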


Notice that the above proof strongly relies on the fact that $H$ is connected. The non-connected 
case requires some more ideas that are originating
from~\cite{RobertsonS86GMV} (also used for forests in~\cite{FioriniJW12excl}). We
expose them hereafter. We will need the two next lemmas.

\begin{lemma}[\cite{RobertsonS86GMV}]\label{sort}
  Let $q,k$ be two positive integers, let $T$ be a tree and let ${\cal
    A}_1,
  \dots, {\cal A}_q$ be families of subtrees of $T.$ Assume that for
  every $i\in \intv{1}{q}$, there are $kq$ elements of ${\cal A}_i$ that are
  pairwise vertex-disjoint. Then for
  every $i\in \intv{1}{q}$, there are $k$ elements $T_1^i, \dots,
  T_k^i$ of ${\cal A}_i$ such that
\[
  T^1_1, \dots T_k^1,  T^2_1, \dots T_k^2, \dots, T^q_1, \dots T_k^q
\]
are all pairwise vertex-disjoint.
\end{lemma}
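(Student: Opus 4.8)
The plan is to prove \autoref{sort} by induction on $q$. The base case $q=1$ is immediate: the hypothesis already gives $k \cdot 1 = k$ pairwise vertex-disjoint elements of ${\cal A}_1$, which is exactly what is asked. For the inductive step, assume the statement holds for $q-1$ families and consider families ${\cal A}_1, \dots, {\cal A}_q$, each containing $kq$ pairwise vertex-disjoint subtrees.

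First I would pick one of the families, say ${\cal A}_q$, and select from it its guaranteed collection of $kq$ pairwise vertex-disjoint subtrees. The key structural fact about subtrees of a tree is the Helly property: a collection of pairwise intersecting subtrees of a tree has a common vertex. I will use the contrapositive/combinatorial consequence that if a subtree $S$ meets none of a set of $k$ pairwise disjoint subtrees that we ultimately want to keep from ${\cal A}_q$, then removing the "bad" part of the host tree does not disturb $S$. More precisely, the idea is to extract $k$ subtrees $T_1^q, \dots, T_k^q$ from ${\cal A}_q$ that we will keep, and to argue that the remaining families ${\cal A}_1, \dots, {\cal A}_{q-1}$ still each have $k(q-1)$ pairwise vertex-disjoint members that avoid all of $T_1^q, \dots, T_k^q$. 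The mechanism: among the $kq$ disjoint subtrees chosen in ${\cal A}_i$ (for a fixed $i \le q-1$), each subtree $A \in {\cal A}_i$ from this disjoint family can intersect at most \dots\ hmm, this needs the right counting. The clean way is: for each $i \le q-1$, take the $kq$ pairwise disjoint members of ${\cal A}_i$; since the $k$ chosen subtrees $T_1^q,\dots,T_k^q$ (once chosen) are themselves pairwise disjoint, and any single subtree of the disjoint ${\cal A}_i$-family, being connected, ... actually a single $T_j^q$ can meet many members of the disjoint ${\cal A}_i$-family. The correct bound goes the other way: each member of the disjoint ${\cal A}_i$-family meets at most $\dots$ of the $T_j^q$; but we have not yet chosen the $T_j^q$.

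The right approach is therefore the following greedy/recursive selection. I would choose $T_1^q, \dots, T_k^q$ from ${\cal A}_q$ one at a time together with a control on how much of the host tree they occupy, but it is cleaner to choose them all at once from the $kq$ disjoint subtrees in ${\cal A}_q$ and then delete, from the host tree $T$, the vertex sets of these $k$ subtrees together with enough "connecting" vertices so that the remaining forest still fully contains $k(q-1)$ disjoint members of each ${\cal A}_i$. Concretely: from the $kq$ pairwise disjoint subtrees in ${\cal A}_i$, at most $k$ of them can intersect any fixed subtree of $T$ only if that subtree is small; to make this work one picks the $T_j^q$ to be disjoint from "most" of the ${\cal A}_i$-families. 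The cleanest rigorous route, and the one I would write up, is: by the induction hypothesis applied to ${\cal A}_1,\dots,{\cal A}_{q-1}$ with parameter $k$ (noting $kq \ge k(q-1)$, so the hypotheses transfer), obtain $k(q-1)$ pairwise disjoint subtrees, $k$ from each of ${\cal A}_1,\dots,{\cal A}_{q-1}$; call their union of vertex sets $U$, a forest with at most $\dots$ components. Now in ${\cal A}_q$ we have $kq$ pairwise disjoint subtrees; each component of $G[U]$ (a subtree of $T$) can meet at most $\dots$ of them — here one uses that the $kq$ subtrees are pairwise disjoint, so a connected subtree of $T$ with vertex set $U_c$ meets at most $|U_c|$ of them only in the worst case, which is not bounded. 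So instead: a fixed subtree $A \in {\cal A}_q$ (from the disjoint family) meets at most $k(q-1)$ members of $U$'s defining family in total, but more usefully, among the $k(q-1)$ kept subtrees, $A$ being connected and them being pairwise disjoint, $A$ meets at most $\dots$. I will resolve this by the standard argument: each of the $k(q-1)$ already-chosen disjoint subtrees, being connected, can intersect at most one... no. The honest statement is that each of the $kq$ disjoint subtrees in ${\cal A}_q$ that we might keep can be "blocked" by meeting one of the $k(q-1)$ chosen subtrees, but since those are pairwise disjoint and each is connected, a single $A \in {\cal A}_q$ meets at most $k(q-1)$ of them — trivially — but we want: at most $k(q-1) \cdot (\text{something})$ of the $kq$ subtrees in ${\cal A}_q$ are blocked.

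I expect the main obstacle to be exactly this bookkeeping: turning the Helly property of subtrees of a tree into a clean bound showing that after fixing $k(q-1)$ disjoint subtrees (one batch from each of the first $q-1$ families), at most $k(q-1)$ of the $kq$ disjoint subtrees in ${\cal A}_q$ can be blocked — so at least $k$ survive, disjoint from all of them, which completes the induction. The key lemma I would isolate and prove first is: \emph{if $S_1, \dots, S_m$ are pairwise vertex-disjoint subtrees of a tree $T$ and $A$ is any subtree of $T$, then $A$ intersects at most $\dots$} — here the bound we actually need is that $A$ intersects at most $m$ of them (trivial) combined with: contracting each $S_j$ to a point keeps $A$ connected, so $A$ meets a \emph{subtree}'s worth of the $S_j$'s, i.e. the $S_j$ that $A$ meets, once contracted, form a connected subtree of the contracted tree. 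Then, running this with $A$ ranging over the $kq$ disjoint subtrees of ${\cal A}_q$ and $\{S_j\}$ the $k(q-1)$ chosen ones, a counting argument (each $S_j$ lies in at most one $A$ entirely, and partially meets a bounded number) yields that fewer than $kq - k$ of the $A$'s are touched. I would present this counting carefully, as it is the crux; the rest of the induction is then a one-line assembly. Should the bound not come out to exactly $k$ surviving subtrees, the fallback is to strengthen the inductive hypothesis to request $k q'$ disjoint members of each remaining family at stage $q'$ rather than $k$, i.e. prove the more robust statement "if each ${\cal A}_i$ has $kq$ pairwise disjoint members then one can pick $k$ from each, jointly disjoint" by a direct global argument rather than induction on $q$ — deleting families one by one while tracking the $\le kq$ budget — which is the form in which it is cited from \cite{RobertsonS86GMV} and which I would ultimately adopt if the inductive counting proves delicate.
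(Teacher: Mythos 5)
Your write-up identifies the right difficulty but does not overcome it, so as it stands there is no proof here. Every counting step you attempt ends in an ellipsis or in the admission that the quantity ``is not bounded'', and indeed none of those counts can be completed. The root of the problem is the order of selection: if you first fix $k(q-1)$ pairwise disjoint subtrees from ${\cal A}_1,\dots,{\cal A}_{q-1}$ (by induction) and only afterwards look for survivors in ${\cal A}_q$, the damage cannot be controlled --- a single connected subtree (in the extreme case, one with vertex set all of $V(T)$) can meet \emph{all} $kq$ pairwise disjoint members of ${\cal A}_q$, so the claim that at most $k(q-1)$ of them are blocked is simply false. Neither the Helly property (which the paper uses for \autoref{treeep}, a different statement) nor your contraction observation helps: the members of a disjoint family met by a fixed connected subtree do form a connected set after contraction, but that set can be arbitrarily large, so no bound comes out in either direction.

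The missing idea is a selection rule that limits, per committed subtree, the number of candidates destroyed \emph{in each family}. Root $T$ at a vertex $r$ and, for a subtree $S$, let $\mathrm{top}(S)$ be the vertex of $S$ nearest to $r$; since every vertex of $S$ is a descendant of $\mathrm{top}(S)$, two subtrees $S,S'$ meet if and only if $\mathrm{top}(S)\in V(S')$ or $\mathrm{top}(S')\in V(S)$. Fix for each $i$ a set ${\cal B}_i\subseteq{\cal A}_i$ of $kq$ pairwise disjoint subtrees and select greedily: while some family has fewer than $k$ selected members, choose, among all candidates of all such ``needy'' families that are still disjoint from everything selected, a subtree $B$ whose top is \emph{farthest} from $r$. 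If a still-intact candidate $C$ of a needy family meets $B$, then either $\mathrm{top}(B)\in V(C)$, or $\mathrm{top}(C)\in V(B)$; in the latter case $\mathrm{top}(C)$ is a descendant of $\mathrm{top}(B)$ whose depth is at most that of $\mathrm{top}(B)$ (by the choice of $B$), forcing $\mathrm{top}(C)=\mathrm{top}(B)$. So every candidate destroyed by this step contains the single vertex $\mathrm{top}(B)$, and since the members of each ${\cal B}_j$ are pairwise disjoint, each selection destroys at most one candidate per family. When a family still needs a member, at most $(k-1)+k(q-1)=kq-1$ selections have occurred, so at least one of its $kq$ candidates survives and the greedy step never fails. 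This deepest-top device (due to Robertson and Seymour) is exactly what your bookkeeping is missing; without it, or an equivalent ordering of the subtrees, the induction on $q$ cannot be closed.
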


The next lemma is the \ep\ property of subtrees of a tree. It
can be obtained from the fact that subtrees of a tree have the Helly property.
\begin{lemma}[see \cite{gyarfas1970helly}]\label{treeep}
  Let $T$ be a tree and let $\mathcal{A}$ be a collection of subtrees
  of $T.$ For every positive integer $k$, either $T$ has (at least) $k$
  vertex disjoint subtrees  that belong to $\mathcal{A}$, or $T$ has a
  subset $X$ of less than $k$ vertices such that no subtree of
  $T\setminus X$ belongs to $\mathcal{A}.$ 
\end{lemma}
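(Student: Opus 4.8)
The plan is to prove the statement by induction on $k$, using a rooting of $T$ to identify, at each step, a single vertex that meets every member of $\mathcal{A}$ lying in a suitable part of $T$. The base case $k=1$ is immediate: if $\mathcal{A}=\emptyset$ then $X=\emptyset$ works, and otherwise any single member of $\mathcal{A}$ forms a family consisting of one pairwise vertex-disjoint subtree.

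For the inductive step I assume $k\ge 2$ and $\mathcal{A}\neq\emptyset$. Fix a root of $T$ and, for each $A\in\mathcal{A}$, let $\mathrm{top}(A)$ be the vertex of $A$ closest to the root; since $A$ is connected, $A$ is contained in the subtree $T_{\mathrm{top}(A)}$ formed by $\mathrm{top}(A)$ and its descendants. Choose $A^{*}\in\mathcal{A}$ so that the depth of $v:=\mathrm{top}(A^{*})$ is as large as possible. The key point is the claim that every $B\in\mathcal{A}$ with $V(B)\cap V(A^{*})\neq\emptyset$ satisfies $v\in V(B)$. Indeed, such a $B$ contains some $w\in V(A^{*})\subseteq V(T_{v})$, while $\mathrm{top}(B)$ has depth at most that of $v$ by the choice of $A^{*}$, so either $\mathrm{top}(B)=v$ or $\mathrm{top}(B)\notin V(T_{v})$; in the latter case the $\mathrm{top}(B)$--$w$ path of $T$ must pass through $v$ (the only vertex of $T_{v}$ with a neighbour outside $T_{v}$), and since $B$ is connected it contains that path, hence $v$.

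Now I apply the induction hypothesis, with parameter $k-1$, to $T$ together with the subfamily $\mathcal{A}'=\{B\in\mathcal{A} : v\notin V(B)\}$; note each member of $\mathcal{A}'$ still induces a subtree of $T$, contained in one component of $T\setminus\{v\}$. If $\mathcal{A}'$ contains $k-1$ pairwise vertex-disjoint subtrees, then adding $A^{*}$ — which by the claim is vertex-disjoint from every member of $\mathcal{A}'$ — yields $k$ pairwise vertex-disjoint members of $\mathcal{A}$. Otherwise $\mathcal{A}'$ admits a set $X'$ of fewer than $k-1$ vertices meeting all of its members, and then $X=X'\cup\{v\}$ has fewer than $k$ vertices and, by the claim and the definition of $\mathcal{A}'$, meets every member of $\mathcal{A}$; thus no subtree of $T\setminus X$ belongs to $\mathcal{A}$.

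I do not expect a genuine obstacle; the only step requiring care is the claim, which is exactly where the tree structure enters and is a manifestation of the Helly property of subtrees of a tree. Alternatively, one can avoid the induction altogether: the intersection graph of $\mathcal{A}$ is chordal, hence perfect, so it has a clique cover whose size equals its independence number $\nu$ (the maximum size of a pairwise vertex-disjoint subfamily of $\mathcal{A}$), and applying the Helly property to each clique of this cover produces a set of $\nu$ vertices of $T$ meeting all members of $\mathcal{A}$, which has size less than $k$ whenever $T$ has no $k$ pairwise vertex-disjoint subtrees from $\mathcal{A}$.
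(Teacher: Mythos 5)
Your proof is correct. Note that the paper does not actually prove this lemma: it only cites Gy\'arf\'as--Lehel and remarks that the statement ``can be obtained from the fact that subtrees of a tree have the Helly property,'' so there is no in-paper argument to compare against. Your main argument is the standard elementary one: root $T$, pick a member $A^{*}$ whose topmost vertex $v$ is deepest, observe that every member meeting $A^{*}$ must contain $v$ (since any path from outside $T_v$ into $T_v$ passes through $v$), and recurse on the members avoiding $v$ with parameter $k-1$. This is complete and airtight; the only cosmetic gaps are the trivial case $\mathcal{A}'=\emptyset$ (or $\mathcal{A}=\emptyset$ for $k\ge 2$), which your induction handles anyway since $X'=\emptyset$ then works. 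Your alternative argument --- the intersection graph of subtrees of a tree is chordal, hence its clique cover number equals its independence number, and the Helly property converts each clique of the cover into a single hitting vertex of $T$ --- is exactly the route the paper's one-line hint gestures at; it even yields the sharper conclusion that the cover has size exactly the maximum packing size $\nu$ rather than merely $k-1$. Either route is a valid replacement for the citation.
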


We are now ready to deal with disconnected patterns.
\begin{lemma}[\cite{RobertsonS86GMV}]\label{ep-tree}
Let $w$ be a positive integer and let $H$ be a graph on $q$ connected
components. $\mathcal{M}(H)$ has the \sv-\ep\ property on the class of
graphs of treewidth at most~$w$ with gap $k \mapsto (w-1)(kq-1).$
\end{lemma}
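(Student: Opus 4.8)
The plan is to apply \autoref{ceil} with parameter ${\bf p}=\tw$, host restricted by treewidth, and then reduce the treewidth-bounded case to the \ep\ property for subtrees of a tree (\autoref{treeep}), using \autoref{sort} to handle the several connected components of $H$ simultaneously. So fix $w$ and a graph $H$ with connected components $H_1,\dots,H_q$, and let $G$ be a graph with $\tw(G)\leq w$; set $k=\vpack_{\mathcal{M}(H)}(G)$. Fix a tree decomposition $(T,\mathcal{X})$ of $G$ of width at most $w-1$. The first step is to associate to each $H_i$ the family $\mathcal{A}_i$ of subtrees $T'$ of $T$ such that $G$ restricted to $\bigcup_{t\in V(T')}X_t$ contains $H_i$ as a minor and $T'$ is \emph{minimal} with this property; the key point (standard for minors and tree decompositions, since $H_i$ is connected) is that a model of $H_i$ in $G$ lives inside the bags indexed by a subtree of $T$, so that a $\sv$-$\mathcal{M}(H)$-packing of size $m$ in $G$ yields, after splitting each packed subgraph into its $q$ components, $m$ pairwise vertex-disjoint members of $\mathcal{A}_i$ for each $i$ (the disjointness in $G$ forces disjointness of the corresponding bag-index subtrees).

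The second step is the dichotomy. Suppose first that for every $i\in\intv{1}{q}$ the family $\mathcal{A}_i$ contains $kq$ pairwise vertex-disjoint subtrees of $T$. Then \autoref{sort} produces, for each $i$, subtrees $T^i_1,\dots,T^i_k\in\mathcal{A}_i$ such that all the $T^i_j$ ($i\in\intv{1}{q}$, $j\in\intv{1}{k}$) are pairwise vertex-disjoint. For each fixed $j$, the $q$ subtrees $T^1_j,\dots,T^q_j$ index disjoint vertex sets of $G$, each of which contains the corresponding component $H_i$ as a minor; their union therefore contains $H$ as a minor, and over $j\in\intv{1}{k}$ these $k$ subgraphs are vertex-disjoint, giving a $\sv$-$\mathcal{M}(H)$-packing of size $k+1$ once we recall $\vpack=k$ is the maximum — contradiction (or rather, this case cannot occur, so we may assume it fails). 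Hence there is some $i_0\in\intv{1}{q}$ for which $\mathcal{A}_{i_0}$ has no $kq$ pairwise vertex-disjoint members. By \autoref{treeep} applied to $T$ and $\mathcal{A}_{i_0}$ with the integer $kq$, there is a set $X\subseteq V(T)$ with $|X|<kq$ (so $|X|\leq kq-1$) such that no subtree of $T\setminus X$ belongs to $\mathcal{A}_{i_0}$.

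The third step converts $X$ into a $\sv$-$\mathcal{M}(H)$-cover of $G$. Take $C=\bigcup_{t\in X}X_t$; then $|C|\leq |X|\cdot w\leq (kq-1)w$. Wait — the claimed gap is $(w-1)(kq-1)$, so we must be slightly more careful: each bag has at most $w$ vertices, but one should instead argue directly that removing from $G$ the vertices in the bags indexed by $X$ destroys all $\mathcal{M}(H)$-subgraphs, and bound the count by $(w-1)(kq-1)$ — this requires using that the width is $w-1$, i.e.\ bags have size at most $w$, together with a small bookkeeping improvement (for instance, the root-bag convention or the fact that one may assume each bag has at most $w-1$ relevant vertices after cleaning); I expect the intended bound uses $|X_t|\le w$ and a factor-saving observation, or simply that the statement tolerates replacing $w$ by $w-1$ after a trivial normalization of the decomposition. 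Granting that, the claim is: $G\setminus C$ contains no member of $\mathcal{M}(H)$, because such a subgraph would contain $H$ hence $H_{i_0}$ as a minor, and a model of $H_{i_0}$ in $G\setminus C$ would be supported on bags indexed by a subtree of $T\setminus X$, contradicting the choice of $X$ via \autoref{treeep}. Thus $\vcover_{\mathcal{M}(H)}(G)\leq (w-1)(kq-1)$, which is exactly the asserted gap evaluated at $k=\vpack_{\mathcal{M}(H)}(G)$.

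The main obstacle is the first step: making precise and correct the claim that a minor model of a \emph{connected} graph in $G$ occupies bags indexed by a connected subtree of $T$, and consequently that vertex-disjoint packed copies in $G$ give vertex-disjoint subtrees in $T$ — this is where connectedness of each $H_i$ is essential and where one must be careful that the branch sets of the model, not merely the vertex set, line up with a subtree. Everything after that is a clean combination of \autoref{sort} and \autoref{treeep}, with only the cosmetic $w$ versus $w-1$ accounting to tidy up.
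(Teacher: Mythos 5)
Your route is the same as the paper's (families of subtrees of the decomposition tree for each component $H_i$, then \autoref{sort} plus \autoref{treeep}, then lifting the hitting set of tree nodes to the bags), but the packing step does not work as written. You take $\mathcal{A}_i$ to be the \emph{minimal} subtrees $T'$ such that $G\bigl[\bigcup_{t\in V(T')}X_t\bigr]$ contains $H_i$ as a minor, and you then claim that vertex-disjoint subtrees of $T$ ``index disjoint vertex sets of $G$''. That is false: the connectivity condition of tree decompositions says the nodes whose bags contain a fixed vertex $v$ form a subtree, but that subtree need not lie inside either of your minimal subtrees, so $v$ can appear in bags of two disjoint subtrees (take $T$ a path $a\,b\,c$ with $v\in X_a\cap X_b\cap X_c$ and the subtrees $\{a\}$ and $\{c\}$; with $H_i=K_2$ and $G$ the path $u\,v\,w$ this already defeats your construction). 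Hence the $q$ models you select for a fixed $j$, and the $k$ copies of $H$ across distinct $j$, may share vertices. The paper avoids this by working with $T(F)$, the subgraph of $T$ induced by \emph{all} nodes whose bags meet a given model $F$: if $T(F)$ and $T(F')$ are disjoint then so are $F$ and $F'$, since a common vertex would place every node carrying it in both. (Your step-one assertion that disjointness of the models forces disjointness of the subtrees is the converse implication; it is also false, but you never actually use it.) The cover half of your argument is fine: a model of $H_{i_0}$ surviving in $G$ minus the bags over $X$ would yield a member of $\mathcal{A}_{i_0}$ inside $T\setminus X$, contradicting \autoref{treeep}.

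There is also a logical slip in your first case. Having fixed $k=\vpack_{\mathcal{M}(H)}(G)$, producing a packing of size $k$ is not a contradiction, so you cannot conclude that the case ``every $\mathcal{A}_i$ has $kq$ pairwise disjoint members'' is impossible; the ``$k+1$'' you invoke does not come out of the construction. The clean fix is the paper's framing: for an arbitrary positive integer $k$, prove the dichotomy ``either $\vpack_{\mathcal{M}(H)}(G)\geq k$ or $\vcover_{\mathcal{M}(H)}(G)\leq(w-1)(kq-1)$'', in which the packing case is simply the first alternative and needs no refutation. (Alternatively, run your argument with $(k+1)q$ in place of $kq$, at the cost of a slightly larger bound.) Your hesitation over the $w$ versus $w-1$ bookkeeping is reasonable but cosmetic; the paper itself is loose on exactly that point.
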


\begin{proof}
Let $k$ be a positive integer. We want to show that either
$\vpack_{\mathcal{M}(H)}(G) \geq k$ or $\vcover_{\mathcal{M}(H)}(G)
\leq (w-1)(kq-1).$
Let $H_1, \dots, H_q$ be the connected components of~$H.$ Let $(T,
\mathcal{X})$ be a tree-decomposition of~$G$ of width~$w.$
For every subgraph $F$ of $G$, we denote by $T(F)$ the
subgraph of $T$ induced by the bags containing vertices of~$F.$ Notice
that $T(F)$ is connected if $F$ is connected.

For every $i\in \intv{1}{q}$, we let $\mathcal{H}_i$ be the class of
subgraphs of $G$ that are isomorphic to a graph in $\mathcal{M}(H_i)$
and we consider the class $\mathcal{T}_i =
\{T(F), F\in \mathcal{H}_i\}.$

If for every $i\in \intv{1}{q}$, $\mathcal{T}_i$ contains $kq$
vertex-disjoint trees, then according to~\autoref{sort} there is a collection
$\{T_i^j\}_{i \in \intv{1}{q},\ j \in
  \intv{1}{k}}$ of pairwise vertex-disjoint trees, with $T_i^j \in \mathcal{T}_i$ for every $i \in \intv{1}{q}$ and every $j \in
  \intv{1}{k}$.
Observe that for every two subgraphs $F,F'$ of $G$,
if $T(F)$ and $T(F')$ are vertex-disjoint, then so are $F$ and $F'.$
Therefore $G$ has a collection $\{F_i^j\}_{i \in \intv{1}{q},\ j \in
  \intv{1}{k}}$ of pairwise vertex-disjoint subgraphs such that $F_i^j$ is isomorphic to an element of
$\mathcal{H}_i$, for every $i \in \intv{1}{q}$ and $j \in
  \intv{1}{k}.$ Consequently, for every $j\in \intv{1}{k}$, $\bigcup_{i=1}^q F_i^j$ is a subgraph of $G$ containing a graph isomorphic to a member of $\mathcal{M}(H)$, and these subgraphs are vertex-disjoint for distinct values of~$j$.
This proves that in this case, $\vpack_{\mathcal{M}(H)}(G) \geq k.$

We therefore now assume that the above condition does not hold, namely
there is an index $i\in \intv{1}{q}$ such that $\mathcal{T}_i$ contains
less than $kq$ vertex-disjoint trees. \autoref{treeep} implies the
existence of a subset $X$ with $|X| \leq kq-1$ such that $T\setminus X$ is
free from subtrees isomorphic to a member of $\mathcal{T}_i.$
Let $Y$ denote the union of the bags indexed by vertices
in~$X.$ Observe that $|Y| \leq (w-1) |X| \leq (w-1)(kq-1).$
The choice of $Y$ ensures that $G\setminus Y$ has no subgraph isomorphic to a member of~$\mathcal{H}_i.$
Hence $\vcover_{\mathcal{M}(H_i)} \leq (w-1)(kq-1).$ We deduce $\vcover_{\mathcal{M}(H)} \leq (w-1)(kq-1).$
\end{proof}

\begin{corollary}\label{epdisc}
For every  planar graph $H$ with $h$ edges and $q$ connected components, the 
class  $\mathcal{M}(H)$ has the \ep\ 
property with gap~$k \mapsto q\cdot h^{O(1)}\cdot k^2\cdot \polylog(k).$
\end{corollary}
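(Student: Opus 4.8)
The plan is to combine the two ceilings/EP results already established in this section via the master theorem \autoref{ceil}. Concretely, fix a planar graph $H$ with $h$ edges and $q$ connected components. I would apply \autoref{ceil} with $\mathcal{H} = \mathcal{M}(H)$, with the parameter ${\bf p} = \tw$, with ${\sf x} = {\sf v}$, with the ceiling $f = f_h$ coming from \autoref{chekceil} (so $f_h(k) = h^{O(1)}\cdot k\cdot(\log k)^{O(1)}$, which is indeed a ceiling for $(\tw,\mathcal{M}(H),{\sf v})$ since $H$ is planar), and with the family of gaps $h_r$ coming from \autoref{ep-tree}, namely $h_r\colon k \mapsto (r-1)(kq-1)$, which witnesses that $\mathcal{M}(H)$ has the ${\sf v}$-EP property on $\mathcal{G}_{\tw \leq r}$ for every $r$. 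Both hypotheses \textbf{A} and \textbf{B} of \autoref{ceil} are then met, so \autoref{ceil} yields that $\mathcal{M}(H)$ has the ${\sf v}$-EP property with gap $k \mapsto h_{f_h(k)}(k)$.

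It then remains to estimate $h_{f_h(k)}(k) = (f_h(k) - 1)(kq - 1)$. Since $f_h(k) = h^{O(1)}\cdot k\cdot\polylog(k)$ and $kq - 1 \leq qk$, the product is at most $q\cdot h^{O(1)}\cdot k^2\cdot\polylog(k)$, which is exactly the claimed gap. This is a routine multiplication of the two bounds and presents no real difficulty; one only has to note that the polylog factors multiply together into another polylog factor and that $h^{O(1)}$ absorbs any constant-power blow-up.

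I do not anticipate a genuine obstacle here, since all the heavy lifting (the Chekuri--Chuzhoy partitioning and grid results behind \autoref{chekceil}, and the Robertson--Seymour subtree argument behind \autoref{ep-tree}) has already been done earlier in the excerpt. The one point requiring a small remark is that, unlike the connected case treated in \coref{epconn} via \autoref{connep}, here we cannot use the superadditive-ceiling-plus-induction machinery because $H$ is disconnected; instead the bounded-treewidth EP property for disconnected patterns (\autoref{ep-tree}) is fed directly into \autoref{ceil}. The mild cost of this is that the final gap is quadratic in $k$ rather than quasi-linear, which is why the statement has a $k^2$ factor where \coref{epconn} had only $k\,\polylog(k)$.

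Thus the proof is essentially the observation that \autoref{ceil}, instantiated with \autoref{chekceil} as hypothesis \textbf{A} and \autoref{ep-tree} as hypothesis \textbf{B}, immediately gives the result after the elementary bound $(f_h(k)-1)(kq-1) = q\cdot h^{O(1)}\cdot k^2\cdot\polylog(k)$.
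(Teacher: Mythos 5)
Your proposal is correct and matches the derivation the paper intends: \autoref{epdisc} is exactly the instantiation of the master theorem \autoref{ceil} with the ceiling of \autoref{chekceil} as hypothesis \textbf{A} and the bounded-treewidth gap $(r-1)(kq-1)$ of \autoref{ep-tree} as hypothesis \textbf{B}, followed by the routine estimate $(f_h(k)-1)(kq-1)\leq q\cdot h^{O(1)}\cdot k^2\cdot\polylog(k)$. Your remark that \autoref{chekceil} applies to disconnected planar $H$ (since a large grid contains any planar graph, connected or not, as a minor) is the only point worth making explicit, and you handle it correctly.
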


\subsection{Edge version and tree partitions}
%
The technique presented in the previous section to deal with hosts of
bounded treewidth cannot be straighforwardly translated to the setting
of the edge-\ep\ property. Indeed, in general, knowing that two vertex sets are
separated by a small number of vertices does not give any information
on the minimum number of edges separating these sets. For this reason, we consider alternative of treewidth that guarantees that small edge-separators can be found. However, to the best of our knowledge, it is not known whether the edge-\ep\ property always holds when the host graphs have bounded treewidth.


One possible edge-analogue of treewidth is tree partition width.
Recall that $\theta_t$ is the graph with two vertices and $t$~edges, for every $t\in \N$.
The following uses~\cite[Theorem~1.2]{Ding1996tree}.
\begin{lemma}\label{simple}
For every $t\in \N$, there exists a ceiling for the triple $(\tpw,{\cal M}(\theta_{t}),{\sf e}).$
\end{lemma}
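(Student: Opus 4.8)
The plan is to prove the (quantified) contrapositive: to exhibit a function $f$, depending only on $t$, such that every graph $G$ satisfies $\tpw(G)\le f(\epack_{\mathcal{M}(\theta_t)}(G))$. Write $k=\epack_{\mathcal{M}(\theta_t)}(G)$. The first step is to bound $\tw(G)$ by a function of $k$. For this I would show that a large grid minor forces a large $\theta_t$-packing: if the $(\ell\times\ell)$-grid is a minor of $G$, then $\epack_{\mathcal{M}(\theta_t)}(G)\ge\floor{\ell/2}\cdot\floor{\ell/t}$. Indeed, fix a minor model of the $(\ell\times\ell)$-grid in $G$ (pairwise disjoint connected vertex sets, one edge of $G$ per edge of the grid) and partition the grid into $\floor{\ell/2}\cdot\floor{\ell/t}$ pairwise vertex-disjoint $(2\times t)$-subgrids; each such subgrid has $\theta_t$ as a minor (contract its two rows), so the union of the corresponding branch sets and connecting edges is a subgraph of $G$ lying in $\mathcal{M}(\theta_t)$, and subgraphs coming from distinct subgrids use disjoint branch sets, hence are vertex-disjoint and a fortiori edge-disjoint. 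Taking $\ell=2t(k+1)$ gives $\floor{\ell/2}\cdot\floor{\ell/t}\ge 2t(k+1)^2>k$, so $G$ has no $(\ell\times\ell)$-grid minor, and the polynomial grid-exclusion theorem (\autoref{cc2}) then yields $\tw(G)\le\tau_t(k)$ for some function $\tau_t$ depending only on $t$.

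The second step, which is the crux and where \cite[Theorem~1.2]{Ding1996tree} is used, is to pass from bounded treewidth to bounded tree-partition-width. This is not automatic: $\tpw(K_{2,n})\to\infty$ while $\tw(K_{2,n})=2$, so treewidth alone does not bound $\tpw$, and what rules out such ``spread-out'' behaviour is precisely the scarcity of $\theta_t$-packings. Concretely, for any vertex $v$, at most $k$ components of $G-v$ can send $t$ or more edges to $v$: taking a spanning tree of such a component together with its $\ge t$ edges to $v$ produces a $\theta_t$-minor model, and these models are pairwise edge-disjoint for distinct components. Thus, apart from at most $k$ ``heavy'' components, every component of $G-v$ is attached to $v$ by fewer than $t$ edges and behaves like a pendant piece. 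Iterating this over the high-degree vertices (and inducting on $|V(G)|$) reduces $G$ to a core of bounded maximum degree which, by Step~1, has bounded treewidth; \cite[Theorem~1.2]{Ding1996tree} then supplies a tree partition of this core of width $O(\tw\cdot\Delta)$, and one reattaches the low-attachment pendant pieces and the bounded number of heavy components to it.

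The step I expect to be the main obstacle is this last reassembly: the $<t$ vertices through which a pendant piece attaches to $v$ may lie in many different bags of that piece's tree partition, and a single vertex may carry many pendant pieces, so naively merging bags can blow up the width; making the iteration terminate and keeping the assembled width bounded in terms of $k$ and $t$ is where the genuine work lies. Once this is done, the composition of the two steps gives the desired ceiling $f$, completing the proof.
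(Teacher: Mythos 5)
Your proposal takes a genuinely different route from the paper, but as written it contains a genuine gap: the entire second step --- passing from ``bounded treewidth and few edge-disjoint $\theta_t$-minors'' to ``bounded tree-partition width'' --- is only a plan, and you yourself flag its crux (reattaching the pendant pieces and making the iteration terminate with controlled width) as ``where the genuine work lies'' without carrying it out. That is precisely the hard part of this statement. Your local observation (at most $k$ components of $G-v$ send $t$ or more edges to $v$) is correct, but it does not by itself yield a bounded-degree core: there may be arbitrarily many components each attached by up to $t-1$ edges, so $v$ keeps unbounded degree until those pieces are stripped off; once several vertices are stripped simultaneously, a component may attach to many of them; and it is not specified what the induction measure is, why the process terminates, or why the widths of the pieces compose into a width bounded in $k$ and $t$. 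A secondary omission: the lemma concerns multigraphs, and the width of a tree partition counts $\|G[X_t]\|$ and $|E_f|$ with multiplicities, whereas your Step~1 only bounds the treewidth of the underlying simple graph; you would also need to argue that high edge multiplicity itself forces a large packing and that $\tpw(G)$ is controlled by $\tpw$ of the simplification times the maximum multiplicity.

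The paper sidesteps all of this by invoking \cite[Theorem~1.2]{Ding1996tree} directly: there is a function $f$ such that every simple graph of tree-partition width at least $f(p)$ contains a $p$-wall, $p$-path, $p$-star or $p$-fan as a subgraph, and each of these contains $k$ edge-disjoint $\theta_t$-minors as soon as $kt<p/2$. In other words, the obstruction to small tree-partition width is itself the source of the packing, so no tree partition is ever constructed or reassembled; the multigraph case is then handled by a one-line multiplicity argument ($\tpw(\underline{G})\ge \tpw(G)/(kt)$ once no edge has multiplicity $kt$). To complete your approach you would essentially have to reprove a form of that structure theorem, which is exactly the content you left open; I recommend restructuring the argument around the Ding--Oporowski obstructions instead.
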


\begin{proof}
According to \cite[Theorem~1.2]{Ding1996tree}, there is a function $f
\colon \N \to \N$ such that for every $p\in \N$, every simple graph $G$
satisfying $\tpw(G) \geq f(p)$ contains as a subgraph either a
$p$-wall, or a $p$-path, or a $p$-star, or a $p$-fan. We omit the
definition of these graphs here, but we note that each of them
contains a \se-$\mathcal{M}(\theta_t)$-packing of size $k$ as soon as $kt<p/2.$

Let $G$ be a graph such that $\tpw(G) \geq f(2kt)\cdot kt.$ If $G$ has a
multiedge $e$ of multiplicity $\geq kt$, then it clearly contains an 
\se-$\mathcal{M}(\theta_t)$-packing of size $k.$ Therefore we now
assume that all edges of $G$ have multiplicity less than~$kt$.
 Observe that, if we denote by $\underline{G}$ the underlying simple
 graph of $G$, we have $\tpw(\underline{G})\geq
 \frac{\tpw(G)}{kt}.$ Hence $\tpw(\underline{G}) \geq f(2kt)$
 and, by definition of $f$ and the remark above, $\underline{G}$
 contains a \se-$\mathcal{M}(\theta_t)$-packing of size $k.$
As $\underline{G}$ is a subgraph of $G$, the aforementioned packing
also belong to $G$, which proves the lemma.
\end{proof}

Let ${\cal H}$ be a class of  graphs. We define $\tilde{\cal
  H}$ as the set of all the {\em subgraph-minimal elements of ${\cal
    H}$}, i.e.,
\[\tilde{\cal H}=\{H,\ H\in{\cal H}\mbox{~and none
  of the  subgraphs of $H$ belongs to ${\cal H}$} \}.\]
  
   We define $\Delta({\cal H})$ as the maximum number of edges incident to a vertex 
   in a graph of  ${\cal H}$ (counting multiple edges). We also set  $\tilde{\Delta}({\cal H})=\Delta(\tilde{\cal H}).$

\begin{lemma}
  For every graph $H$ of $h$ edges, it holds that 
  $\tilde{\Delta}(\mathcal{M}(H))\leq h, \tilde{\Delta}(\mathcal{T}(H))\leq h, \tilde{\Delta}(\mathcal{I}(H))\leq 2h.$
\end{lemma}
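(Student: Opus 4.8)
The plan is to treat the three containment relations in parallel, using in each case the standard \emph{model} description of containment together with the following principle: if $G$ is subgraph-minimal among the graphs having $H$ as a minor (resp.\ topological minor, immersion), then $G$ must coincide with the graph underlying one of its models, since any model already sits inside $G$ as a subgraph on which $H$ is still present, and by minimality no vertex or edge outside that subgraph can exist.

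For minors, let $G$ be a subgraph-minimal element of $\mathcal{M}(H)$ and fix a minor model: pairwise disjoint branch sets $\{B_v\}_{v\in V(H)}$ with each $G[B_v]$ connected, plus a branch edge $f_e$ for every $e\in E(H)$. Replacing each $G[B_v]$ by a spanning tree $T_v$ and keeping only the branch edges yields a subgraph of $G$ that still has $H$ as a minor, so $G$ equals this subgraph; in particular $V(G)=\bigsqcup_v V(T_v)$ and the only parallel edges of $G$ occur among the $f_e$. Minimality also forces, for each $v$ with $|T_v|\ge 2$, every leaf of $T_v$ to be incident to some branch edge (otherwise delete that leaf). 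Now fix $x\in V(G)$, say $x\in B_v$, let $c(x)$ be the number of branch edges at $x$, so that (counting multiplicities) $\deg_G(x)=\deg_{T_v}(x)+c(x)$. The crux is the estimate $\deg_{T_v}(x)+c(x)\le \|H\|$: deleting $x$ from $T_v$ produces $\deg_{T_v}(x)$ subtrees, and a short argument (each subtree has at least two of its own leaves in the $|T_v|\ge2$ case, hence at least one that is a leaf of $T_v$) shows each of them contains a leaf of $T_v$, hence a vertex incident to some branch edge; summing these branch edges together with the $c(x)$ at $x$ and using that the total number of branch edges meeting $B_v$ is the multiplicity-degree of $v$ in $H$, which is at most $\|H\|=h$, gives $\deg_G(x)\le h$.

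For topological minors the same reduction makes a subgraph-minimal $G\in\mathcal{T}(H)$ equal to a subdivision of $H$; each of its vertices is then either a branch vertex, of degree equal to its multiplicity-degree in $H$ and hence at most $h$, or an internal vertex of a subdivided edge, of degree exactly $2$. Since a subgraph-minimal such $G$ has subdivision vertices only when $\|H\|\ge 2$ (shorten a subdivided edge to a single edge otherwise), the bound $h$ follows, the cases $\|H\|\le 1$ being immediate. For immersions, take an immersion model of a subgraph-minimal $G\in\mathcal{I}(H)$: an injection $\phi\colon V(H)\to V(G)$ and edge-disjoint paths $\{P_e\}_{e\in E(H)}$, with $P_e$ joining $\phi(u)$ to $\phi(v)$ for $e=uv$. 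The union $\bigcup_e P_e$ has $H$ as an immersion, so $G=\bigcup_e P_e$ and $E(G)=\bigsqcup_e E(P_e)$. A single path contributes $0$, $1$, or $2$ to the degree of a vertex, so for $x\in V(G)$, writing $a$ for the number of $P_e$ having $x$ as an endpoint and $b$ for the number having $x$ as an interior vertex, we get $\deg_G(x)=a+2b\le 2(a+b)\le 2\|H\|=2h$, since the $h$ paths are indexed by $E(H)$ and each meets $x$ at most once.

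I expect the only real obstacle to be making the reduction ``$G$ equals one of its models'' airtight for each relation: carefully using subgraph-minimality to contract branch sets to trees (resp.\ subdivided paths, resp.\ edge-disjoint paths), to discard any vertex or edge not used by the model, and, in the minor case, to prune leaves of the $T_v$ not meeting a branch edge, together with the little tree-pruning counting step used for the minor bound. Once these structural normalisations are in place, the three inequalities are routine degree counts.
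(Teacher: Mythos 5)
The paper states this lemma without proof, so there is no argument of the authors to compare yours against; judged on its own, your proof is correct and is the natural argument one would supply. All three cases rest on the right observation that a subgraph-minimal element of the class must coincide with (the underlying subgraph of) a model, and the degree counts that follow --- at most $\deg_H(v)\leq h$ edges at any vertex of a pruned minor model, degree $\deg_H(v)$ or $2$ in a subdivision, and at most $2$ edges per path in an immersion model --- are all sound, including your treatment of the $\|H\|\leq 1$ corner case for topological minors. The only spots worth polishing in a written version are the singleton-component subcase in the tree-pruning count (a one-vertex component of $T_v-x$ is itself a leaf of $T_v$ rather than ``having two leaves'') and the inclusion of the images of isolated vertices of $H$ when you identify a minimal $G\in\mathcal{I}(H)$ with the union of its paths; neither affects the bounds.
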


\begin{lemma}\label{split}
Let ${\cal H}$ be a class of  connected non-trivial graphs where $\tilde{\Delta}({\cal H})\leq d.$
Then for every $r \in \N$, ${\cal H}$ 
has the ${\sf e}$-\ep\ property on  ${\cal G}_{{\bf tpw}\leq r}$ with
gap $g_{r}(k)=k \cdot r\cdot(dr + 1)$.
\end{lemma}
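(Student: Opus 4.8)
Recall that it suffices to replace $\mathcal{H}$ by $\tilde{\mathcal{H}}$: a graph contains a subgraph isomorphic to a member of $\mathcal{H}$ exactly when it contains one isomorphic to a member of $\tilde{\mathcal{H}}$, so $\epack_{\mathcal{H}}=\epack_{\tilde{\mathcal{H}}}$ and $\ecover_{\mathcal{H}}=\ecover_{\tilde{\mathcal{H}}}$ on all graphs, and the gain is that every member of $\tilde{\mathcal{H}}$ is connected, non-trivial (so $d\geq 1$) and has maximum degree at most $\tilde{\Delta}(\mathcal{H})\leq d$ --- this degree bound is what will make the edge count go through. I would then fix $r$ (the case $r=0$ being trivial) and argue by induction on $k=\epack_{\mathcal{H}}(G)$, the base case $k=0$ being immediate. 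The inductive step reduces to the claim: if $G\in\mathcal{G}_{\tpw\leq r}$ has a subgraph isomorphic to a member of $\mathcal{H}$ (call it a \emph{copy}), then some edge set $C$ with $|C|\leq r(dr+1)$ satisfies $\epack_{\mathcal{H}}(G\setminus C)\leq k-1$; since $G\setminus C$ still admits a tree partition of width at most $r$ (deleting edges does not change the bags), applying the induction hypothesis to $G\setminus C$ and adding $|C|$ gives the bound.

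To prove the claim I would fix a tree partition $(\mathcal{X},T)$ of $G$ of width at most $r$, root $T$ at some $s$, and write $B_t=\bigcup_{u\in\desc_{(T,s)}(t)}X_u$ and $G_t=G[B_t]$. Pick $t^*$ at maximum distance from $s$ among the nodes whose cone $G_{t^*}$ contains a copy; then $G_{t^*}$ contains a copy $F$, while $G_c$ is copy-free for every child $c$ of $t^*$. Since $F$ is connected, the set of bags meeting $V(F)$ is a subtree of $T$ inside the subtree rooted at $t^*$, and it must contain $t^*$ because all child-cones are copy-free. Let $\mathrm{Ch}$ be the children $c$ of $t^*$ with $V(F)\cap X_c\neq\emptyset$, let $E_f$ be the cut of $T$ between $X_{t^*}$ and its parent bag (set $E_f=\emptyset$ if $t^*=s$), and for $c\in\mathrm{Ch}$ let $f_c\in E(T)$ join $c$ to $t^*$. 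Put $C_{\mathrm{in}}=\{e\in E(F):\text{both ends of }e\text{ lie in }X_{t^*}\}\cup\bigcup_{c\in\mathrm{Ch}}E_{f_c}$ and $C=C_{\mathrm{in}}\cup E_f$. To bound $|C|$, let $a$ and $b$ be the numbers of edges of $F$ with both, resp.\ exactly one, endpoint in $X_{t^*}$; summing $\deg_F$ over $V(F)\cap X_{t^*}$ and using $\Delta(F)\leq d$ and $|X_{t^*}|\leq r$ gives $2a+b\leq dr$, while $|\mathrm{Ch}|\leq b$ because $F$ uses at least one edge of each $E_{f_c}$ ($c\in\mathrm{Ch}$) and these are pairwise distinct edges with exactly one end in $X_{t^*}$. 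As $|E_{f_c}|\leq r$, this yields $|C_{\mathrm{in}}|\leq a+br\leq dr^2$ (the maximum of $a+br$ subject to $2a+b\leq dr$, $a,b\geq 0$), so $|C|\leq dr^2+r=r(dr+1)$.

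It remains to verify $\epack_{\mathcal{H}}(G\setminus C)\leq k-1$. As $E_f$ is the only set of edges of $G$ joining $B_{t^*}$ to its complement, $G\setminus C$ is the disjoint union of $G_{t^*}\setminus C_{\mathrm{in}}$ and $G[V(G)\setminus B_{t^*}]$; copies being connected, $\epack_{\mathcal{H}}(G\setminus C)=\epack_{\mathcal{H}}(G_{t^*}\setminus C_{\mathrm{in}})+\epack_{\mathcal{H}}(G[V(G)\setminus B_{t^*}])$, and since $G_{t^*}$ and $G[V(G)\setminus B_{t^*}]$ are vertex-disjoint subgraphs of $G$ we have $\epack_{\mathcal{H}}(G)\geq\epack_{\mathcal{H}}(G_{t^*})+\epack_{\mathcal{H}}(G[V(G)\setminus B_{t^*}])$, so it suffices to show $\epack_{\mathcal{H}}(G_{t^*}\setminus C_{\mathrm{in}})\leq\epack_{\mathcal{H}}(G_{t^*})-1$. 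The copy $F$ is destroyed in $G_{t^*}\setminus C_{\mathrm{in}}$, and the heart of the argument --- the step I expect to require most care --- is that every copy $Q$ of $G_{t^*}\setminus C_{\mathrm{in}}$ is edge-disjoint from $F$: partition $E(G_{t^*})$ into the edges inside $X_{t^*}$, the cuts $E_{f_c}$ over children $c$, and the edge sets $E(G_c)$; every edge of $F$ lies in one of the first two parts (hence in $C_{\mathrm{in}}$) or in $E(G_c)$ for some $c\in\mathrm{Ch}$, but $Q$ cannot meet such a $G_c$, since entering it from outside would require an edge of the deleted cut $E_{f_c}$ while $Q\subseteq G_c$ is impossible as $G_c$ is copy-free. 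Hence $F$ together with any maximum packing of $G_{t^*}\setminus C_{\mathrm{in}}$ is a packing of $G_{t^*}$, giving the inequality and closing the induction. This last argument is precisely what forces $t^*$ to be chosen as deep as possible and forces $C_{\mathrm{in}}$ to include the \emph{whole} cuts $E_{f_c}$ rather than only the edges of $F$; the rest is routine bookkeeping.
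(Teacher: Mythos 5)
Your proposal is correct and follows essentially the same route as the paper's proof: induction on the packing number, selecting a deepest node $t^*$ of the tree partition whose cone contains a copy, using the degree bound on a subgraph-minimal copy (your reduction to $\tilde{\mathcal{H}}$ plays the same role) to bound the number of relevant child cuts by $dr$, deleting those cuts together with at most $r$ further edges at the bag $X_{t^*}$, and exploiting the copy-freeness of the child cones to show that every surviving copy is edge-disjoint from the destroyed one. The only (harmless) divergences are bookkeeping: the paper deletes all edges inside $X_{t^*}$ and no parent cut, whereas you delete only $F$'s edges inside $X_{t^*}$ plus the parent cut, and you argue the drop in $\epack$ via additivity over the resulting disjoint union rather than directly; both give $|C|\leq r(dr+1)$.
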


\begin{proof}
  Let $r \in \N.$ We will
  show the following for every $k\in \N$: for every graph $G \in {\cal
    G}_{{\bf tpw}\leq r}$, if $\epack_{\mathcal{H}}(G) = k$ then
  $\ecover_{\mathcal H}(G) \leq g_r(k)$.

  We proceed by induction. The base case $k=0$ is trivial. We thus
  assume that $k>0$ and that the above statement holds for every
  positive integer~$k'<k$ (induction hypothesis).

  Let $G \in {\cal G}_{{\bf tpw}\leq r}$ be a graph such that
  $\epack_{\mathcal{H}}(G) = k.$ We assume that $G$ is connected, as
  otherwise we can treat each connected component separately.
  
Let $(\{X_t\}_{t \in V(T)}, T,s)$ be an optimal tree partition
decomposition of~$G.$ We define $G_t = G\left [ \bigcup_{u \in
    \desc_{(T,s)}(t)} X_{u}\right ].$ For every edge $\{u,v\}$ of $T$ we denote by
$E_{\{u,v\}}$ the edges of $G$ with the one endpoint in $X_u$ and the
other one in~$X_v.$
Let $t$ be a vertex of $T$ of minimum distance from a leaf, subject to
$\epack_{\mathcal{H}}(G_t)>0.$

Let $M$ be a subgraph-minimal subgraph of $G_t$ isomorphic to some
member of $\mathcal{H}$ and let $t_1, \dots, t_p$ be the children of
$t$ such that $V(G_{t_i}) \cap V(M) \neq \emptyset$ for every $i \in
\intv{1}{p}$.
By minimality of $M$, it has no vertex with more than
$\tilde{\Delta}(\mathcal{H}) \leq d$ incident edges. As $|X_t|\leq r$, we
deduce that $p \leq rd$.

Let $C = E(X_t) \cup \bigcup_{i=1}^p E_{\{t, t_i\}}$. Notice that $|C| \leq r + dr^2$. Let us
consider then graph $G' = G \setminus C.$ Let $M'$ be a subgraph of $G'$ that is isomorphic to some
member of $\cal H.$
By minimality of $t$, $\epack_{\mathcal{H}}(G_{t_i})=0$, for every $i\in
\intv{1}{p}.$ Therefore, if $M'$ contained an edge $e \in E(G_{t_i})$
(for some $i\in
\intv{1}{p}$), it would also contain an edge of $E(G) \setminus
E(G_{t_i}).$ Since every graph of $\cal H$ is connected, $M'$ would
also need to contain some edge of $E_{\{t, u_i\}}$ in order to be connected to
edges of $E(G) \setminus
E(G_{t_i}).$
However $E(G') \cap E_{\{t, u_i\}} = \emptyset.$ We deduce that for
every subgraph $M'$ of $G'$ that is isomorphic to some member of $\cal
H$, we have $E(M') \cap E(M) = \emptyset.$ It follows that every
$\se$-$\mathcal{H}$-packing in $G'$ is edge-disjoint with $M.$

Hence $\epack_{\mathcal{H}}(G')<k$, as otherwise a packing of size $k$ in $G'$
would, together with $M$, yield a packing of size $k+1$ in $G$ whereas
$\epack_{\mathcal{H}}(G) = k.$
By applying the induction hypothesis on $G'$, there is a subset
$D\subseteq E(G')$ such that $\epack_{\mathcal{H}}(G'\setminus D) = 0$
and moreover~$|D| \leq g_r(k-1).$
It is easy to see that $C \cup D$ is an \se-$\mathcal{H}$-cover of
$G$.
Furthermore $|C \cup D| \leq r(dr+1) + g_r(k-1)
= g_r(k)$, as required.
\end{proof}

An application of \autoref{split} is the following result, which also
relies on \autoref{ceil} and~\autoref{simple}.
\begin{corollary}[see also \cite{Chatzidimitriou2015logopt}]\label{thetaEP}
For every $r \in \N_{\geq 1}$, $\mathcal{M}(\theta_{r})$ has the
$\se$-\ep~property.
\end{corollary}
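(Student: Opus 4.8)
The plan is to invoke the master theorem \autoref{ceil} with the parameter $\mathbf{p} = \tpw$, the guest class $\mathcal{H} = \mathcal{M}(\theta_r)$, and $\sx = \se$. To do so, I need to verify the two hypotheses of \autoref{ceil}. Hypothesis~A asks for a ceiling for the triple $(\tpw, \mathcal{M}(\theta_r), \se)$: this is exactly the content of \autoref{simple}, which provides such a ceiling $f$. Hypothesis~B asks that, for every $w \in \N$, the class $\mathcal{M}(\theta_r)$ has the $\se$-\ep\ property on $\mathcal{G}_{\tpw \leq w}$ with some gap $h_w$. This is where \autoref{split} comes in, but first one must check its applicability.

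First I would check that $\mathcal{M}(\theta_r)$ satisfies the hypotheses of \autoref{split}, namely that it is a class of connected non-trivial graphs with $\tilde{\Delta}(\mathcal{M}(\theta_r))$ bounded by some constant $d$ depending only on $r$. Connectivity and non-triviality of every graph containing $\theta_r$ as a minor are clear (for $r\geq 1$, $\theta_r$ has an edge, and any graph with a $\theta_r$-minor has at least one edge; regarding connectivity, one should take $\mathcal{M}(\theta_r)$ to consist of the connected graphs containing $\theta_r$ as a minor, or equivalently restrict attention to a connected component carrying the minor — the subgraph-minimal elements are connected in any case). For the bound on $\tilde\Delta$, the preceding unnumbered lemma gives $\tilde\Delta(\mathcal{M}(H)) \leq \|H\|$ for every graph $H$; applied to $H = \theta_r$, which has $r$ edges, this yields $\tilde\Delta(\mathcal{M}(\theta_r)) \leq r$. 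So \autoref{split} applies with $d = r$, giving that $\mathcal{M}(\theta_r)$ has the $\se$-\ep\ property on $\mathcal{G}_{\tpw \leq w}$ with gap $g_w(k) = k \cdot w \cdot (rw + 1)$ for every $w \in \N$.

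With both hypotheses of \autoref{ceil} verified, I conclude that $\mathcal{M}(\theta_r)$ has the $\se$-\ep\ property (for the class of all finite graphs), with gap $k \mapsto g_{f(k)}(k) = k \cdot f(k) \cdot (r\cdot f(k) + 1)$. Since the statement of \coref{thetaEP} asserts only the existence of the property and not an explicit gap, this finishes the argument; one could remark in passing that the resulting gap is polynomial in $k$ only if the ceiling $f$ from \autoref{simple} is polynomial, which is not claimed there, so the gap obtained this way is merely some computable function.

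The only real subtlety — the step I expect to need the most care — is the bookkeeping around connectivity: \autoref{split} requires \emph{every} member of $\mathcal{H}$ to be connected, whereas $\mathcal{M}(\theta_r)$ literally contains disconnected graphs (e.g.\ $\theta_r$ plus an isolated edge). The clean fix is to observe that a graph $G$ contains $\theta_r$ as a minor if and only if some connected component of $G$ does, so $\epack$ and $\ecover$ for $\mathcal{M}(\theta_r)$ decompose over connected components, and it suffices to prove the result componentwise — which is precisely the reduction already made at the start of the proof of \autoref{split}. Equivalently, one works throughout with the class of \emph{connected} graphs having a $\theta_r$-minor, noting that its subgraph-minimal members coincide with those of $\mathcal{M}(\theta_r)$, so \autoref{simple} and the $\tilde\Delta$ bound transfer unchanged. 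Everything else is a direct citation of the machinery assembled above.
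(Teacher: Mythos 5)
Your proposal is correct and follows exactly the route the paper intends: the paper derives \coref{thetaEP} by plugging \autoref{simple} (Hypothesis A) and \autoref{split} together with the bound $\tilde\Delta(\mathcal{M}(\theta_r))\leq r$ (Hypothesis B) into the master theorem \autoref{ceil}. Your extra remark on reconciling the connectivity hypothesis of \autoref{split} with the fact that $\mathcal{M}(\theta_r)$ literally contains disconnected graphs is a legitimate and correctly resolved point that the paper glosses over.
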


However, according to the results in \cite{Ding1996tree}, the class of
graphs $H$ such that there is a ceiling for $(\tpw, \mathcal{M}(H),
\se)$ is rather limited.
An alternative counterpart to treewidth might be the tree-cut width. We do not provide the definition here, but we refer the reader to the article where this parameter has been introduced~\cite{Wollan15} (see also \cite{GiannopoulouPTRW2016} for an alternative definition).
The next result appeared in~\cite{Giannopoulou2016packing}
and is strongly based on the results of~\cite{Wollan15}.

\begin{theorem}
\label{giantsould}
For every planar subcubic graph $H$ with $h$ edges, there exists a
ceiling for the triple $(\tcw,{\cal I}(H),{\sf e}).$
\end{theorem}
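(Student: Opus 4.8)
The plan is to mirror the strategy behind \autoref{chekceil}, but with tree-cut width playing the role of treewidth and \emph{immersions} playing the role of minors. Concretely, I want to show that a graph $G$ with large tree-cut width contains, as an immersion, a large \emph{wall} (or a closely related ``edge-dense'' pattern), and then use the fact that a planar subcubic graph $H$ on $h$ edges is a topological minor — hence an immersion — of every sufficiently large wall. The key input from~\cite{Wollan15} is the tree-cut width analogue of the grid/wall theorem: there is a function $g$ such that every graph of tree-cut width at least $g(r)$ contains the $r$-wall as an immersion. Since $H$ is planar and subcubic with $h$ edges, it is a topological minor of the $r$-wall for some $r = r(h) = O(h)$, and a topological minor is in particular an immersion; moreover immersion containment is transitive, so $G$ contains $H$ as an immersion once $\tcw(G) \geq g(r(h))$.

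First I would make the counting quantitative: I want not merely one copy of $H$ but a large \se-$\mathcal{I}(H)$-packing. The clean way is to invoke \autoref{cc1}-style ``many disjoint pieces of large width'' reasoning, now for tree-cut width. If~\cite{Wollan15} (or \cite{Giannopoulou2016packing}) provides a statement that a graph of tree-cut width at least roughly $k \cdot g(r)$ can be partitioned into $k$ edge-disjoint subgraphs each of tree-cut width at least $g(r)$ — or, alternatively, that tree-cut width at least $k\cdot g(r)$ directly forces $k$ edge-disjoint immersed $r$-walls — then each such piece yields a copy of $H$ as an immersion, and because the pieces are edge-disjoint and the immersions live inside the pieces, these copies are \se-disjoint. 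Hence $\tcw(G) \leq f_h\bigl(\epack_{\mathcal{I}(H)}(G)\bigr)$ for a suitable $f_h$, which is exactly the assertion that $f_h$ is a ceiling for $(\tcw, \mathcal{I}(H), \se)$. If no ready-made ``partition into pieces of large tree-cut width'' lemma is available, the fallback is the standard greedy iteration: extract one immersed wall (hence one copy of $H$), delete its edge set, observe that deleting a bounded number of edges drops tree-cut width by a bounded amount (this is where I must be careful — edge deletion is well-behaved for tree-cut width, since a wall on $r$ vertices uses $O(r^2)$ edges), and repeat; each round costs a bounded decrease in $\tcw$, so $\Theta(k)$ rounds are available before the width falls below $g(r(h))$.

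The step I expect to be the main obstacle is exactly this quantitative bookkeeping between tree-cut width and the packing number. Tree-cut width does not behave as transparently under edge deletion or under ``splitting into parts'' as treewidth does under vertex deletion; the relevant monotonicity and decomposition estimates are the technical heart of~\cite{Wollan15}, and the whole point of \autoref{giantsould} is that~\cite{Giannopoulou2016packing} carries them out. So in a self-contained write-up I would cite the tree-cut width wall-immersion theorem and the appropriate edge-deletion / part-decomposition estimate as black boxes from those two papers, and spend the actual argument on: (i) the elementary observation that a planar subcubic $H$ with $h$ edges is a topological minor, hence an immersion, of the $O(h)$-wall; (ii) transitivity of the immersion relation to pass from ``$G$ immerses a big wall'' to ``$G$ immerses many disjoint copies of $H$''; and (iii) assembling the ceiling function, which will have the shape $f_h(k) = O\bigl(k \cdot g(O(h))\bigr)$ where $g$ is the (possibly non-elementary, depending on the exact statement in~\cite{Wollan15}) wall-immersion bound. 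The secondary subtlety is ensuring the copies of $H$ are genuinely edge-disjoint rather than merely vertex-disjoint — this follows because each copy is realized inside a distinct edge-disjoint piece, so one must make sure the ``many pieces'' statement is about edge-disjoint (not vertex-disjoint) subgraphs, matching the \se-variant of the packing.
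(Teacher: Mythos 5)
The paper does not actually prove \autoref{giantsould}: it is imported verbatim from \cite{Giannopoulou2016packing}, so there is no in-text argument to compare yours against line by line. Your outline does match the strategy used there. The inputs are exactly the ones you name: Wollan's tree-cut-width analogue of the grid theorem \cite{Wollan15} (tree-cut width at least $g(r)$ forces the $r$-wall as an immersion), the observation that a planar subcubic $H$ with $h$ edges is a minor, hence (being subcubic) a topological minor, hence an immersion, of an $O(h)$-wall, and transitivity of immersion. As you say, the only real content is the quantitative step from one copy to $k$ edge-disjoint copies.

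On that step, your primary route is the right one, but your fallback is not sound as written. In the greedy iteration you propose to delete the edge set of one immersed copy and argue that the tree-cut width drops by only a bounded amount because ``a wall on $r$ vertices uses $O(r^2)$ edges.'' The subgraph of $G$ realizing an immersion of the $r$-wall (or of $H$) is a union of edge-disjoint paths whose total length is not bounded by any function of $r$ or $h$; it can contain almost all of $E(G)$. So ``bounded number of edges deleted per round'' fails, and with it the count of $\Theta(k)$ rounds. The repair is your first route, and it needs no separate partition lemma: apply Wollan's theorem once to obtain an immersion of an $r$-wall with $r$ chosen so large that the $r$-wall contains $k$ pairwise edge-disjoint $r'$-subwalls with $r'=O(h)$. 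Since an immersion model maps distinct vertices to distinct vertices and distinct edges to pairwise edge-disjoint paths, the images of these subwalls are edge-disjoint subgraphs of $G$, each immersing the $r'$-wall and hence $H$. This shows that $\epack_{\mathcal{I}(H)}(G)\leq k$ forces $\tcw(G)\leq g\bigl(r(h,k+1)\bigr)$, i.e.\ it produces a ceiling for $(\tcw,\mathcal{I}(H),\se)$, which is how \cite{Giannopoulou2016packing} proceeds.
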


The next Lemma is the counterpart of \autoref{split}, especially for the case of 
immersion models for graphs of bounded tree-cut width.

\begin{lemma}[\!\!\cite{Giannopoulou2016packing}]
\label{speimm}
Let $t$ be a positive integer and let $H$ be a connected non-trivial
planar subcubic graph of $h$ edges.
Then ${\cal I}(H)$ has the ${\sf e}$-\ep\ property on ${\cal G}_{{\bf
    tcw}\leq t}$ with gap $k \mapsto t^2h k.$
\end{lemma}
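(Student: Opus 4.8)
The plan is to mirror the structure of the proof of \autoref{split}, replacing the tree partition by a tree-cut decomposition and adapting the counting argument to the notion of edge-separators furnished by bounded tree-cut width. I would argue by induction on $k=\epack_{{\cal I}(H)}(G)$, the base case $k=0$ being trivial. For the inductive step, fix a graph $G$ with $\tcw(G)\le t$ and $\epack_{{\cal I}(H)}(G)=k>0$; as in \autoref{split} we may assume $G$ is connected, handling components separately. Take an optimal tree-cut decomposition $(T,\{X_u\}_{u\in V(T)})$, root it at some vertex $s$, and for each node $u$ let $G_u$ be the subgraph induced by the union of the bags in the subtree below $u$. Choose a node $t$ of minimum distance from a leaf subject to $\epack_{{\cal I}(H)}(G_t)>0$, and let $M$ be a subgraph-minimal immersion model of $H$ inside $G_t$.

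The next step is the localization argument: because $H$ is subcubic and connected, a subgraph-minimal immersion model $M$ has $\tilde{\Delta}({\cal I}(H))\le 2h$ incident edges at each vertex (using the $\tilde{\Delta}$ lemma from the excerpt), so $M$ can reach only few of the children $t_1,\dots,t_p$ of $t$. Here the bookkeeping differs from \autoref{split}: in a tree-cut decomposition one must control both the edges of the torso at $t$ (i.e.\ edges inside $X_t$ together with the ``virtual'' edges accounting for the parts hanging at $t$) and the \emph{adhesions} $E_{\{t,t_i\}}$. I would let $C$ be the union of $E(G[X_t])$, the adhesion to the parent of $t$, and $\bigcup_{i=1}^p E_{\{t,t_i\}}$; bounding $|C|$ by $O(t^2 h)$ is exactly where the tree-cut width bound $\le t$ (controlling bag sizes, adhesion sizes, and torso degree simultaneously) must be invoked carefully, and this is the step I expect to be the main obstacle, since the precise definition of tree-cut width and its associated local quantities from~\cite{Wollan15} has to be used to get the clean bound $t^2h$. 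Once $C$ is fixed, connectedness of $H$ forces every immersion model of $H$ in $G'=G\setminus C$ to be edge-disjoint from $M$ (an edge of $M'$ in some $G_{t_i}$ would, since $\epack_{{\cal I}(H)}(G_{t_i})=0$ by minimality of $t$, force $M'$ to use an edge of $E_{\{t,t_i\}}\subseteq C$, a contradiction), so $\epack_{{\cal I}(H)}(G')<k$.

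Finally I would apply the induction hypothesis to $G'$ (which still has tree-cut width at most $t$, as deleting edges does not increase it) to obtain an $\se$-${\cal I}(H)$-cover $D$ of $G'$ with $|D|\le (k-1)t^2h$, and conclude that $C\cup D$ is an $\se$-${\cal I}(H)$-cover of $G$ of size at most $t^2h+(k-1)t^2h=t^2hk$, as desired. The only genuinely new ingredient relative to \autoref{split} is the replacement of the tree-partition edge-count $r+dr^2$ by the analogous tree-cut quantity, so the heart of the write-up is verifying that the local edges one must delete at $t$ number at most $t^2h$; everything else is a routine transcription of the tree-partition argument, using \autoref{giantsould} together with \autoref{ceil} to turn this bounded-width statement into the unconditional $\se$-\ep\ property elsewhere.
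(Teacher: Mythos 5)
First, a point of comparison: the paper does not actually prove \autoref{speimm} --- the lemma is imported verbatim from \cite{Giannopoulou2016packing} and only its statement appears here --- so there is no in-text proof to measure yours against. Judged on its own, your proposal is an outline rather than a proof: the entire content of the lemma is the bound $t^2h$ on the number of edges deleted per induction step, and this is exactly the step you defer and label ``the main obstacle.'' Everything you do carry out (choosing the lowest node $t$ with $\epack_{\mathcal{I}(H)}(G_t)>0$, the edge-disjointness of future models from $M$ once $C$ is removed, the induction) is indeed a routine transcription of \autoref{split}; the lemma lives or dies on the part you left out.

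Moreover, the deferred step fails as you have set it up, for two concrete reasons. (a) Your set $C$ contains $E(G[X_t])$, but unlike tree-partition width, tree-cut width does not bound the number of edges inside a bag: the graph $\theta_N$ has tree-cut width $2$ for every $N$, with all $N$ parallel edges inside a single bag, because edges with both endpoints in $X_t$ contribute neither to any adhesion nor to the size of the $3$-centre of the torso. Hence $|C|$ cannot be bounded by $t^2h$ with this definition (hang one copy of $H$ off a huge multi-edge to see the accounting break while $\epack_{\mathcal{I}(H)}=1$). (b) The bound $p\leq rd$ on the number of children met by $M$ in \autoref{split} uses the fact that in a tree partition every edge leaving $G_{t_i}$ has its other endpoint in $X_t$, so touched children are charged to edge-slots of $M$ at vertices of $X_t$. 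In a tree-cut decomposition an adhesion edge of $\{t,t_i\}$ may run directly into another child part without meeting $X_t$ (and $X_t$ may even be empty), so a minimal model can thread through an unbounded number of children; consequently $\bigcup_{i}E_{\{t,t_i\}}$ is not bounded by (bag size)$\,\times\,\tilde{\Delta}(\mathcal{I}(H))\times$(adhesion), and the product $t\cdot 2h\cdot t$ you are aiming for does not materialize. Repairing (a) and (b) is precisely where the subcubicity and connectivity of $H$ and the finer structure of tree-cut decompositions from \cite{Wollan15} (in particular the treatment of children of adhesion at most two, which are invisible to the $3$-centre) are actually used in \cite{Giannopoulou2016packing}; without that input the induction step does not close.
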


Using~\autoref{ceil}, \autoref{speimm}, and \autoref{giantsould} we can also derive the following.

\begin{corollary}[\!\!\cite{Giannopoulou2016packing}]\label{corrsplit}
Let $H$ be a connected non-trivial planar subcubic graph of $h$ edges.
Then ${\cal I}(H)$ has the ${\sf e}$-\ep\ property with gap $k \mapsto (hk)^{O(1)}.$
\end{corollary}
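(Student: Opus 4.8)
\textbf{Proof proposal for \autoref{corrsplit}.}
The plan is to apply the master theorem \autoref{ceil} with the parameter $\mathbf{p} = \tcw$, the guest class $\mathcal{H} = \mathcal{I}(H)$, and $\sx = \se$. We need to check its two hypotheses. For condition~\textbf{A}, \autoref{giantsould} provides a ceiling for the triple $(\tcw, \mathcal{I}(H), \se)$; here we use that $H$ is planar and subcubic, which are exactly the hypotheses of \autoref{giantsould}. For condition~\textbf{B}, \autoref{speimm} states that for every $t \in \N$, the class $\mathcal{I}(H)$ has the $\se$-\ep\ property on $\mathcal{G}_{\tcw \le t}$ with gap $h_t \colon k \mapsto t^2 h k$; here we use that $H$ is connected and non-trivial. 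With both conditions verified, \autoref{ceil} gives that $\mathcal{I}(H)$ has the $\se$-\ep\ property with gap $k \mapsto h_{f(k)}(k) = f(k)^2 \cdot h \cdot k$, where $f$ is the ceiling from \autoref{giantsould}.

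It then remains to observe that the resulting gap is polynomial in $k$ for fixed $H$. The ceiling $f$ from \autoref{giantsould} depends on $h$ and on $k$; inspecting the statement, $f$ is a function $\N \to \N$ for each fixed planar subcubic $H$, and the bound it yields is polynomial in $k$ (this is the content of the "polynomial" nature of the tree-cut width bounds from~\cite{Wollan15,Giannopoulou2016packing}). Substituting, $f(k)^2 \cdot h \cdot k$ is then of the form $(hk)^{O(1)}$, which is the claimed gap. So the corollary follows by a direct composition of the three earlier results, exactly paralleling how \autoref{thetaEP} was obtained from \autoref{ceil}, \autoref{simple}, and \autoref{split}.

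The only genuine subtlety — and the step I would be most careful about — is making sure the hypotheses line up: \autoref{speimm} requires $H$ to be \emph{connected} and non-trivial, and \autoref{giantsould} requires $H$ to be \emph{planar} and \emph{subcubic}; all four assumptions are present in the statement of \autoref{corrsplit}, so there is no gap, but one should note that (unlike in the minor/vertex setting of \autoref{epdisc}) no disconnected analogue is claimed here. A secondary point is to confirm that the ceiling $f$ supplied by \autoref{giantsould} indeed grows polynomially in $k$ so that $f(k)^{O(1)}$ collapses into the stated $(hk)^{O(1)}$; this is inherited from the polynomial bounds in~\cite{Wollan15}, and I would simply cite that rather than re-derive it. No superadditivity or logarithmic-splitting argument (as in \autoref{connep}) is needed here, since the gap in \autoref{speimm} is already linear in $k$ for fixed $t$ and the composition through \autoref{ceil} is immediate.
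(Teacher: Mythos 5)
Your proposal is correct and follows exactly the paper's route: the corollary is obtained by instantiating \autoref{ceil} with $\mathbf{p}=\tcw$, using \autoref{giantsould} for the ceiling and \autoref{speimm} for the bounded-tree-cut-width case, just as the paper states. Your added remark that \autoref{giantsould} only asserts existence of a ceiling, so one must separately invoke the polynomial bounds of \cite{Wollan15,Giannopoulou2016packing} to get the $(hk)^{O(1)}$ form, is a fair observation that the paper itself leaves implicit.
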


\section{The \texorpdfstring{Erdős--Pósa}{Erdös-Posa} property from girth}
\label{fromgirth}

In this section, we give another proof of the \ep\ Theorem that
highlights a technique for proving more general \ep-type results. The
technique can be informally summarized as follow. We prove that either
$G$ contains a \emph{small} cycle or that it can be reduced to a
smaller graph with the same packing and cover number.
We then apply induction on either the graph where a small cycle has
been deleted (in the first case), or on the reduced graph (in the
second case). This technique has been successfully
applied in \cite{FioriniJW12excl, Chatzidimitriou2015logopt}, for instance.

The girth of a graph is the minimum length of a
cycle in this graph. 
Let us first recall the following result.

\begin{lemma}[\!\!\cite{Thomassen1983129}, see also~\protect{\cite[Theorem 7.4.2]{Diestel05grap}}]\label{thom}
  There is a constant $c \in \R$, such that for every $q\in \N_{\geq
    1}$, every graph of minimum degree at least 3 and girth at least
  $c \log q$ contains $K_q$ as a minor.
\end{lemma}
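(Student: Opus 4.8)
The plan is to reduce the statement to a classical density theorem for clique minors, namely the Kostochka--Thomason theorem: there is a constant $C$ such that every graph of average degree at least $C\,q\sqrt{\log q}$ has $K_q$ as a minor. Thus it suffices to produce from $G$ a minor $G'$ of average degree at least $C\,q\sqrt{\log q}$, for then $K_q\lminor G'\lminor G$. We may pass to a connected component of $G$ (this keeps $\delta\geq 3$ and does not decrease the girth), and we record that $\delta(G)\geq 3$ already gives $\|G\|\geq\tfrac32|G|$.

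To build $G'$, set $g$ equal to the girth of $G$ and $\rho=\lfloor g/10\rfloor$. Two elementary consequences of large girth will do the work. First, whenever $g\geq 2r+2$, a ball $B(v,r)=\{u:d_G(u,v)\leq r\}$ induces a tree: a non-tree edge with both ends in $B(v,r)$, closed up through a BFS tree rooted at $v$, would create a cycle of length at most $2r+1<g$. Second, since $\delta(G)\geq 3$ this tree branches at least binary below the root, so $|B(v,r)|\geq 2^{r}$. Now take $W=\{u_1,\dots,u_m\}$ to be a maximal subset of $V(G)$ with all pairwise distances greater than $2\rho$, and assign every vertex of $G$ to a closest element of $W$ (ties broken along a fixed BFS order, which keeps the classes connected); this yields a partition $V(G)=P_1\cup\dots\cup P_m$ with $B(u_i,\rho)\subseteq P_i\subseteq B(u_i,2\rho)$, each $P_i$ connected and, since $2\rho<g/2$, inducing a tree.

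Let $G'$ be the minor obtained by contracting each $P_i$ to a single vertex. I would first check that any two parts are joined by at most one edge of $G$: two such edges, completed by paths of length at most $4\rho$ inside each of the two parts, would form a cycle of length at most $8\rho+2<g$. Hence contraction deletes precisely the $\sum_i(|P_i|-1)=|G|-m$ edges internal to parts and identifies no parallel edges, so $\|G'\|\geq\|G\|-(|G|-m)\geq\tfrac12|G|+m$, and therefore the average degree of $G'$ is $2\|G'\|/m\geq|G|/m\geq 2^{\rho}$, using $|G|=\sum_i|P_i|\geq\sum_i|B(u_i,\rho)|\geq m\cdot 2^{\rho}$. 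Choosing the constant $c$ large enough makes $2^{\rho}=q^{\Omega(c)}$ exceed $C\,q\sqrt{\log q}$ (the finitely many small values of $q$ are absorbed by enlarging $c$, and $q\leq 4$ follows from $\delta\geq 3$ alone), so the Kostochka--Thomason theorem gives $K_q\lminor G'$; note that this density bound also forces $m\geq q$, so $G'$ is large enough in the first place.

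The step I expect to be the crux is balancing the two opposing demands on $G'$: its vertex set must be small (so that the meagre edge surplus $\tfrac12|G|$ of a min-degree-$3$ graph already makes it dense) while each contracted part must be large. High girth is exactly what reconciles this, since a ball of radius $\Theta(g)$ is simultaneously exponentially large in $g$ and a tree whose contraction wastes neither extra internal edges nor parallel edges. An alternative, closer to Thomassen's original argument, is to avoid the Kostochka--Thomason theorem and instead construct the $K_q$-minor explicitly inside a BFS tree of depth $\Theta(\log q)$: the exponential growth of the layers provides $q$ pairwise-disjoint connected branch sets together with room to realise all $\binom{q}{2}$ required connections through deeper parts of the graph, and there the main difficulty is routing those connections pairwise-disjointly --- again made possible by the abundance of vertices that the girth hypothesis guarantees.
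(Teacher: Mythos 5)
The paper does not prove \autoref{thom}: it is imported verbatim from Thomassen's paper via Diestel's book, so there is no in-paper argument to compare against. Your proof is correct and is essentially the standard one found in the cited reference: a maximal $2\rho$-separated set with $\rho=\Theta(\text{girth})$ gives a Voronoi partition into connected, tree-inducing parts, each containing a ball of size at least $2^{\rho}$ and sending at most one edge to any other part, so contracting yields a simple minor of average degree at least $2^{\rho}$, and the Kostochka--Thomason bound (crucially polynomial in $q$, which is why girth $c\log q$ suffices) finishes the argument. The only details worth writing out in a full version are the tie-breaking that keeps the Voronoi cells connected and the inequalities $g\geq 4\rho+2$ and $g\geq 8\rho+3$ underlying the tree and single-edge claims, both satisfied by $\rho=\lfloor g/10\rfloor$ once $g\geq 15$, with the finitely many remaining values of $q$ absorbed into $c$ as you note.
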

 A direct consequence of this result is the following trichotomy.
\begin{corollary}\label{trich}
For every graph $G$ and every integer $q>1$, one of the following holds:
\begin{enumerate}[(i)]
\item $G$ has a cycle on at most $c\log q$ vertices;
\item $G$ has a vertex of degree at most 2;
\item $G$ contains $K_{q}$ as a minor,
\end{enumerate}
where $c$ is the constant of \autoref{thom}.
\end{corollary}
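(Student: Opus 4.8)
The plan is to derive the trichotomy directly from \autoref{thom} by a short case analysis: assume that neither of the first two outcomes holds and deduce the third. So suppose that $G$ has no cycle on at most $c\log q$ vertices (the negation of (i)) and that every vertex of $G$ has degree at least $3$ (the negation of (ii)). The first step is to observe that under the second assumption $G$ has minimum degree at least $3$, in particular at least $2$, and hence $G$ contains at least one cycle, by the standard fact that a finite graph with no vertex of degree at most $1$ contains a cycle (follow a maximal path and use that its last vertex has a neighbour on the path other than its predecessor).

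Next I would combine the two assumptions to bound the girth from below. Since $G$ has a cycle but, by the negation of (i), no cycle of $G$ has at most $c\log q$ vertices, every cycle of $G$ has length strictly greater than $c\log q$; thus the girth of $G$ is at least $c\log q$. At this point both hypotheses of \autoref{thom} are met for the parameter $q$, namely minimum degree at least $3$ and girth at least $c\log q$, so $G$ contains $K_q$ as a minor, which is exactly outcome (iii). This finishes the argument.

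The proof is essentially immediate once the inequalities are lined up, so there is no real obstacle; the only points that deserve a moment of care are the boundary cases. If $G$ has a vertex of degree at most $2$ (for instance if $G$ is a forest), then (ii) holds trivially, which is why the argument may safely start from the assumption that the minimum degree is at least $3$. One should also check that the strict and non-strict inequalities match: \textquotedblleft no cycle on at most $c\log q$ vertices\textquotedblright\ gives girth strictly greater than $c\log q$, which in particular implies the \textquotedblleft at least $c\log q$\textquotedblright\ hypothesis of \autoref{thom}. Finally, since the paper allows multiple edges, a pair of parallel edges forms a cycle of length $2$; this is harmless, as we may assume the constant $c$ of \autoref{thom} is large enough that $c\log q \ge 2$ for every $q>1$ (the bound in \autoref{thom} holds up to a multiplicative constant, so enlarging $c$ is free), in which case such a short cycle already witnesses outcome (i).
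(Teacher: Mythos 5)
Your proposal is correct and is exactly the argument the paper intends: the corollary is stated as a direct consequence of \autoref{thom}, obtained by assuming that (i) and (ii) fail, concluding that $G$ has minimum degree at least $3$ and girth at least $c\log q$, and invoking the lemma to get outcome (iii). Your remarks on the boundary cases (the strict versus non-strict inequality and parallel edges in multigraphs) are sensible but not a deviation from the paper's route.
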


We now prove the lemma that implies the classic \ep\ 
Theorem both for the vertex and its edge version. 
Recall that $A_\sx(G)$ denotes $V(G)$ or $E(G)$, depending if
$\sx = \sv$ or~$\sx = \se.$
\begin{lemma}\label{chatzi}
For  every $q\in \N^{+}$ and every ${\sf x}\in\{{\sf v},{\sf e}\}$, 
the class ${\cal M}(\theta_{2})$  has the  \sx-\ep\ property 
 for the class of graphs excluding $K_{q}$ as a minor with gap $O(k\cdot \log q).$
\end{lemma}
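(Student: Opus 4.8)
The plan is to prove the statement by induction on $k=\xpack_{{\cal M}(\theta_2)}(G)$, using \autoref{trich} at each step to decide how to shrink the graph. Note first that $\xcover_{{\cal M}(\theta_2)}(G)$ is simply the minimum number of vertices (resp.\ edges) whose removal makes $G$ a forest, and $\xpack_{{\cal M}(\theta_2)}(G)$ is the maximum number of $\sx$-disjoint cycles: so the lemma is exactly the \ep\ property for cycles on $K_q$-minor-free graphs, with the stated gap. The base case $k=0$ (i.e., $G$ is a forest) is trivial. For the inductive step, assume the statement for all smaller values of the packing number and all graphs excluding $K_q$ as a minor. Let $G$ exclude $K_q$ as a minor with $\xpack_{{\cal M}(\theta_2)}(G)=k\geq 1$. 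Since $G$ is not $K_q$-minor-free in the forbidden sense, case (iii) of \autoref{trich} cannot occur, so either $G$ has a short cycle (on at most $c\log q$ vertices) or $G$ has a vertex of degree at most $2$.

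First I would handle the \emph{reduction case}: if $G$ has a vertex $v$ of degree at most $2$. If $\deg(v)\le 1$, delete $v$; this changes neither the packing nor the cover number, and the smaller graph still excludes $K_q$, so we are done by induction (in fact without even decreasing $k$ — to keep induction well-founded one should phrase the induction on $|V(G)|+|E(G)|$ with $k$ as a secondary parameter, or simply observe these degree-$\le 2$ reductions strictly decrease size). If $\deg(v)=2$ with distinct neighbours, suppress $v$ (replace its two incident edges by a single edge joining its neighbours, as in the topological-minor definition); if the two incident edges form a multiedge $\theta_2$ through $v$, that is a cycle — but then it is a short cycle and we are in the other case, or if $k=1$ we can just take that one edge as the cover. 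Suppression preserves both $\xpack_{{\cal M}(\theta_2)}$ and $\xcover_{{\cal M}(\theta_2)}$ for both $\sx\in\{\sv,\se\}$ (a cycle through $v$ corresponds to a cycle through the new edge; for the vertex cover one replaces $v$ in a cover by one of its neighbours, for the edge cover the correspondence is immediate), and it preserves $K_q$-minor-freeness, while strictly decreasing $|V(G)|+|E(G)|$. So induction applies.

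In the \emph{short cycle case}, $G$ has a cycle $C$ with $|V(C)|\le c\log q$, hence also $|E(C)|\le c\log q$. Let $Z=A_{\sx}(C)$, so $|Z|\le c\log q$. Consider $G'=G\setminus Z$ (delete the vertices of $C$, resp.\ the edges of $C$). Then $G'$ excludes $K_q$ as a minor, and $\xpack_{{\cal M}(\theta_2)}(G')\le k-1$: any $\sx$-disjoint family of cycles in $G'$ is $\sx$-disjoint from $C$ (in the edge case because $E(C)$ was removed; in the vertex case because $V(C)$ was removed), so adjoining $C$ would give a packing of size one more in $G$. By the induction hypothesis, $\xcover_{{\cal M}(\theta_2)}(G')\le \alpha\,(k-1)\log q$ for the appropriate absolute constant $\alpha$ hidden in the $O(\cdot)$. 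Adding $Z$ to such a cover of $G'$ yields an $\sx$-$\mathcal{M}(\theta_2)$-cover of $G$ of size at most $\alpha(k-1)\log q + c\log q \le \alpha k\log q$, provided $\alpha\ge c$; this fixes the constant and closes the induction.

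The main obstacle — really the only delicate point — is bookkeeping the reduction case so the induction is genuinely well-founded: the degree-$\le 2$ simplifications do not decrease the packing number $k$, so one cannot induct on $k$ alone. The clean fix is to induct on $\bigl(k,\,|V(G)|+|E(G)|\bigr)$ lexicographically, or equivalently to first exhaustively suppress all vertices of degree $\le 2$ (and remove isolated/leaf vertices) before invoking \autoref{trich}, arguing that this preprocessing alters neither packing nor cover numbers nor $K_q$-minor-freeness. One must also check the small corner cases (a degree-$2$ vertex whose two edges form a $\theta_2$, or parallel edges that themselves constitute length-$2$ cycles) — these are exactly the $\theta_2$'s we are packing, and each is a short cycle, so they fall under the short-cycle branch and cause no trouble.
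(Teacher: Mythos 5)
Your proposal is correct and follows essentially the same route as the paper's proof: induction on the packing number together with the size of the graph, using the trichotomy of \autoref{trich} to either delete a short cycle (progress case, costing $c\log q$ elements of the cover) or suppress a vertex of degree at most $2$ (reduction case, preserving packing and cover numbers and $K_q$-minor-freeness). Your explicit lexicographic ordering and your treatment of the degree-$2$/multiedge corner cases are, if anything, slightly more careful than the paper's.
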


\begin{proof}
We will prove that for every non-negative integer $k$ and every
$K_q$-minor-free graph $G$, either $G$ has $k$ $\sx$-disjoint cycles,
or $G$ has a subset $X\subseteq A_\sx(G)$ of size at most $c k \log q$ such that $G
  \setminus X$ is a forest, where $c$ is the constant of \autoref{thom}.
  We proceed by induction on the pair $(k, G)$, with the
  well-founded order defined by $(k', G') \leq (k, G) \iff ( k' \leq k\ \text{and}\
  |A_\sx(G')| \leq |A_\sx(G)|)$, for all graphs $G$, $G'$ and
  non-negative integers $k$, $k'.$
  
  The base cases corresponding to $k=0$ or $|A_\sx(G)|=0$ are
  trivial. Let us now assume that $k\geq 1$, $|A_\sx(G)| \geq 1 $, and
  that the lemma holds for every pair $(k', G')$ such that $(k', G')
  \leq (k, G).$
  
  According to \autoref{trich}, either $G$ has a cycle $C$ on at most
  $c\log q$ vertices, or it has a vertex $v$ of degree at most two, or it
  contains $K_{q}$ as a minor. The last case is not possible, as
  we require $G$ to be $K_q$-minor-free.

  Whenever the first case applies, we set $G' = G\setminus A_\sx(C)$ and we consider
  the pair~$(k-1, G').$ If $G'$ contains $k-1$ $\sx$-disjoint cycles,
  then $G$ contains $k$ $\sx$-disjoint cycles obtained by adding $C$
  to those of~$G'$ and we are done. Otherwise, the induction
  hypothesis implies the existence of a subset $X' \subseteq
  A_\sx(G')$ with $|X'| \leq  c(k-1)\log q$ such that $G' \setminus X'$
  is a forest. Then by definition of $C$, $X = X' \cup A_\sx(C)$ has size at most $c \log q$
  and $G\setminus X$ is a forest, as required.

  In the second case, we delete $v$ if it is isolated and we contract
  an edge $e$ incident with it otherwise. Notice that since we cannot
  apply the first case, this
  contraction does not decrease the maximum number of $\sx$-disjoint cycles in~$G.$ Also, we
  can assume without loss of generality that $v$ (respectively $e$) is
  not part of a minimum $\sx$-cover of cycles in $G$, as any vertex
  adjacent to $v$ (respectively edge incident with $e$) covers all the
  cycles covered by $v$ (respectively $e$). Therefore the obtained
  graph $G'$ satisfies $\xpack_{\mathcal{M}(\theta_2)}(G') = \xpack_{\mathcal{M}(\theta_2)}(G')$ and $\xcover_{\mathcal{M}(\theta_2)}(G') =
\xcover_{\mathcal{M}(\theta_2)}(G).$ It is not hard to see that $|A_\sx(G')| <
|A_\sx(G)|.$ Therefore we can apply the induction hypothesis on $G'$ and
obtain the desired result on $G'$, that immediately translates to
$G$ by the above remarks.
\end{proof}

By setting $q = 3k$ and observing that every graph containing $K_{3k}$
as a minor also contains $k$ vertex-disjoint cycles (hence also
edge-disjoint), \autoref{chatzi} yields the vertex and edge versions of the classic
\ep\ Theorem as a corollary.

The technique presented in this section has been used to show the
following results.

\begin{theorem}[\!\!\cite{FioriniJW12excl}]\label{fep}
  For every forest $H$, $\mathcal{M}(H)$ has the \sv-\ep\ property with
  gap~$O(k).$
\end{theorem}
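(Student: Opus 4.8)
The plan is to run the reduce-or-peel induction of this section, with \autoref{trich} replaced by a structural ingredient tailored to forests: the theorem of Bienstock, Robertson, Seymour and Thomas that a graph with no $H$-minor has pathwidth at most $|V(H)|-2$ whenever $H$ is a forest. Write $h=|V(H)|$, let $q$ be the number of connected components of $H$, and set $p=h-2$. I fix a constant $c_H$ depending on $h$ only and prove by induction on $k$ that $\vpack_{\mathcal{M}(H)}(G)\le k$ implies $\vcover_{\mathcal{M}(H)}(G)\le c_H\cdot k$ for every graph $G$. The base case $k=0$ is the statement that an $H$-minor-free graph has empty cover, and it holds trivially.

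The whole content is therefore the inductive step, which I would isolate as the following claim: if $\vpack_{\mathcal{M}(H)}(G)=k\ge 1$ then $G$ has a set $X$ of at most $c_H$ vertices with $\vpack_{\mathcal{M}(H)}(G\setminus X)\le k-1$. Granting the claim, the induction hypothesis gives $\vcover_{\mathcal{M}(H)}(G\setminus X)\le c_H(k-1)$, and since an $\mathcal{M}(H)$-cover of $G$ is obtained from any $\mathcal{M}(H)$-cover of $G\setminus X$ by adding $X$ (every $H$-minor model of $G$ either avoids $X$, hence lies in $G\setminus X$, or meets $X$), we get $\vcover_{\mathcal{M}(H)}(G)\le c_H(k-1)+c_H=c_Hk$, as desired. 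To prove the claim I would fix a maximum $\mathcal{M}(H)$-packing $M_1,\dots,M_k$ of $G$ and work with a path decomposition $(W_1,\dots,W_m)$ of width at most $p$ of the $H$-minor-free graph $G\setminus\bigcup_i V(M_i)$. Sweeping from left to right along this decomposition while keeping track of the bounded number of packing vertices relevant near the current position, one locates a bounded window of the decomposition past which no $H$-minor model can route around the packing; taking $X$ to be the separator delimiting this window together with the few relevant packing vertices yields a bounded-size set whose removal destroys one copy from every size-$k$ packing, so that $\vpack_{\mathcal{M}(H)}(G\setminus X)\le k-1$. Disconnected $H$ is handled inside the same sweep, using that $q$ vertex-disjoint connected minors, one per component of $H$, together form an $H$-minor.

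The main obstacle is exactly making the window bounded. A priori the branch sets of an $H$-minor model can be long paths stretching across many bags, so it is not clear that the sweep ever encounters an $H$-minor confined to a bounded window before the supply of packing copies runs out. Overcoming this needs the forest hypothesis twice over: once through the $O(h)$ bound on the pathwidth of $H$-minor-free pieces, which bounds the separators met by the sweep, and once through a careful pruning of the $M_i$ to subgraph-minimal models, so that in any sufficiently long stretch of bounded pathwidth either a whole component of $H$ can be re-routed into a bounded subgraph (allowing a cheap peel that decreases $k$) or else a bounded separator already meets every model. It is precisely this path-decomposition analysis --- rather than the grid-based arguments of \autoref{decompep} --- that yields a gap linear in $k$.
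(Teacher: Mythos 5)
First, a point of reference: the survey does not actually prove \autoref{fep}; it quotes it from \cite{FioriniJW12excl} and, in the paragraphs following \autoref{cep}, describes the intended mechanism, namely the reduce-or-progress scheme of \autoref{chatzi} in which the reduction step is carried out by protrusion replacement. Your setup is compatible with that in spirit: the reduction of the theorem to the claim that some set $X$ of $O_H(1)$ vertices satisfies $\vpack_{\mathcal{M}(H)}(G\setminus X)\le k-1$ is a valid induction (removing the vertex set of \emph{any} model already drops the packing number by one), and the Bienstock--Robertson--Seymour--Thomas bound $\pw(G)\le |V(H)|-2$ for $H$-minor-free graphs is indeed the right structural input. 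The problem is that the proof of that claim \emph{is} the theorem, and your sketch does not deliver it. Two concrete failures. First, the path decomposition you sweep is a decomposition of $G\setminus\bigcup_i V(M_i)$ only; the packing vertices have no position in it and may be adjacent to every bag, so ``the bounded number of packing vertices relevant near the current position'' is undefined and a priori unbounded --- the interaction between the models to be covered and the fixed packing is exactly the difficulty, and the pathwidth of the complement does not control it. Second, your fallback --- prune the $M_i$ to subgraph-minimal models so that a ``cheap peel'' becomes available --- rests on a false premise: subgraph-minimal graphs containing a forest $H$ as a minor need not have bounded order. For $H$ the double star (two adjacent vertices, each carrying two pendant leaves), a path of arbitrary length with two pendant edges attached at each end contains $H$ as a minor while no proper subgraph does (every proper subgraph is a forest whose components have at most three leaves, whereas $H$ has four). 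Hence $G$ need not contain any model on $O_H(1)$ vertices at all, and no bounded window of the sweep is forced to exist; one must first \emph{shrink} the graph, replacing a large constant-pathwidth part with small boundary by a bounded equivalent that preserves both $\vpack_{\mathcal{M}(H)}$ and $\vcover_{\mathcal{M}(H)}$. That replacement is the protrusion step of \cite{FioriniJW12excl}; your phrase ``a whole component of $H$ can be re-routed into a bounded subgraph'' names it but does not prove it, and it is the entire content of the linear bound.

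It is worth separating what your ingredients give for free from what they do not. Feeding a ceiling for $(\tw,\mathcal{M}(H),\sv)$ into \autoref{ceil} together with the bounded-treewidth lemma (\autoref{ep-tree}) yields the \emph{quadratic} gap of \autoref{epdisc} without any sweep at all; the point of \autoref{fep} is precisely to beat this, and that improvement lives entirely in the dichotomy you assert in your last paragraph. As written, the proposal is a plan with the key lemma missing rather than a proof.
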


\begin{theorem}[\!\!\cite{Chatzidimitriou2015logopt}, see also \cite{FioriniJS13}
  for the vertex case]\label{cep}
  For every positive integer $r$ and every $\sx \in \{\sv, \se\}$,
  $\mathcal{M}(\theta_r)$ has the \sx-\ep\ property with gap $O(k\log k).$
\end{theorem}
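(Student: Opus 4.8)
The plan is to run an induction in the style of \autoref{chatzi}, with ``short cycle'' replaced throughout by ``small model of $\theta_r$''. The case $r=1$ is trivial and $r=2$ is \autoref{chatzi}, so assume $r\ge 3$. For a parameter $q$ to be tuned later, I would prove by induction on $(k,G)$, using the same well-founded order $(k',G')\le (k,G)\iff k'\le k\text{ and }|A_\sx(G')|\le|A_\sx(G)|$ as in \autoref{chatzi}, the following: \emph{for every $K_q$-minor-free graph $G$ and every $\sx\in\{\sv,\se\}$, either $G$ has $k$ $\sx$-disjoint $\theta_r$-minors, or there is $X\subseteq A_\sx(G)$ with $|X|\le c_r k\log q$ such that $G\setminus X$ has no $\theta_r$-minor}, where $c_r$ depends only on $r$. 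Taking $q=(r+1)k$ makes $K_q$ --- hence any graph having $K_q$ as a minor --- contain $k$ vertex-disjoint subdivisions of $\theta_r$, so $k$ $\sx$-disjoint $\theta_r$-minors; since $\log q=O_r(\log k)$, this statement gives the theorem with gap $O_r(k\log k)$.

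For the inductive step, the cases $k=0$, $|A_\sx(G)|=0$, and ``$G$ has no $\theta_r$-minor'' are immediate. Otherwise one first applies, exhaustively, reduction rules each of which strictly decreases $|A_\sx(G)|$ while leaving both $\xpack_{\mathcal{M}(\theta_r)}(G)$ and $\xcover_{\mathcal{M}(\theta_r)}(G)$ unchanged; whenever one applies, recurse at the same $k$ and translate back. The cheapest rules are the familiar ones: remove isolated and degree-$1$ vertices and suppress degree-$2$ vertices --- since $r\ge 3$, such a vertex is never a branch vertex of a $\theta_r$-model and can be bypassed inside a branch set, so packings survive, and for coverings one uses the ``redundant vertex'' swap already seen in the proof of \autoref{chatzi} (in a minimum cover a degree-$\le 2$ vertex can be exchanged for a neighbour). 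When $G$ is reduced, its minimum degree is at least $3$, so \autoref{trich} with the chosen $q$ (the low-degree alternative now excluded) yields either a $K_q$-minor --- in which case we output the $k$ $\sx$-disjoint $\theta_r$-minors guaranteed above --- or a cycle $C$ on at most $c\log q$ vertices.

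In the latter case the aim is to extract from $C$, in the reduced graph, a model $M$ of $\theta_r$ with $|A_\sx(M)|=O_r(\log q)$. Given such an $M$, set $G'=G\setminus A_\sx(M)$ and recurse on $(k-1,G')$: if $G'$ has $k-1$ $\sx$-disjoint $\theta_r$-minors, adjoining $M$ gives $k$ in $G$; otherwise the induction produces $X'\subseteq A_\sx(G')$ with $|X'|\le c_r(k-1)\log q$ and $G'\setminus X'$ free of $\theta_r$-minors, and then $X:=X'\cup A_\sx(M)$ has size $c_r(k-1)\log q+O_r(\log q)\le c_r k\log q$ (for $c_r$ large enough) with $G\setminus X$ free of $\theta_r$-minors. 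So each of the at most $k$ rounds adds $O_r(\log q)=O_r(\log k)$ to the cover, giving the bound $O_r(k\log k)$.

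The main obstacle is precisely the extraction of a \emph{small} model from a short cycle. Heuristically, $C$ already supplies two of the $r$ parallel bundles of $\theta_r$ and the minimum-degree condition should build the remaining $r-2$ within bounded radius; but the naive form of this --- ``minimum degree $\ge 3$ plus a short cycle forces a small $\theta_r$-model'' --- is false: a suitably large cactus in which every vertex lies in three triangles (so of minimum degree $6$), together with one extra edge joining two far-apart vertices, has girth $3$ and a $\theta_3$-minor, yet only enormous $\theta_3$-models. Two things rescue the argument and make up the bulk of the work in \cite{Chatzidimitriou2015logopt}: first, the reduction battery above must be enlarged --- one also suppresses vertices of degree $2$ in the \emph{underlying simple} graph (contracting, with care about multiplicities, chains of parallel-edge blocks) and prunes $\theta_r$-minor-free parts that hang on a single vertex --- so that these offending configurations collapse before the short-cycle case is ever reached; and second, once $G$ is reduced and $K_q$-minor-free, one has to show that a shortest cycle genuinely extends to a $\theta_r$-model of size $O_r(\log q)$, the linear dependence on $\log q$ (rather than a polynomial one) being exactly what keeps the gap at $O(k\log k)$. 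Throughout, the verification that each reduction preserves $\xpack_{\mathcal{M}(\theta_r)}$ and $\xcover_{\mathcal{M}(\theta_r)}$ is routine but delicate, always reducing for the covering number to exhibiting a swappable vertex or edge in a minimum cover.
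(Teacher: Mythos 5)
Your overall architecture coincides with the one this survey itself sketches for \autoref{cep} (and which is carried out in \cite{Chatzidimitriou2015logopt}): an induction that at each step either reduces the graph without changing $\xpack_{\mathcal{M}(\theta_r)}$ and $\xcover_{\mathcal{M}(\theta_r)}$, or makes progress by finding a $\theta_r$-model on $O_r(\log k)$ vertices/edges, deleting it, and recursing. The choice $q=(r+1)k$ and the bookkeeping that turns a per-round cost of $O_r(\log q)$ into a gap of $O_r(k\log k)$ are fine. But as written the proposal is a roadmap rather than a proof, and you say so yourself: the entire argument hinges on the \emph{progress lemma} --- that a fully reduced, $K_q$-minor-free graph containing a $\theta_r$-minor contains one of size $O_r(\log q)$ --- and you do not prove it; you exhibit a counterexample (the cactus of triangles) showing that the natural analogue of \autoref{trich} is simply false for $r\ge 3$, and then assert that an ``enlarged reduction battery'' will dispose of all such configurations. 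That assertion is the theorem. Nothing in the proposal identifies which local configurations must be reducible, proves that the listed rules exhaust them, or establishes the Thomassen-type statement that survives the reductions; and the linear (rather than polynomial) dependence on $\log q$, which you correctly flag as the difference between $O(k\log k)$ and a worse gap, is exactly the quantitative content that is missing.

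A second, related gap: the reduction rules you propose (deleting low-degree vertices, suppressing degree-$2$ vertices, pruning $\theta_r$-free parts hanging on one vertex) are strictly local, whereas the survey states explicitly that the reduction case in the proofs of \autoref{fep} and \autoref{cep} is handled with \emph{protrusion replacement}: one finds a large part of the graph of constant tree partition width with a small boundary and replaces it by a bounded-size gadget preserving both the packing and the covering number. Your own cactus example is evidence that degree-based rules do not suffice --- that graph has minimum degree $6$ and survives all of your rules except possibly the vaguely described ``contracting chains of parallel-edge blocks'', which is not defined precisely enough to verify. Proving safeness of a protrusion replacement for both $\xpack$ and $\xcover$, simultaneously for $\sx=\sv$ and $\sx=\se$, is the delicate part of \cite{Chatzidimitriou2015logopt} and cannot be dismissed as ``routine but delicate''. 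In short: right strategy, honest diagnosis of where the difficulty sits, but the two lemmas that constitute the actual mathematical content (safeness of a sufficiently strong reduction, and the small-model lemma for reduced graphs) are both left unproved.
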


Actually, the ideas in~\cite{Chatzidimitriou2015logopt} permit us to replace ${\cal M}(\theta_{2})$
by ${\cal M}(\theta_{r}), r\geq 2$ in \autoref{chatzi}.

To extend the idea of~\autoref{chatzi} 
in order to prove that some graph class ${\cal H}$
has the \sx-\ep\ property with
  gap~$f\colon\N\to\N$,  one should  
show that 
 for every positive integer $k$ and every graph $G$ with
$\xpack_{\mathcal{H}}(G) \leq  k$,
\begin{itemize}
\item either there is a graph $G'$ with $\xpack_{\mathcal{H}}(G) =\xpack_{\mathcal{H}}(G')$ and $\xcover_{\mathcal{H}}(G) =
\xcover_{\mathcal{H}}(G')$ and such that  $|G'|+\|G'\|<|G|+\|G\|$ (reduction case);
\item or $G$ has a subgraph isomorphic to a member of $\mathcal{H}$ on
  at most $f(k)/k$ vertices/edges (progress case).
\end{itemize}

In both  proofs of \autoref{fep} and \autoref{cep}, the reduction
case is done using  the graph theoretic notion of a protrusion introduced in~\cite{BodlaenderFLPST09meta,BodlaenderFLPST09} (or variants of it). 
Roughly speaking, the idea is to identify large parts of the graph that 
have constant treewidth (or constant  tree partition width, in case of~\autoref{cep}) 
and a small interface towards the rest of the graph and then prove 
that they can be replaced  by smaller ones without changing 
the packing or the cover number.

\section{Results in terms of containment relations}
\label{sec:cr}

For every partial order $\lleq$ on graphs, and for every graph $H$,
let $$\mathcal{G}_{\lleq}(H)= \{G\mid\ H\lleq G\}.$$
For every $\sx \in \{\sv, \se\}$, we define 
\begin{eqnarray*}
\mathcal{EP}^{\sx}_{\lleq} & = & \{H\mid \mbox{$\mathcal{G}_{\lleq}(H)$ has the
\sx-\ep\ property}\}
\end{eqnarray*}
A general question on the \ep\ property is to characterize $\mathcal{EP}^{\sx}_{\lleq}$ 
for several containment relations.
In this section we mainly provide some negative results about this problem.
We start with the following
easy  observation.

\begin{lemma}
\label{sbgro}\label{sgr}
If $\lleq$ is the subgraph 
or the induced subgraph relation, $\sx \in \{\sv, \se\}$,
and $H$ is a non-trivial graph,  then
$\mathcal{G}_{\lleq}(H)$ has the 
\sx-\ep\ property, with  gap $f:k\mapsto k\cdot |A_{\sf x}(H)|.$
In other words, $\mathcal{EP}^{\sx}_{\lleq}$ is the set of all graphs.
\end{lemma}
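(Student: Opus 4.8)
The plan is to prove the statement directly by exhibiting an $\sx$-$\mathcal{G}_{\lleq}(H)$-cover of the claimed size from an optimal $\sx$-$\mathcal{G}_{\lleq}(H)$-packing. First I would unwind the definitions: a graph $G$ belongs to the guest class $\mathcal{G}_{\lleq}(H)$ precisely when $H \lleq G$, where $\lleq$ is the subgraph (resp.\ induced subgraph) relation; and since $H$ is a subgraph of $G$ if and only if some induced subgraph of $G$ contains $H$ as a spanning subgraph, in both cases the subgraphs of $G$ isomorphic to a member of $\mathcal{G}_{\lleq}(H)$ are exactly the subgraphs of $G$ that themselves contain a copy of $H$ as a subgraph. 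In particular a minimal such subgraph is just a copy of $H$ itself (on $|A_\sv(H)|$ vertices, or with $|A_\se(H)|$ edges). So an $\sx$-$\mathcal{G}_{\lleq}(H)$-packing is, up to passing to minimal witnesses, a family of $\sx$-disjoint copies of $H$ in $G$, and an $\sx$-$\mathcal{G}_{\lleq}(H)$-cover is a set $C \subseteq A_\sx(G)$ meeting every copy of $H$ in $G$.

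Next I would take a maximum $\sx$-$\mathcal{G}_{\lleq}(H)$-packing $\{F_1, \dots, F_k\}$ in $G$, with $k = \xpack_{\mathcal{G}_{\lleq}(H)}(G)$, and replace each $F_i$ by a subgraph-minimal subgraph still containing $H$; this $F_i$ then satisfies $|A_\sx(F_i)| \le |A_\sx(H)|$ (in fact one can take $F_i$ isomorphic to $H$, using that $H$ is a subgraph of each $F_i$). Set $C = \bigcup_{i=1}^{k} A_\sx(F_i)$, so that $|C| \le k \cdot |A_\sx(H)|$. I would then argue that $C$ is an $\sx$-$\mathcal{G}_{\lleq}(H)$-cover: if $G \setminus C$ still contained a subgraph isomorphic to a member of $\mathcal{G}_{\lleq}(H)$, it would contain a copy $F_{k+1}$ of $H$ that is $\sx$-disjoint from every $F_i$ (since $F_{k+1} \subseteq G \setminus C$ avoids all of $A_\sx(F_i)$), contradicting the maximality of the packing. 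Hence $\xcover_{\mathcal{G}_{\lleq}(H)}(G) \le |C| \le k \cdot |A_\sx(H)| = f(\xpack_{\mathcal{G}_{\lleq}(H)}(G))$, which gives the $\sx$-\ep\ property with the stated gap, and since $H$ was an arbitrary non-trivial graph and $G$ arbitrary, $\mathcal{EP}^{\sx}_{\lleq}$ contains every graph; it trivially cannot contain anything more, so equality holds.

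There is essentially no main obstacle here — the only point requiring a sentence of care is the observation that, for both the subgraph and induced-subgraph orders, containing a member of $\mathcal{G}_{\lleq}(H)$ as a subgraph is the same as containing $H$ as a subgraph (for the induced-subgraph case this is because any $G'$ with $H \lleq G'$ has $H$ as a subgraph, so $H \subseteq G' \subseteq G$). Once that is noted, the packing/cover duality argument is the standard ``a maximum packing of constant-size objects is itself a cover'' argument, and no induction or decomposition machinery is needed. I would keep the write-up to a short paragraph, since the content is elementary and the value of the lemma is mainly as a contrast with the negative results that follow for the minor and topological-minor orders.
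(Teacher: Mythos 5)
Your proposal is correct and follows essentially the same route as the paper: take a maximum packing, shrink each member to a minimal witness (a copy of $H$ itself, with $|A_\sx(H)|$ vertices or edges), and observe that the union of these witnesses is a cover by maximality of the packing. The one extra sentence you add — that for both orders a subgraph of $G$ isomorphic to a member of $\mathcal{G}_{\lleq}(H)$ is exactly a subgraph containing a copy of $H$ — is a point the paper leaves implicit, and it is worth keeping.
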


%

\begin{proof}
Let $H$ and $G$ be two graphs and let $k =
\xpack_{\mathcal{G}_{\lleq}(H)}(G).$ Let $M_1, \dots, M_k$ be a
\sv-$\mathcal{G}_{\lleq}(H)$-packing (respectively \se-$\mathcal{G}_{\lleq}(H)$-packing) of size $k$ with the minimal
number of vertices (respectively edges). Observe that in this case,
$|M_i| = |H|$ (respectively $\|M_i\| = \|H\|$) for every $i \in \intv{1}{k}.$ Let $X = \bigcup_{i=1}^k
V(M_i)$ (respectively $X = \bigcup_{i=1}^k
E(M_i)$). As the packing we consider is of size
$k$, the graph $G \setminus X$ does not have
any subgraph isomorphic to a member of $\mathcal{G}_{\lleq}(H).$ Hence
$X$ is an \sv-$\mathcal{G}_{\lleq}(H)$-cover (respectively
\se-$\mathcal{G}_{\lleq}(H)$-cover), and besides we have $|X| = k
\cdot |H|$ (respectively $|X| = k
\cdot \|H\|$). \end{proof}

Notice that in case ${\sf x}={\sf v}$,  it is not necessary to demand that $H$ is non-trivial in the statement of  \autoref{sbgro}.

\subsection{Some negative results}
Let us now state several negative results on the \ep\ property of
classes related to topological minors.

In the proofs below, we use the notion of \emph{Euler genus} of a graph $G.$
The {\em Euler genus} of a non-orientable surface ${\rm {\rm \Sigma}}$
is equal to the non-orientable genus
$\tilde{g}({\rm {\rm \Sigma}})$ (or the crosscap number).
The {\em Euler genus}  of an orientable   surface
${\rm {\rm \Sigma}}$ is $2{g}({\rm {\rm \Sigma}})$, where ${g}({\rm {\rm \Sigma}})$ is  the orientable genus
of ${\rm {\rm \Sigma}}.$ 
We refer to the book of Mohar and Thomassen \cite{mohar2001graphs} for
more details  on graph embeddings.
The {\em Euler genus} $\gamma(G)$ of a graph $G$ 
 is the minimum Euler genus of a surface where $G$ can be embedded.

 \begin{lemma}\label{nega}
   Let $H$ be a non-planar graph. Then $\mathcal{T}(H)$ does not have the \sv-\ep\ property.
 \end{lemma}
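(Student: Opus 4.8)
The plan is to build, for every $k\in\N$, a graph $G_k$ that contains no two vertex-disjoint subgraphs each having $H$ as a topological minor, yet requires arbitrarily many vertices to hit all such subgraphs; this shows no gap function can exist. The natural construction is to take a surface $\Sigma$ of Euler genus $\gamma(H)-1$ (so that $H$ does \emph{not} embed in $\Sigma$, since $H$ is non-planar and we have room just below its genus), and then let $G_k$ be a large, highly-connected graph that embeds in $\Sigma$ — for instance a sufficiently large toroidal-type grid or a triangulation of $\Sigma$ of large face-width, scaled up with $k$. Because $G_k$ embeds in $\Sigma$, any subgraph of $G_k$ also embeds in $\Sigma$, and therefore no subgraph of $G_k$ can contain $H$ as a topological minor (a topological-minor model of $H$ inside $G_k$ would yield an embedding of a subdivision of $H$, hence of $H$ itself, in $\Sigma$ — contradiction). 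Consequently $\vcover_{\mathcal{T}(H)}(G_k)=0=\vpack_{\mathcal{T}(H)}(G_k)$, which trivially satisfies the \ep\ inequality. So this first attempt fails, and the real construction must be subtler.

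The correct idea — this is the standard obstruction, going back to Thomassen — is to use a graph that embeds in a surface of Euler genus \emph{just above} $\gamma(H)$, so that it \emph{does} contain $H$ topologically, but in an essentially unique and ``local'' way that cannot be duplicated disjointly, while being hard to hit. Concretely: fix a surface $\Sigma$ with $\gamma(\Sigma) = \gamma(H)$, so $H$ embeds in $\Sigma$ but in no surface of smaller genus. Take $G_k$ to be a graph drawn in $\Sigma$ consisting of a very large ``planar part'' (a huge grid, which carries no cycle that is non-contractible in $\Sigma$) together with a small fixed ``genus part'' attached in a bounded region, arranged so that any subgraph of $G_k$ realizing $H$ as a topological minor must use the non-contractible cycles living in that bounded genus part. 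Since all the non-contractibility ``budget'' of $\Sigma$ is concentrated in one bounded region, two vertex-disjoint copies of such a subgraph would force $\Sigma$ to carry two disjoint systems of non-contractible cycles of total genus $2\gamma(H)$, exceeding $\gamma(\Sigma)$ — impossible. Hence $\vpack_{\mathcal{T}(H)}(G_k)=1$ for all large $k$. On the other hand, the many disjoint connections between the planar part and the genus part can be made to number $k$, so that fewer than $k$ vertices cannot destroy all $H$-topological-minor subgraphs; thus $\vcover_{\mathcal{T}(H)}(G_k)\ge k$ while $\vpack_{\mathcal{T}(H)}(G_k)=1$, and no gap function $f$ with $f(1)\ge k$ for all $k$ exists.

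The key steps, in order, are: (1) choose $\Sigma$ with $\gamma(\Sigma)=\gamma(H)$ and fix an embedding of $H$ into $\Sigma$; (2) describe $G_k$ explicitly as a large planar grid $\Gamma_k$ plus a bounded gadget $R$ embedded in $\Sigma$, with $k$ disjoint ``spokes'' from the boundary of $\Gamma_k$ into $R$; (3) prove the packing bound $\vpack_{\mathcal{T}(H)}(G_k)\le 1$ via the additivity of Euler genus over a decomposition into disjoint subgraphs (Euler genus is additive over blocks, and more generally a disjoint union of two subgraphs each of genus $\ge\gamma(H)$ has genus $\ge 2\gamma(H)>\gamma(\Sigma)$), using that any subgraph containing $H$ topologically has Euler genus $\ge\gamma(H)$; (4) prove the cover bound $\vcover_{\mathcal{T}(H)}(G_k)\ge k$ by exhibiting, for any vertex set $X$ with $|X|<k$, a subgraph of $G_k\setminus X$ still containing $H$ as a topological minor — here one uses the $k$ disjoint spokes together with the robustness of a large grid to the removal of few vertices to reroute a subdivision of $H$ around $X$. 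I expect step (3), making the ``all the genus lives in a bounded gadget'' intuition precise — i.e.\ showing that every $H$-topological-minor subgraph of $G_k$ genuinely uses a non-contractible cycle through $R$ rather than somehow exploiting the grid — to be the main obstacle; it requires a careful argument that cycles in the grid part are contractible in $\Sigma$ (true if the grid is attached along a disk), so that any subgraph of positive genus must reach into $R$, and then that two such subgraphs cannot be disjoint because $R$ has bounded size and $\Sigma$ has only genus $\gamma(H)$.
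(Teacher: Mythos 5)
Your key lemma for the packing bound --- the additivity of Euler genus over vertex-disjoint subgraphs, so that a host of Euler genus exactly $\gamma(H)$ cannot contain two disjoint subdivisions of the non-planar $H$ --- is exactly the one the paper uses (via \cite{battle1962}), and the overall strategy of exhibiting hosts $G_k$ with $\vpack_{\mathcal{T}(H)}(G_k)=1$ and $\vcover_{\mathcal{T}(H)}(G_k)\geq k$ is the right one. But your concrete construction cannot work as described. You concentrate all of the genus of $G_k$ in a ``small fixed'' gadget $R$ of bounded size, and you yourself argue in step (3) that every subgraph of $G_k$ containing $H$ as a topological minor must reach into $R$, since everything outside $R$ is drawn in a disk. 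But then $V(R)$ is a \sv-$\mathcal{T}(H)$-cover of size $O(1)$, independent of $k$: deleting the bounded genus part leaves a planar graph, which cannot contain the non-planar $H$ as a topological minor. So $\vcover_{\mathcal{T}(H)}(G_k)$ stays bounded and the family $\{G_k\}$ does not refute the \ep\ property. The $k$ disjoint spokes into $R$ do not help, because a cover is free to ignore the spokes and simply delete all of $R$.

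The fix --- and this is what the paper does --- is to make the non-planarity itself $k$-fold redundant rather than localized: $G_k$ is obtained from $H$ by replacing every vertex $v$ by a grid-with-apices gadget $\Gamma_{\deg(v),k}$ and every edge of $H$ by $k$ ``parallel'' edges joining the corresponding ports. The resulting graph still has Euler genus exactly $\gamma(H)$, so the additivity argument gives $\vpack_{\mathcal{T}(H)}(G_k)=1$ just as in your step (3); but now no bounded region carries the genus. Any set $X$ of fewer than $k$ vertices misses, for each vertex and each edge of $H$, at least one of the $k$ disjoint columns (respectively parallel edges) realizing it, and these surviving pieces can be reassembled, using the robustness of the grids, into a subdivision of $H$ avoiding $X$. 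Your construction would need to be reworked along these lines before your step (4) can be carried out.
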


 \begin{proof}
   Informally, we will construct, for every positive integer $k$, a graph $G_k$ by ``thickening'' the vertices and edges of~$H$. From the non-planarity of $H$ and the way this graph is constructed, we will deduce that $\vpack_{\mathcal{T}(H)}(G_k) = 1$. On the other hand, the connectivity provided by the thickening of $H$ will ensure that the removal of any $k-1$ vertices will leave at least one subdivision of $H$~unaltered.

   For every integers $k>0$ and $d$, we denote by $\Gamma_{d,k}$ the graph
obtained from a grid of width $dk$ and height $d+k-1$ by adding $k$
vertices $a_1, \dots, a_k$ (that we call \emph{apices}) and connecting $a_1$ to the $d$ first
vertices on the first row of the grid (starting from the left), $a_2$ to the $d$ next vertices,
and so on. For every $i \in \intv{0}{d-1}$, the set of vertices at indices
$\{ik+j,\ j\in \intv{0}{k-1}\}$ on the last row of $\Gamma_{d,k}$ is
called the \emph{$i$-th port} of $\Gamma_{d,k}.$ We will refer to the vertex
at index $ik+j$ of the last row as the $j$-th vertex of the $i$-th port.
 See \autoref{fig:gadj} for a drawing of~$\Gamma_{4,3}.$ On this
 drawing, the ports are $U_0, \dots, U_3.$

\newcommand{\vortex}[1]{
  \begin{scope}[#1]
    \foreach \a in {0,90,...,270}{
    \fill[green!70!red, opacity = 0.45,rounded corners, rotate = \a] (35:5.5) arc (35:-35:5.5) -- (-35:4.5) arc (-35:35:4.5) -- cycle;
  }

  \foreach \a in {0,30,...,330}{
    \foreach \i in {1,...,5}{
      \draw[rotate = -30] (\a:\i) node (N{\a}{\i}) {};
    }
  }
  \foreach \a in {45, 165, 285}{
    \draw[rotate = -30] (\a:0.5) node[white node] (A{\a}) {};    
  }
  %
  \foreach \a in {0,30,...,330}{
    \foreach \i in {2,...,5}{
      \pgfmathparse{int(\i-1)}
      \draw (N{\a}{\pgfmathresult}) -- (N{\a}{\i});
    }
  }
    \foreach \i in {1,...,5}{
    \foreach \a in {30,60,...,330}{
      \pgfmathparse{int(\a-30)}
      \draw (N{\pgfmathresult}{\i}) to[bend right = 12.5] (N{\a}{\i});
    }
  }
  \foreach \a in {0,30,...,90}{
    \draw (A{45}) -- (N{\a}{1});
  }
  \foreach \a in {120,150,...,210}{
    \draw (A{165}) -- (N{\a}{1});
  }
  \foreach \a in {240,270,...,330}{
    \draw (A{285}) -- (N{\a}{1});
  }
\end{scope}
}

\begin{figure}[ht]
  \centering
  \begin{tikzpicture}[every node/.style = black node]
    \vortex{}
    \draw (0:6) node[normal] {$U_0$}
    (90:6) node[normal] {$U_1$}
    (180:6) node[normal] {$U_2$}
    (-90:6) node[normal] {$U_3$};
    \draw (A{45}) +(45:-.25) node[normal] {$a_1$};
    \draw (A{165}) +(165:-.25) node[normal] {$a_2$};
    \draw (A{285}) +(285:-.25) node[normal] {$a_3$};
  \end{tikzpicture}
  \caption{The gadget $\Gamma_{4,3}$ used in \autoref{nega}.}
  \label{fig:gadj}
\end{figure}
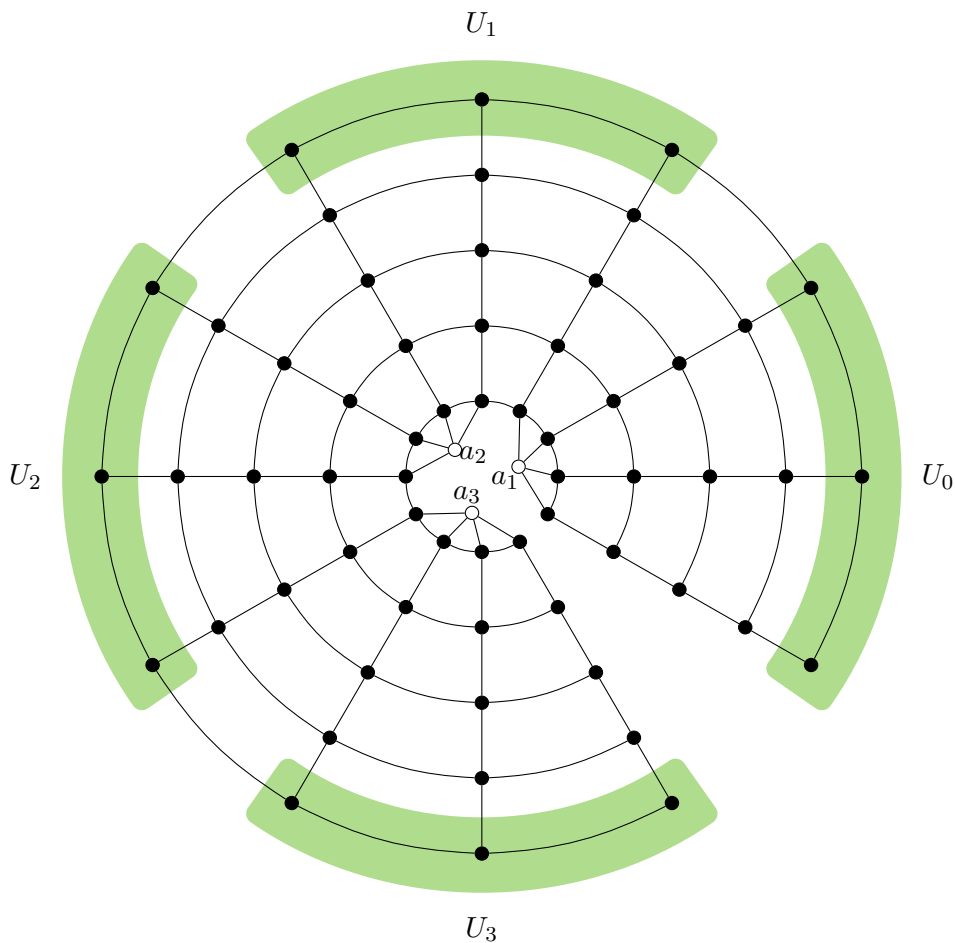

Let $k$ be a positive integer.
For every vertex $v$ of $H$, we arbitrarily choose an ordering of its
neighbors and we denote by $\sigma_v(u)$ the rank of $u$ in this ordering (ranging from 0 to $\deg(v)-1$), for every neighbor $u$ of~$v.$
We also let $F_v$ be a copy of the graph~$\Gamma_{\deg(v),k}.$

The graph $G_k$ can be constructed from the disjoint union of the
graphs of $\{F_v,\ v \in V(H)\}$ by adding, for every pair $u,v$ of adjacent
vertices, the edge connecting the $i$-th vertex of the $\sigma_v(u)$-th
port of $F_v$ to the $i$-th vertex of the $\sigma_u(v)$-th port of $F_u$,
for every $i \in \intv{0}{k-1}.$ Informally, we connect the vertices
of the $\sigma_v(u)$-th port of $F_v$ to the vertices
of the $\sigma_u(v)$-th port of $F_u$ using ``parallel'' edges.
\autoref{fig:k5gk} depicts the graph $G_k$ when $G = K_5$ and $k=3$. This graph contains a subdivision of $K_5$ but not two vertex-disjoint ones, and the removal of any two vertices leaves one subdivision of $K_5$ unaltered.

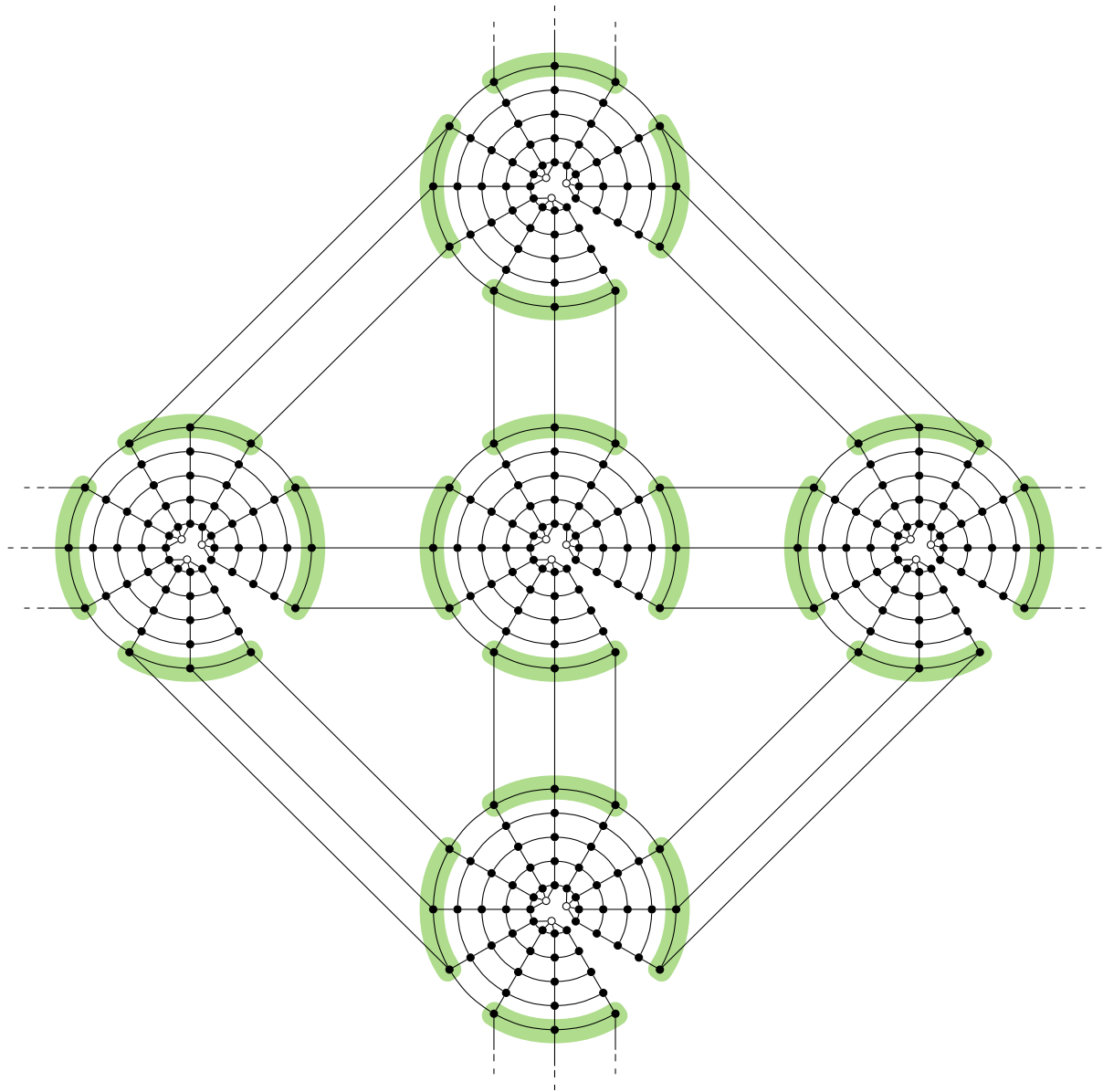
\begin{figure}[h!]
  \centering
  \begin{tikzpicture}[scale = 0.35,every node/.style = {black node, minimum size = 3pt}, white node/.style={draw, circle, fill = white, minimum size = 3pt, inner sep = 0pt}]
  \vortex{}
  \vortex{xshift = 15cm}
  \vortex{xshift = -15cm}
  \vortex{yshift = 15cm}
  \vortex{yshift = -15cm}
  
  \foreach \a in {-30, 0, 30}{
    \pgfmathparse{int(180-\a)}
    \draw (\pgfmathresult:5) -- ([xshift = -15cm]\a:5);
    \draw[xshift = 15cm] (\pgfmathresult:5) -- ([xshift = -15cm]\a:5);
  }
  \foreach \a in {60, 90, 120}{
    \draw (\a:5) -- ([yshift = 15cm]-\a:5);
    \draw[yshift = -15cm] (\a:5) -- ([yshift = 15cm]-\a:5);
  }
  \draw ([xshift = -15cm]60:5) -- ([yshift = 15cm]210:5);
  \draw ([xshift = -15cm]90:5) -- ([yshift = 15cm]180:5);
  \draw ([xshift = -15cm]120:5) -- ([yshift = 15cm]150:5);
  \draw ([yshift = -15cm]30:5) -- ([xshift = 15cm]240:5);
  \draw ([yshift = -15cm]0:5) -- ([xshift = 15cm]270:5);
  \draw ([yshift = -15cm]-30:5) -- ([xshift = 15cm]300:5);
  \draw ([yshift = -15cm]210:5) -- ([xshift = -15cm]240:5);
  \draw ([yshift = -15cm]180:5) -- ([xshift = -15cm]270:5);
  \draw ([yshift = -15cm]150:5) -- ([xshift = -15cm]300:5);
  \draw ([xshift = 15cm]60:5) -- ([yshift = 15cm]30:5);
  \draw ([xshift = 15cm]90:5) -- ([yshift = 15cm]0:5);
  \draw ([xshift = 15cm]120:5) -- ([yshift = 15cm]-30:5);
  %
  \foreach \a in {60, 90, 120}{
    \draw ([yshift = 15cm]\a:5) -- +(0,1.5) node[normal] (a){};
    \draw[dashed] (a) -- +(0,1);
  }
  \foreach \a in {240, 270, 300}{
    \draw ([yshift = -15cm]\a:5) -- +(0,-1.5) node[normal] (a){};
    \draw[dashed] (a) -- +(0,-1);
  }
  \foreach \a in {30, 0, -30}{
    \draw ([xshift = 15cm]\a:5) -- +(1.5,0) node[normal] (a){};
    \draw[dashed] (a) -- +(1,0);
  }
  \foreach \a in {150, 180, 210}{
    \draw ([xshift = -15cm]\a:5) -- +(-1.5,0) node[normal] (a){};
    \draw[dashed] (a) -- +(-1,0);
  }
\end{tikzpicture}
  \caption{The ``thickening'' of $K_5$ for $k=3$. Edges with dashed ends are connected to the aligned edges at the opposite side of the figure.}
  \label{fig:k5gk}
\end{figure}

It can be easily checked that the Euler genus of $G_k$ and $H$ are equal.
As $H$ is not planar, the Euler genus of the disjoint union of two copies of $H$ is
larger than the one of $H$ (see \cite{battle1962}) and
we get that~$\vpack_{\mathcal{T}(H)}(G) <2$. On the other hand, our
construction ensures that $\vpack_{\mathcal{T}(H)}(G) \geq 1.$

Let us now show that for every subset $X \subseteq V(G_k)$ with
$|X|<k$ we have $\vpack_{\mathcal{T}(H)}(G\setminus X) \geq 1.$
This would complete the proof, since $\{G_k,\ k \in \N_{\geq 1}\}$
would be an infinite family of graphs that have no
\sv-$\mathcal{T}(H)$-packings of size 2 but where a minimum
\sv-$\mathcal{T}(H)$-cover can be arbitrarily large.

Let $u$ and $v$ be two adjacent vertices of $H$, and let $d = \deg(v).$
For every $i \in
\intv{0}{k-1}$, let $C_i$ denote the vertices that are
\begin{itemize}
\item either in the same column of $F_u$ as the $i$-th vertex of the
  $\sigma_u(v)$-th port of $F_u$;
\item or in the same column of $F_v$ as the $i$-th vertex of the
  $\sigma_v(u)$-th port of $F_v.$
\end{itemize}
The family $\{C_i,\ i \in \intv{1}{k}\}$ contains $k$ vertex disjoint
elements, therefore at least one of them does not contain any vertex
from $X$ (as $|X|<k$).
Therefore, for every edge $\{u,v\}$ of $H$ there is an edge $f(\{u,v\})$ between a
vertex $x$ of the $\sigma_u(v)$-th port of $F_u$ and a vertex $y$ of
the $\sigma_v(u)$-th port of $F_v$ such that no vertex of the same column
as $x$ in $F_u$ (respectively $y$ in $F_v$) belong to $X.$
Using the same argument we can show that for every vertex $v\in V(H)$
there is an apex $a$ such that the columns of $F_v$ adjacent to $a$ are
free of vertices of~$X.$ Also we know that at least $d$ rows do
not contain vertices from $X$, as the grid of $F_v$ has height~$d+k-1.$
Therefore $F_v$ contains as a subgraph a grid $S_v$ such that:
\begin{enumerate}
\item some apex $a$ is adjacent to $d$ vertices of the first row of $S_v$;
\item for every edge $\{u,v\} \in E(H)$, the edge $f(\{u,v\})$ of $G_k$
  shares one vertex of the last row of $S_v$;
\item no vertex of the last row of $S_v$ belongs to two edges
  $f(\{u,v\})$ and $f(\{u',v\}) $ for some distinct neighbors $u,u'$
  of $v$;
\item $S_v$ has height and width at least~$d$;
\item $S_v$ does not contain any vertex of~$X.$
\end{enumerate}

We deduce that $F_v\setminus X$ contains $d$ paths $P_0^v, \dots,
P_{d-1}$ that have only the apex $a$ as common vertex and such that
$P_i$ connects $a$ to an endpoint of $f(\{v,u_i\})$, where $u_i$ is
the neighbor of $v$ of rank $i$, for every $i \in \intv{0}{d-1}.$
It is now easy to see that the graph
\[
G_k\left [ \bigcup_{v\in V(H)} \bigcup_{i=0}^{\deg_H(v)-1} V(P_i^v)\right]
\]
contains a subdivision of $H$ that does not contain any vertex
of~$X.$ This concludes the proof.
 \end{proof}

The proof of \autoref{nega} can be adapted to the setting of the
edge-\ep\ property under the additional requirement that the pattern is
subcubic.
\begin{lemma}\label{negae}
  Let $H$ be a subcubic non-planar graph. Then $\mathcal{T}(H)$ does not have the \se-\ep\ property.
\end{lemma}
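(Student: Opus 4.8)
The plan is to re-use, essentially verbatim, the family of graphs $\{G_k\}_{k\geq 1}$ constructed in the proof of \autoref{nega} by ``thickening'' the vertices and edges of $H$ with copies of the gadgets $\Gamma_{d,k}$ joined by bundles of parallel edges, and to show that this same family also witnesses the failure of the \se-\ep\ property: I will prove that $\epack_{\mathcal{T}(H)}(G_k)=1$ for every $k$ while $\ecover_{\mathcal{T}(H)}(G_k)\geq k$, so that no gap function can exist. After replacing $H$ by its suppression $H'$ (which satisfies $\mathcal{T}(H')=\mathcal{T}(H)$), I may assume that $H$ has no vertex of degree $2$; the representative case is then $H$ cubic and non-planar, the degree-$1$ case being handled by the same arguments applied to pendant paths.

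The covering bound, together with the easy half of the packing bound, comes essentially for free from \autoref{nega} and does \emph{not} use subcubicity. Recall that the proof of \autoref{nega} shows that $G_k\setminus X$ still contains a subdivision of $H$ for \emph{every} $X\subseteq V(G_k)$ with $|X|<k$. Given $Y\subseteq E(G_k)$ with $|Y|<k$, pick one endpoint of each edge of $Y$ to form a set $X$; then $|X|\leq|Y|<k$ and $G_k\setminus X$ is a subgraph of $G_k\setminus Y$, so $G_k\setminus Y$ contains a subdivision of $H$. Hence no set of fewer than $k$ edges is an \se-$\mathcal{T}(H)$-cover, i.e.\ $\ecover_{\mathcal{T}(H)}(G_k)\geq k$; and taking $Y=\emptyset$ gives $\epack_{\mathcal{T}(H)}(G_k)\geq 1$. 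Thus it remains only to prove $\epack_{\mathcal{T}(H)}(G_k)\leq 1$, and it is precisely there that subcubicity is used.

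The key structural observation is that, since $\deg_H(v)\leq 3$ for every $v$, each gadget $\Gamma_{\deg_H(v),k}$ — and hence the whole of $G_k$ — has maximum degree at most $4$ (each apex has degree $\deg_H(v)\leq 3$, and each grid vertex has at most three grid-neighbours plus at most one incident apex or inter-gadget edge). Consequently, if $S_1,S_2$ were two edge-disjoint subdivisions of $H$ in $G_k$, then at any common vertex $w$ we would have $\deg_{S_1}(w)+\deg_{S_2}(w)\leq 4$; as every vertex of a subdivision of a cubic graph has degree $2$ or $3$, this forces $\deg_{S_1}(w)=\deg_{S_2}(w)=2$, i.e.\ $w$ is an internal subdivision vertex of both. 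From here I would argue, using the grid structure of the gadgets — a degree-$3$ branch vertex together with three internally disjoint paths to the three ports of its gadget already occupies a full transversal column bundle, and the bundles used inside a gadget by two such subdivisions must be disjoint — that $S_1$ and $S_2$ can in fact be taken vertex-disjoint in $G_k$. Then $S_1\sqcup S_2$ is the disjoint union of two subdivisions of the non-planar graph $H$, so by additivity of the Euler genus over connected components \cite{battle1962}, exactly as in \autoref{nega} (where it is checked that $\gamma(G_k)=\gamma(H)$), we obtain $\gamma(G_k)\geq\gamma(S_1\sqcup S_2)=2\gamma(H)>\gamma(H)=\gamma(G_k)$, a contradiction. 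The main obstacle is this last reduction from ``edge-disjoint'' to ``vertex-disjoint'': the maximum-degree-$4$ bound (the place where subcubicity of $H$ is essential) confines the overlap of two edge-disjoint subdivisions to internal subdivision vertices, but removing that overlap still requires a careful analysis of how branch paths are forced to thread the grids of the $\Gamma$-gadgets, analogous to but more delicate than the corresponding analysis in \autoref{nega}. An appealing alternative would be to bound $\gamma(S_1\cup S_2)$ directly by an amalgamation inequality for genus over the shared degree-$2$ vertices, but controlling the number of such shared vertices again seems to require the gadget structure.
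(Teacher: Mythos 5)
There is a genuine gap, and it sits exactly where you flag it: the reduction from edge-disjoint to vertex-disjoint subdivisions inside the original graph $G_k$. You keep $G_k$ as built in \autoref{nega}, whose grid vertices have degree $4$, observe that two edge-disjoint subdivisions of a cubic $H$ can only meet at vertices where each has degree $2$, and then assert that ``from here'' the two subdivisions can be made vertex-disjoint. That step is the entire content of the lemma and is not carried out; it would require an uncrossing/rerouting analysis at potentially many shared degree-$4$ grid vertices, and it is not clear it can be done while keeping both subgraphs subdivisions of $H$. Moreover, the fallback of applying the genus argument directly to $S_1\cup S_2$ does not work: additivity of the Euler genus \cite{battle1962} applies to \emph{disjoint} unions (or unions meeting in cut vertices), whereas an amalgamation of two graphs over several shared vertices can have Euler genus strictly smaller than the sum, so $\gamma(S_1\cup S_2)\geq 2\gamma(H)$ cannot be inferred from edge-disjointness alone. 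Your covering bound (transferring the vertex-cover lower bound of \autoref{nega} by picking one endpoint per edge) is fine, but the packing upper bound $\epack_{\mathcal{T}(H)}(G_k)\leq 1$ is left unproved.

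The paper sidesteps this difficulty with a small but essential modification of the construction rather than a disjointness reduction: every vertex of degree at least $4$ in $G_k$ is replaced by a subcubic tree whose leaves are its former neighbours, yielding a \emph{subcubic} host $G_k'$ with the same Euler genus. In a subcubic graph an \se-$\mathcal{T}(H)$-packing is automatically an \sv-$\mathcal{T}(H)$-packing (two edge-disjoint subgraphs each of degree at least $2$ at a common vertex would force degree at least $4$ there), so the genus argument of \autoref{nega} applies verbatim and the covering bound carries over unchanged. If you want to rescue your write-up, the cleanest fix is to adopt this modification of $G_k$ rather than to attempt the uncrossing argument in the degree-$4$ graph.
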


\begin{proof}
  Let $k$ be a positive integer.
We use the same construction of $G_k$ as in the proof of
\autoref{nega} with the following modifications: each vertex $v$ of
degree $d\geq 4$ of $G_k$ is replaced by a subcubic tree, the leaves
of which are the neighbors of~$v.$ Let us call $G'_k$ the graph we
obtain.
It is not hard to see that the genus of $G'_k$ and $G_k$ are
equal. Moreover, as $G_k'$ is subcubic, every
\se-$\mathcal{T}(H)$-packing is also an \sv-$\mathcal{T}(H)$-packing. 
We then obtain as previously that $\epack_{\mathcal{T}(H)}(G'_k) =
1.$ The argument to show that $\ecover_{\mathcal{T}(H)}(G'_k) \geq k$
is identical to that used in the proof of \autoref{nega}.
\end{proof}

In fact, \autoref{nega} and \autoref{negae} can be used to prove that
more general classes do not have the \ep\ property, as follows. As we
will see in \autoref{neg-min} and \autoref{neg-min-e}, the
conditions of \autoref{famil} already encompass several well-studied classes.
\begin{lemma}\label{famil}
  Let $\sx\in\{\sv, \se\}$, let $H$ be a non-planar graph and let $\mathcal{H}$ be a class of graphs such
  that:
  \begin{enumerate}[(i)]
  \item $\mathcal{T}(H) \subseteq \mathcal{H}$; and\label{subH}
  \item $H$ is graph of minimum Euler genus in~$\mathcal{H}$;\label{minH}
  \item if $\sx=\se$, then $H$ is subcubic.
  \end{enumerate}
    Then $\mathcal{H}$ does not have the \sx-\ep\ property.
\end{lemma}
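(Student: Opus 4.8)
The plan is to leverage the gadget construction $G_k$ (respectively $G_k'$) already built in the proofs of \autoref{nega} and \autoref{negae}, and to show that the infinite family $\{G_k\}_{k\geq 1}$ (resp.\ $\{G_k'\}_{k\geq 1}$) also witnesses the failure of the \sx-\ep\ property for the larger class $\mathcal{H}$. Recall what those proofs established: $\vpack_{\mathcal{T}(H)}(G_k)=1$ while $\vcover_{\mathcal{T}(H)}(G_k)\geq k$, and the key structural fact that the Euler genus of $G_k$ equals that of $H$, whereas the disjoint union of two copies of $H$ has strictly larger Euler genus (by \cite{battle1962}). So the first step is to transfer the lower bound $\vcover\geq k$ from $\mathcal{T}(H)$ to $\mathcal{H}$: since $\mathcal{T}(H)\subseteq\mathcal{H}$ by hypothesis \itemref{subH}, any $\sx$-$\mathcal{H}$-cover of $G_k$ is in particular an $\sx$-$\mathcal{T}(H)$-cover, hence has size at least $k$ (resp.\ using $G_k'$ for the edge version, invoking hypothesis (iii) so that the construction of $G_k'$ applies and $H$ is subcubic). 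Thus $\xcover_{\mathcal{H}}(G_k)\geq k$, which is unbounded.

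The second step is to bound the packing number from above: I claim $\xpack_{\mathcal{H}}(G_k)\leq 1$ (resp.\ for $G_k'$). Here is where hypothesis \itemref{minH} enters. Suppose for contradiction that $G_k$ contains two $\sx$-disjoint subgraphs $M_1, M_2$, each isomorphic to a member of $\mathcal{H}$. Since $M_1$ and $M_2$ are vertex-disjoint (in the edge case, since $G_k'$ is subcubic, an edge-disjoint packing is also vertex-disjoint, exactly as argued in \autoref{negae}), their union $M_1\cup M_2$ is a subgraph of $G_k$ which is the disjoint union of two graphs in $\mathcal{H}$. By hypothesis \itemref{minH}, each $M_i$ has Euler genus at least $\gamma(H)$; and since $H$ is non-planar, $\gamma(H)\geq 1$, so by \cite{battle1962} the Euler genus of $M_1\sqcup M_2$ is at least $\gamma(M_1)+\gamma(M_2)\geq 2\gamma(H)>\gamma(H)=\gamma(G_k)$. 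But a subgraph cannot have larger Euler genus than the ambient graph, a contradiction. Hence $\xpack_{\mathcal{H}}(G_k)\leq 1$, and in fact $=1$ since $G_k$ itself contains a subdivision of $H$, which lies in $\mathcal{T}(H)\subseteq\mathcal{H}$.

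Putting the two steps together: for every $k\geq 1$ we have a graph (namely $G_k$, or $G_k'$ in the edge case) with $\xpack_{\mathcal{H}}\leq 1$ and $\xcover_{\mathcal{H}}\geq k$. Therefore no function $f\colon\N\to\N$ can satisfy $\xcover_{\mathcal{H}}(G)\leq f(\xpack_{\mathcal{H}}(G))$ for all $G\in\mathcal{H}$—wait, more precisely for all finite graphs $G$—since that would force $k\leq f(1)$ for all $k$. Hence $\mathcal{H}$ does not have the \sx-\ep\ property, as claimed. The main obstacle, and the only place requiring genuine care, is the packing upper bound: one must check that the additivity of Euler genus over disjoint unions (Battle--Harary--Kodama--Youngs) combined with hypothesis \itemref{minH} really does rule out a packing of size $2$, and in the edge version that the subcubicity of $G_k'$ legitimately converts edge-disjointness into vertex-disjointness so that the same genus argument applies. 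Everything else is a direct citation of the already-established properties of the gadgets.
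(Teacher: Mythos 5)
Your proposal is correct and follows essentially the same route as the paper's proof: the same gadgets $G_k$ and $G_k'$, the cover lower bound transferred via $\mathcal{T}(H)\subseteq\mathcal{H}$, and the packing bound obtained from the additivity of Euler genus over disjoint unions together with the minimality hypothesis (and subcubicity to force vertex-disjointness in the edge case). You even make explicit the point, implicit in the paper, that non-planarity of $H$ is what turns $p\cdot\gamma(H)\geq 2\gamma(H)$ into a strict contradiction with $\gamma(J_k)=\gamma(H)$.
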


\begin{proof}
  Let $k$ be a positive integer.
  We again consider the graphs $G_k$ and $G_k'$ constructed from $H$ as in the proofs of
  \autoref{nega} and \autoref{negae}. Let $J_k$ be $G_k$ if $\sx =\sv$ and $J_k = G_k'$ if $\sx = \se.$ Let us show that $\vpack_{\mathcal{H}}(J_k) = 1.$ For this, let us
assume that there is an \sx-$\mathcal{H}$-packing $F_1, \dots, F_p$, for some $p\in
\N_{\geq 2}$ in $J_k.$ It is crucial to note that in both the cases
$\sx =\sv$ and $\sx =\se$, the subgraphs $F_1, \dots, F_p$ are
vertex-disjoint. In fact, when $\sx =\sv$, this follows from the definition of
a \sv-$\mathcal{H}$-packing, and if $\sx =\se$ it is because $G_k'$
is subcubic. Recall that $\gamma(G)$ denotes the Euler genus of $G$,
i.e.\  the minimum Euler genus of a surface where $G$ can be embedded, and that our construction ensures that $\gamma(J_k) = \gamma(H)$ (see the proofs of \autoref{nega} and \autoref{negae}).
Then we have:
\begin{align*}
  \genus(J_k) &\geq \genus(F_1 \cup
\dots \cup F_p) &\\
& = \sum_{i=1}^p \genus(F_i) & \text{(see \cite{battle1962})}\\
& \geq p \cdot \genus(H)&\text{(see below)}\\
\genus(J_k)& > \genus(H) & \text{(as $p\geq 2$)}.  
\end{align*}
The inequality $\genus(J_k) \geq p \cdot \genus(H)$ come from the
requirements \itemref{subH}-\itemref{minH}, which imply $\gamma(F_i)
\geq \gamma(H)$ for every $i\in \intv{1}{p}$. 
The last inequality contradicts the fact that $\gamma(J_k) = \gamma(H)$.
Therefore~$\vpack_{\mathcal{H}}(J_k) = 1.$
On the other hand, 
\begin{align*}
\vcover_{\mathcal{H}}(J_k) \geq \vcover_{\mathcal{T}(H)}(J_k)\geq k.
\end{align*}
The last inequality can be found in the proof of \autoref{nega} or
\autoref{negae} (depending if $\sx=\sv$ or $\sx = \se$).
This concludes the proof.
\end{proof}

 \begin{corollary}\label{neg-min}
   For every non-planar graph $H$, none of $\mathcal{I}(H)$ and $\mathcal{M}(H)$ have the \sv-\ep\ property.
 \end{corollary}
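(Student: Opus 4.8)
The plan is to derive \autoref{neg-min} directly from \autoref{famil}. I would apply that lemma with $\sx=\sv$, once with $\mathcal{H}=\mathcal{M}(H)$ and once with $\mathcal{H}=\mathcal{I}(H)$. Note that the third hypothesis of \autoref{famil}, which requires $H$ to be subcubic only when $\sx=\se$, is vacuously satisfied here, so the sole assumption is the non-planarity of $H$. It therefore remains to verify conditions \itemref{subH} and \itemref{minH} of \autoref{famil} for both classes.

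Condition \itemref{subH}, namely $\mathcal{T}(H)\subseteq\mathcal{M}(H)$ and $\mathcal{T}(H)\subseteq\mathcal{I}(H)$, is immediate: a graph that contains a subdivision of $H$ as a subgraph contains $H$ as a minor (contract each subdivided edge back to a single edge) and as an immersion (the subdivided edges yield pairwise internally disjoint, hence edge-disjoint, paths between the branch vertices). In the proof of \autoref{famil} this inclusion is exactly what carries the lower bound $\vcover_{\mathcal{T}(H)}(G_k)\geq k$ from \autoref{nega} over to $\vcover_{\mathcal{H}}(G_k)$, so the covers still grow without bound.

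Condition \itemref{minH} asks that $H$ attains the minimum Euler genus inside $\mathcal{H}$. For $\mathcal{H}=\mathcal{M}(H)$ this is the standard minor-monotonicity of the Euler genus: deleting a vertex, deleting an edge, or contracting an edge never increases $\genus$, so $H\lminor G$ forces $\genus(H)\leq\genus(G)$. For $\mathcal{H}=\mathcal{I}(H)$ one needs the corresponding statement for immersions, i.e.\ that $H\limm G$ implies $\genus(H)\leq\genus(G)$; in the proof of \autoref{famil} this is precisely what bounds the packing, through $\genus(G_k)\geq\sum_{i}\genus(F_i)\geq p\cdot\genus(H)$ (additivity of the Euler genus over vertex-disjoint subgraphs, \cite{battle1962}) and hence $\vpack_{\mathcal{H}}(G_k)=1$ whenever $H$ is non-planar. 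This last point is the step I expect to be the main obstacle: the minor and topological-minor cases are routine, but the behaviour of the Euler genus under immersions is genuinely more delicate, so I would either prove the required genus inequality for immersions carefully, or — more safely — argue directly that every subgraph of the specific construction $G_k$ of \autoref{nega} that has $H$ as an immersion still has Euler genus at least $\genus(H)$. Once condition \itemref{minH} is secured, \autoref{famil} yields at once that neither $\mathcal{M}(H)$ nor $\mathcal{I}(H)$ has the \sv-\ep\ property.
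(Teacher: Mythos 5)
Your overall route is exactly the one the paper intends: \autoref{neg-min} is stated as a corollary of \autoref{famil}, and for $\mathcal{M}(H)$ your verification is complete and correct — $\mathcal{T}(H)\subseteq\mathcal{M}(H)$ and the Euler genus is minor-monotone, so conditions \itemref{subH} and \itemref{minH} hold and \autoref{famil} applies.

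For $\mathcal{I}(H)$, however, the step you flag as ``the main obstacle'' is not merely delicate: it is false. The Euler genus is not monotone under immersions. Take any drawing of $H$ in the plane and replace each crossing by a new vertex of degree $4$; the resulting graph $G_0$ is planar, and $H$ is an immersion of $G_0$ (each pair of crossing edges becomes a pair of edge-disjoint paths of length two through the corresponding crossing vertex, which can be lifted back). Hence $G_0\in\mathcal{I}(H)$ with $\genus(G_0)=0<\genus(H)$, so $H$ is \emph{not} a graph of minimum Euler genus in $\mathcal{I}(H)$, and condition \itemref{minH} of \autoref{famil} fails for this class. Your first option — proving the genus inequality for immersions — is therefore a dead end. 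Your fallback, arguing directly inside the construction $G_k$ of \autoref{nega}, is the only possible repair, but it also breaks as stated: the gadgets $\Gamma_{d,k}$ contain large grids, and a sufficiently large grid contains $K_5$ (and, more generally, any fixed graph of maximum degree at most $4$) as an immersion, obtained by realizing an orthogonal drawing whose crossings sit at degree-$4$ grid vertices. So for $H=K_5$, say, $G_k$ contains many vertex-disjoint subgraphs immersing $H$, and $\vpack_{\mathcal{I}(H)}(G_k)$ is far from $1$. The immersion half of the corollary therefore does not follow from \autoref{famil} together with the construction of \autoref{nega} by the argument you give (nor by the paper's implicit one), and a genuinely different argument would be needed for that half; only the $\mathcal{M}(H)$ half is established by this approach.
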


 \begin{corollary}\label{neg-min-e}
   For every subcubic non-planar graph $H$, none of $\mathcal{I}(H)$ and $\mathcal{M}(H)$ have the \se-\ep\ property.
 \end{corollary}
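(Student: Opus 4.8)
The plan is to derive \autoref{neg-min-e} as a direct consequence of \autoref{famil} by checking that, for a subcubic non-planar graph $H$, both $\mathcal{H}=\mathcal{I}(H)$ and $\mathcal{H}=\mathcal{M}(H)$ satisfy the three conditions \itemref{subH}, \itemref{minH}, and (iii) of that lemma, with $\sx=\se$.

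First I would verify condition \itemref{subH}, i.e.\ $\mathcal{T}(H)\subseteq\mathcal{I}(H)$ and $\mathcal{T}(H)\subseteq\mathcal{M}(H)$. This is immediate from the definitions of the partial orders given in the ``Partial orders on graphs'' paragraph: if $H$ is a topological minor of $G$ then $H$ is obtained from a subgraph of $G$ by contracting edges with an endpoint of degree at most $2$, hence $H$ is a minor of $G$; and such a subdivision-style contraction is a special case of the edge-liftings used to form immersions (suppressing a degree-$2$ vertex is exactly lifting its two incident edges), so $H$ is also an immersion of $G$. Thus any graph containing $H$ as a topological minor contains it as a minor and as an immersion.

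Next I would check condition \itemref{minH}: that $H$ is a graph of minimum Euler genus in $\mathcal{I}(H)$ and in $\mathcal{M}(H)$. For this it suffices to recall the standard fact that both the minor relation and the immersion relation are monotone with respect to Euler genus, that is, if $H'\lleq G$ (for $\lleq$ either $\lminor$ or $\limm$) then $\gamma(H')\le\gamma(G)$: contracting an edge or deleting a vertex/edge cannot increase the genus, and likewise lifting a pair of edges sharing an endpoint can be realised on any surface where $G$ embeds. Consequently every $G\in\mathcal{I}(H)$ (resp.\ $\mathcal{M}(H)$) satisfies $\gamma(G)\ge\gamma(H)$, so $H$ itself realises the minimum Euler genus in the class, and since $H$ is non-planar this minimum is positive. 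Condition (iii) holds by hypothesis, as $H$ is subcubic. Having checked all hypotheses, \autoref{famil} applied with $\sx=\se$ yields that neither $\mathcal{I}(H)$ nor $\mathcal{M}(H)$ has the $\se$-\ep\ property, which is exactly the statement.

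I do not expect a genuine obstacle here: the corollary is purely a matter of unwinding definitions and invoking the genus-monotonicity of the containment relations, both of which are routine. The only point that deserves a word of care is that \autoref{famil} in the $\sx=\se$ case uses the subcubic hypothesis on $H$ in an essential way (it is what forces every $\se$-$\mathcal{H}$-packing to be vertex-disjoint in the gadget $G_k'$), so one must not drop that assumption — this is precisely why \autoref{neg-min-e}, unlike \autoref{neg-min}, is stated only for subcubic patterns.
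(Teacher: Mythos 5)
Your overall route is the one the paper intends: \autoref{neg-min-e} is meant to follow from \autoref{famil} by checking its three hypotheses, and your verification of \itemref{subH} (suppressing a degree-$2$ vertex is a lift, so $\mathcal{T}(H)\subseteq\mathcal{I}(H)$, and trivially $\mathcal{T}(H)\subseteq\mathcal{M}(H)$), of condition (iii), and of \itemref{minH} for $\mathcal{M}(H)$ (Euler genus is minor-monotone) are all correct.

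There is, however, a genuine flaw in your verification of \itemref{minH} for $\mathcal{I}(H)$: Euler genus is \emph{not} immersion-monotone, so the claim that ``lifting a pair of edges sharing an endpoint can be realised on any surface where $G$ embeds'' is false. The problem is that the lifted edge must pass through the rotation at the middle vertex and may be forced to cross other edges there. Concretely, take a drawing of $K_{3,3}$ in the plane with a single crossing and planarize it by placing a new vertex $c$ at the crossing; the resulting graph $G$ is planar, yet $K_{3,3}$ immerses in $G$ (the two crossing edges become the edge-disjoint paths $a_1\text{-}c\text{-}b_1$ and $a_2\text{-}c\text{-}b_2$, all other edges are used directly). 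Hence $\mathcal{I}(K_{3,3})$ contains a planar graph and $K_{3,3}$ is \emph{not} of minimum Euler genus in $\mathcal{I}(K_{3,3})$, so \autoref{famil} cannot be invoked as literally stated. The conclusion is nonetheless salvageable, because the proof of \autoref{famil} only needs $\gamma(F_i)\geq\gamma(H)$ for the packing members $F_i$ that are \emph{subgraphs of the gadget} $J_k=G_k'$, and $G_k'$ is subcubic: in a subcubic graph a vertex cannot be internal to two edge-disjoint paths (that would require four incident edges), so any immersion of $H$ in a subgraph of $G_k'$ is realized by internally vertex-disjoint paths, i.e.\ is a topological minor, and then $\gamma(F_i)\geq\gamma(H)$ does hold. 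You should make this subcubicity argument explicit rather than appeal to a general genus-monotonicity of immersions; note that the same caveat applies to the paper's own statement of hypothesis \itemref{minH}, which is phrased over all of $\mathcal{H}$ rather than over the subgraphs of $J_k$ actually used in the proof.
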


 \autoref{neg-min-e} can be strengthened by dropping the degree
 condition on $H$ when considering minor models of $H$, as follows.

 \begin{lemma}\label{neg-min-e-plus}
   For every non-planar graph $H$, $\mathcal{M}(H)$ does not have the \se-\ep\ property.
 \end{lemma}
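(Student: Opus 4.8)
The plan is to recycle, almost verbatim, the construction behind \autoref{negae}, while observing that for \emph{minor} models the subcubicity hypothesis on $H$ is no longer needed (so that this statement genuinely strengthens the minor part of \autoref{neg-min-e}). Fix a non-planar graph $H$ and, for every integer $k\geq 1$, let $J_k$ be the subcubic graph $G_k'$ produced in the proof of \autoref{negae}: one starts from the ``thickening'' $G_k$ of \autoref{nega} and then replaces every vertex of $G_k$ of degree at least $4$ by a subcubic tree whose leaves are the former neighbours of that vertex. Two properties of $J_k$ are inherited directly from the proofs of \autoref{nega} and \autoref{negae}: $J_k$ is subcubic, and $\gamma(J_k)=\gamma(G_k)=\gamma(H)$ (the tree-expansion changes no surface embedding, and $\gamma(G_k)=\gamma(H)$ is checked in \autoref{nega}). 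Finally, contracting the expansion trees and the gadgets $F_v$ turns $J_k$ into a subdivision of $H$, so $J_k$ contains $H$ as a minor; in particular $\epack_{\mathcal{M}(H)}(J_k)\geq 1$. Note that it is exactly here that we use minors rather than topological minors: a subcubic graph cannot contain a subdivision of a non-subcubic $H$, but it can contain $H$ as a minor.

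First I would show that $\epack_{\mathcal{M}(H)}(J_k)=1$, by the argument of \autoref{famil}. Since $J_k$ is subcubic, the members $F_1,\dots,F_p$ of any \se-$\mathcal{M}(H)$-packing of $J_k$ are pairwise vertex-disjoint. The Euler genus is minor-monotone, so $\gamma(F_i)\geq\gamma(H)\geq 1$ for every $i$; by additivity of the Euler genus over disjoint unions \cite{battle1962}, a packing of size $p$ would give $\gamma(J_k)\geq\gamma(F_1\cup\dots\cup F_p)=\sum_{i=1}^{p}\gamma(F_i)\geq p\,\gamma(H)$, which is incompatible with $\gamma(J_k)=\gamma(H)$ as soon as $p\geq 2$.

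Next I would show $\ecover_{\mathcal{M}(H)}(J_k)\geq k$, i.e.\ that deleting any set $X$ of fewer than $k$ edges of $J_k$ still leaves $H$ as a minor; together with the previous step this makes $\{J_k\}_{k\geq 1}$ an infinite family with $\epack_{\mathcal{M}(H)}(J_k)=1$ and $\ecover_{\mathcal{M}(H)}(J_k)$ unbounded, as required. This is the argument already carried out in the proofs of \autoref{nega} and \autoref{negae}, the only change being that, since $J_k$ is subcubic while $H$ need not be, the surviving object has to be read as a \emph{minor model} of $H$ rather than as a subdivision: for each vertex $v$ of $H$ there are $k$ pairwise vertex-disjoint candidate ``column pairs'' in the relevant gadgets, so at least one of them avoids $X$, and by gluing together, for each $v$, a connected surviving piece of (the expansion of) the gadget $F_v$ -- the branch set of $v$ -- with one surviving linking edge per edge of $H$, one obtains the desired model. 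The one point I would spell out carefully, with the same modest level of detail as \autoref{negae}, is precisely this last claim: that after the tree-expansion and the deletion of up to $k-1$ edges, the surviving piece of each $F_v$ is still connected and still meets the required linking edges, i.e.\ that the counting of \autoref{nega} (each removed object spoils at most one of the $k$ parallel columns) goes through verbatim when vertex-deletion in $G_k$ is replaced by edge-deletion in the subcubic $J_k$. Since each deleted edge of $J_k$ lies in a single column or in a single expansion tree, I expect no genuinely new difficulty here, and this is the only place where work beyond quoting \autoref{nega} and \autoref{negae} is needed.
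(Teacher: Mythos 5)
Your proposal is correct and follows essentially the same route as the paper: the paper also takes the fully subcubic expansion of the thickening $G_k$ (it presents this as $G_k'$ plus an extra step replacing the apices by subcubic trees, which is exactly your $J_k$), derives $\epack_{\mathcal{M}(H)}=1$ from subcubicity together with the additivity and minor-monotonicity of the Euler genus, and obtains the cover lower bound by observing that after deleting fewer than $k$ edges each gadget still contains a connected tree meeting the surviving linking edges, which is read as a branch set of a minor model rather than as a subdivision. The one point you flag as needing care (the column-counting surviving edge deletions in the subcubic graph) is treated at the same level of detail in the paper, which simply defers to the arguments of \autoref{nega} and \autoref{negae}.
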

 \begin{proof}
  Let $k$ be a positive integer. Again we use the graph $G_k'$
  constructed as in \autoref{negae}. We modify it by replacing every
  apex $a$ by a subcubic tree, the leaves of which are the neighbors
  of~$a.$ Let $G_k''$ denote the graph that we obtain.
  Observe that $G_k''$ is subcubic. Therefore, using the same argument as
  in the proof of \autoref{negae} we can show that
  $\epack_{\mathcal{M}(H)}(G) = 1.$
In the sequel we use the terminology of the proof of \autoref{nega}.
  Let $F_v''$ denote the graph obtained from $F_v$ by replacing every
  vertex $u$ of degree at least 4 by a subcubic tree, the leaves of
  which are the neighbors of $u$, for every $v \in V(H).$
  The proof that
  $\ecover_{\mathcal{M}(H)}(G) \geq k$ goes as in the proof of
  \autoref{nega}, except that we obtain, for every $v \in V(H)$, that
  $F''_v\setminus X$ contains a tree, the leaves of which are endpoints
  of $f(\{v, u_i\})$ for $i \in \intv{0}{d-1}$ (instead of paths
  connecting an apex to endpoints
  of $f(\{v, u_i\})$). Fortunately this is enough to guarantee that
  $G_k''\setminus X$ contains $H$ as a minor, and we are done.
 \end{proof}

Let us now summarize results related to the most common containment relations.
\begin{description}
\item[Subgraphs and induced subgraphs:] $\mathcal{EP}^\sx_{\lleq}$ is the class of
  all graphs, both for $\lleq$ being the subgraph and induced subgraph
  relation, for every $\sx \in \{\sv, \se\}$ (\autoref{sgr}).
\item[Minors:] $\mathcal{EP}^\sv_{\lminor}$ is the class of planar
  graphs~\cite{RobertsonS86GMV}. Recently, an extension of this  characterization, for strongly connected directed graphs, appeared in~\cite{AkhoondianKKW09thee}. 
  About the edge version, the authors
  of~\cite{Raymond2013edge} proved that $\mathcal{EP}^\se_{\lminor}$
  includes the class  $\{\theta_r\}_{r\in \N_{\geq 1}}$, and we show in
  \autoref{neg-min-e-plus} that $\mathcal{EP}^\se_{\lminor}$ is a subclass
  of planar graphs (see also \cite{Chatzidimitriou2015logopt}).
\item[Topological Minors:] $\mathcal{EP}^\sv_{\ltminor}$ has been
  characterized in~\cite{Liu2014topo}. There are trees that do not
  belong to
  $\mathcal{EP}^\sv_{\ltminor}$~\cite{thomassen1988presence}. The
  class $\mathcal{EP}^\sv_{\ltminor}$ does not contain any non-planar
  graph (\autoref{nega}) and $\mathcal{EP}^\se_{\ltminor}$ does not
  contain any non-planar subcubic graph (\autoref{negae}). Analogous characterizations
  for the case of strongly connected digraphs have recently appeared in~\cite{AkhoondianKKW09thee}.
  \item[Immersions:] As proved in~\cite{Giannopoulou2016packing}, 
   $\mathcal{EP}^\sv_{\limm}$ contains all planar
    subcubic graphs and $\mathcal{EP}^\se_{\limm}$ 
   contains all non-trivial, connected, planar
    subcubic graphs. 
   Moreover, $\mathcal{EP}^\sv_{\limm}$ does not contain any
   non-planar graph (\autoref{neg-min}) and $\mathcal{EP}^\se_{\limm}$ does not contain
   any subcubic non-planar graph (\autoref{neg-min-e}). On the other
   hand there is a 3-connected planar graph of maximum degree 4 that
   belongs to none of $\mathcal{EP}^\sv_{\limm}$ and
   $\mathcal{EP}^\sv_{\limm}$~\cite{Giannopoulou2016packing}.
 \end{description}

\section{Summary of results}
\label{results}

In the following sections we list positive and negative results on the
\ep\ property, and open problems.

Let us define the notation used in all the tables of
\autoref{posires} and \autoref{sec:neg}.
The fourth column of the tables gives the type of the
packings/covers the current line is about. The character \sv\ (respectively \se) refers to
vertex-disjoint (respectively edge-disjoint) packings and vertex
(respectively edge) covers. We write $\sv/\se$ when the mentioned
result holds for both the vertex and the edge version.
The symbol $\sv_{1/p}$ (resp. $\se_{1/p}$) for some $p\in \N$ indicates that the packing is allowed to use at most $p$ times each vertex (resp.\ each edge) and that the cover contains vertices (resp.\ edges).
 Finally, {\sf w} stands for vertex covers and packings where
every vertex $v$ of the host graph can be used at most $w(v)$ times by
every packing, where $w$ is a function mapping reals to the vertices
of the host graph.
The more specific definitions are given in the corresponding sections.

\subsection{Positive results}
\label{posires}

We provide a series of  tables presenting known results on the \ep\ property of
some graph classes, sorted depending on the pattern. Results related
to other structures (matroids, hypergraphs, geometry) and to
fractional versions are not mentioned here.

A dash in the ``gap'' column means that the authors did not
explicitly provided a gap function, even though one might be computable
from the proof.
The fourth column refers to the type of packing/cover, as defined above.

\subsubsection{Acyclic patterns}

Let $G$ be a graph. For every $S,T \subseteq V(G)$, an \emph{$(S,T)$-path} of $G$ is a path with the one endpoint in $S$ and the other one in~$T$.
An \emph{$S$-path} is a path with both endpoints (which are distinct) in $S$.
If $\mathcal{S}$ is a collection of subsets of $V(G)$, an \emph{$\mathcal{S}$-path} is a path that has endpoints in two different elements of~$\mathcal{S}$.
A generalization of these settings have been introduced in \cite{marx2015exact}, where the pairs of vertices that can be connected by a path are specified by an auxilliary graph. If $S \subseteq V(G)$ and $H$ (demand graph) is a graph with vertex set $S$, a path of $G$ is said to be \emph{$H$-valid} if its endpoints are adjacent vertices of~$H$.

\medskip

\noindent
\begin{tabular}{|p{2cm}|p{3.5cm}|p{3.5cm}|p{0.55cm}|p{3.95cm}|}
\hline
   Ref.& Guest class $\mathcal{H}$& Host class $\mathcal{G}$&
                                                              T. &Up.-bound on the gap \\ \hline\hline
   \cite{koniggrafok} & $K_2$ & bipartite & \sv & $k$\\\hline
\cite{lucchesi1978minimax} & \multirow{2}{*}{directed cuts} & \multirow{2}{*}{any digraph} &  \multirow{2}{*}{\se}& \multirow{2}{*}{$k$}\\
\cite{LOVASZ197696}&&&&\\
\hline
\cite{menger1927allgemeinen} & $(S,T)$-paths & any & \sv/\se & $k$\\\hline
\cite{grunwald1938neuer} & directed $(S,T)$-paths & any digraph & \sv/\se & $k$\\ \hline
\cite{Gallai1964} & $S$-paths & any & \sv & $2k$\\\hline
\cite{mader1978maximalzahl} & $\mathcal{S}$-paths & any & \sv & see \cite{schrijver2001short} \\ \hline
\cite{mader1978maximalzahlA} & $\mathcal{S}$-paths & any & \se & see \cite{Sebo2004} \\ \hline
  \cite{Chudnovsky2006packing} & non-zero directed $S$-paths & edge-group-labeled digraphs & \sv & $2k-2$\\ \hline
\cite{marx2015exact} & $H$-valid paths, $H$ with no matching of size $t$ & any & \sv & $2^{2^{O(k+t)}}$\\\hline
  \parbox{2cm}{\cite{FioriniJW12excl},\\ \hyperref[fep]{Th.\ \ref*{fep}}\vspace{1mm}} & $\mathcal{M}(H)$,\ $H$ forest & any & \sv & $O_{\mathcal{H}}(k)$\\ \hline
\end{tabular}

\subsubsection{Triangles}
A graph is \emph{flat} if every edge belongs to at most two triangles.
\medskip

\noindent
\begin{tabular}{|p{2cm}|p{3.5cm}|p{3.5cm}|p{0.55cm}|p{3.95cm}|}
\hline
   Ref.& Guest class $\mathcal{H}$& Host class $\mathcal{G}$&
                                                              T. &Up.-bound on the gap\\ \hline\hline
\multirow{3}{*}{\cite{Tuza90aconj}} & \multirow{3}{*}{triangles} & planar graphs &\se& $2k$
  \\ \cline{3-5}
&& $G$ with $||G|| \geq 7|G|^2/16$ & \se& $2k$\\ \cline{3-5}
&&tripartite graphs & \se& 7k/3\\ \hline
\cite{Krivelevich1995281} & triangles &$\mathcal{T}(K_{3,3})$-free graphs & \se &
                                                                       $2k$\\
  \hline
\cite{Haxell88pack} & triangles &tripartite graphs & \se & $1.956k$ \\ \hline
\cite{Haxell1999251} & triangles & any & \se & $(3-\frac{3}{23})k$\\
  \hline
\multirow{2}{*}{\cite{AparnaLakshmanan2011}} & \multirow{2}{*}{triangles} & odd-wheel-free graphs
                                                                 & \multirow{2}{*}{\se}
                                    & \multirow{2}{*}{$2k$}\\
  \cline{3-3}
&&4-colorable graphs &&\\ \hline
\multirow{2}{*}{\cite{Haxell2011}} & \multirow{2}{*}{triangles} & $K_4$-free planar graphs& \multirow{2}{*}{\se} & \multirow{2}{*}{$3k/2$}\\ \cline{3-3}
 & & $K_4$-free flat graphs& &\\ \hline
\end{tabular}

\subsubsection{Cycles}

The statement of the results in \cite{Ding2002381, Ding2003244}
requires additional definition.
An odd ring is a graph obtained from an odd cycle by replacing every
edge $\{u,v\}$ by either a triangle containing $\{u,v\}$, or by two triangles on vertices $\{u,a,b\}$ and
$\{v,c,d\}$ together with the edges $\{b,c\}$ and $\{a,d\}.$
We denote by $\mathcal{G}_1$ the class of graphs with no induced
subdivision of the following: $K_{2,3}$, a wheel, or an odd ring.
We denote by $\mathcal{G}_2$ the class of graphs with no induced
subdivision of the following: $K_{3,3}$, a wheel, or an odd~ring.

The results on directed cycles also need few more definitions. A digraph is \emph{strongly planar} if it has a planar drawing such
that for every vertex $v$, the edges with head $v$ form an interval in
the cyclic ordering of edges incident with $v$ (definition
from~\cite{Guenin2010packing}). An \emph{odd double circuit} is a
digraph obtained from an undirected circuit of odd length more than 2
by replacing each edge by a pair of directed edges, one in each
direction. $F_7$ is the digraph obtained from the directed cycle on
vertices $v_1, \dots, v_7,v_1$, by adding the edges creating the directed cycle
$v_1,v_3, v_5, v_7, v_2, v_4, v_6, v_1.$ We denote by $\mathcal{F}$ the
class of digraphs with no butterfly minor isomorphic to an odd double circuit, or~$F_7$
(for the definition of butterfly minors of digraphs see~\cite{Guenin2010packing,JohnsonRST01dire,AkhoondianKKW09thee}).

Results related to cycles with length constraints, with prescribed vertices, or to extensions of cycles are presented in the forthcoming tables.

\medskip

\noindent
\begin{tabular}{|p{2cm}|p{3.5cm}|p{3.5cm}|p{0.55cm}|p{3.95cm}|}
\hline
   Ref.& Guest class $\mathcal{H}$& Host class $\mathcal{G}$& T. &Up.-bound on the gap\\ \hline\hline
  \cite{erdHos1965independent}& cycles & any & \sv & $O(k\log k)$\\ \hline
  \cite{Simonovits67genera} & cycles & any& \sv & $\left (4 +
                                                  o(1)\right )k \log k$\\ \hline
\cite{voss1968some} & cycles & any & \sv & $\left (2 + o(1) \right )k \log k$\\\hline
  \cite{Diestel05grap} & cycles & any & \se & $(2 + o(1))k \log k$\\
  \hline
\cite{Ding2002381} & cycles & $\mathcal{G}_1$, weighted & \sf w &
                                                                       $
                                                                       k$\\\hline 
\cite{Ding2003244} & cycles & $\mathcal{G}_2$ & \sv & $k$\\\hline
\multirow{2}{*}{\cite{KloksLL02newa}} & \multirow{2}{*}{cycles} & planar graphs & \sv & $5k$\\\cline{3-5}
 & & outerplanar graphs & \sv & $2k$\\\hline
\multirow{2}{*}{\cite{ma2013approximate}} & \multirow{2}{*}{cycles} & \multirow{2}{*}{planar graphs} & \sv & $3k$\\\cline{4-5}
&&&\se & $4k-1$\\ \hline
\scalebox{0.95}{\cite{Reed96}} & directed cycles & any digraph& \sv & --\\
\hline
\cite{Reed1995gallai} & directed cycles & planar digraphs& \sv & $O(k\log(k)\log\log k)$\\
\hline
\multirow{2}{*}{\cite{Guenin2010packing}} & \multirow{2}{*}{directed cycles} & strongly planar digraphs& \multirow{2}{*}{\sv} & \multirow{2}{*}{$k$}\\
\cline{3-3}
&& $\mathcal{F}$ &&\\
\hline
\cite{lucchesi1978minimax} & directed cycles & planar digraphs & \se & $k$\\
\hline
\cite{Seymour1996Packing} & directed cycles & eulerian digraphs with a linkless embedding in 3-space & \se & $k$\\
\hline
\cite{2016arXiv160507082H} & cycles non homologuous to zero & embedded graphs & $\sv_{1/2}$ & -- \\\hline
\end{tabular}

\subsubsection{Cycles with length constraints}
The class of cycles (resp.\ directed cycles) of length at least $t$ is referred to as
$\mathcal{C}_{\geq t}$ (resp.\ $\vec{\mathcal{C}}_{\geq t}$).
 For every
positive integer $k$ with, we say that a graph is
\emph{$k$-near bipartite} if every set $X$ of vertices contains a
stable set of size at least~$|X|/2 - k.$

\medskip

\noindent
\begin{tabular}{|p{2cm}|p{3.5cm}|p{3.5cm}|p{0.55cm}|p{3.95cm}|}
\hline
   Ref.& Guest class $\mathcal{H}$& Host class $\mathcal{G}$&
                                                              T. &Up.-bound on the gap\\ \hline\hline
  \cite{reed1999mangoes} & odd cycles & planar graphs & \sv &
  superexponential\\ \hline
{\cite{fiorini2005approximate}} & {odd cycles} & {planar graphs} & \sv & $10k$\\\hline
  \cite{thomassen01conn} & odd cycles & $2^{3^{9k}}$-connected graphs & \sv & $2k-2$\\\hline
  \cite{rautenbach2001erdHos} & odd cycles & $576k$-connected graphs &
                                                                       \sv & $2k-2$\\\hline
\cite{Kawarabayashi2009} & odd cycles & $24k$-connected graphs & \sv
                                                                 &$2k-2$\\\hline
\cite{reed1999mangoes} & odd cycles & $k$-near bipartite graphs & \sv
                                                                 & --\\\hline
\cite{Kawarabayashi2007764} & odd cycles & embeddable in an orientable
                                           surface of Euler genus $t$ &
                                                                  \sv/\se
                                                               & --
  \\\hline
\cite{Berge2000197} & odd cycles & any & \se & -- \\ \hline
\cite{Kral2004107} & \multirow{2}{*}{odd cycles} & \multirow{2}{*}{planar graphs} & \multirow{2}{*}{\se} &
\multirow{2}{*}{$2k$}\\
\cite{fiorini2005approximate}&&&&\\\hline
  \cite{KawarabayashiK12} & odd cycles & 4-edge-connected graphs & \se &
                                                                      $2^{2^{O(k
                                                                      \log
                                                                      k)}}$
                                                                       \\ \hline

\cite{reed1999mangoes} & odd cycles & any & $\sv_{1/2}$ & --\\ \hline
  \cite{thomassen1988presence}  & cycles of length 0 mod $t$ & any &
                                                                       \sv & --\\\hline
\cite{Kawarabayashi2005271} & non-zero cycles & $(15k/2)$-connected
                                                group-labeled graphs &
                                                                      \sv
                                                           &
                                                             $2k-2$\\\hline
  \multirow{2}{*}{\cite{wollan2011packing}} & non-zero cycles & group-labeled graphs, c.f.\ \cite{wollan2011packing}& \multirow{2}{*}{\sv} & \multirow{2}{*}{$c^{k^{c'}}$ for some $c,c'$}\\\cline{2-3}
& cycles of non-zero length mod $2t+1$ & any & &\\\hline
\multirow{2}{*}{\cite{2016arXiv160507082H}} & doubly non-zero cycles, c.f.\ \cite{2016arXiv160507082H} & doubly group-labeled graphs & \multirow{2}{*}{$\sv_{1/2}$} & \multirow{2}{*}{--} \\\cline{2-3}
&odd cycles non homologuous to zero & embedded graphs & & \\\hline
\cite{BirmeleBR07} & $\mathcal{C}_{\geq t}$ & any digraph& \sv & $(13+o_t(1))tk^2$ \\ \hline
  \cite{JGT:JGT21776} & $\mathcal{C}_{\geq t}$ & any digraph& \sv &$(6t+4+o_t(1))k\log k$\\
                                                               \hline
{\small \cite{Mousset2016NSWtight}} &$\mathcal{C}_{\geq t}$ & any digraph& \sv & $6kt + (10 + o(1)) k \log k$\\\hline
\cite{havet:hal-00816135} & $\vec{\mathcal{C}}_{\geq 3}$ & any digraph&\sv&--\\
\hline
\end{tabular}

\subsubsection{Extensions of cycles}
A \emph{dumb-bell} is a graph obtained by
connecting two cycles by a (non-trivial) path.
\medskip

\noindent
\begin{tabular}{|p{2cm}|p{3.5cm}|p{3.5cm}|p{0.55cm}|p{3.95cm}|}
\hline
   Ref.& Guest class $\mathcal{H}$& Host class $\mathcal{G}$& T. &Up.-bound on the gap\\ \hline\hline
\cite{Simonovits67genera} & dumb-bells & any& \sv & $(4000+o(1))k \log k$\\ \hline
  \cite{FominLMPS13quad} & $\mathcal{M}(\theta_t)$& any & \sv & $O(t^2k^2)$\\
  \hline
  \cite{FioriniJS13} &  $\mathcal{M}(\theta_t)$& any & \sv & $O_t(k \log k)$\\
\hline
  \multirow{2}{*}{\cite{Raymond2013edge}} &\multirow{2}{*}{$\mathcal{M}(\theta_t)$} & \multirow{2}{*}{any} & \multirow{2}{*}{\se} & $O(k^2t^2 \polylog kt)$\\
      &&&& $O(k^4t^2 \polylog kt)$ \\
\hline
  \parbox{2cm}{\cite{Chatzidimitriou2015logopt},\\
  {\hyperref[thetaEP]{Cor.\ \ref*{thetaEP}}}\vspace{1mm}}& $\mathcal{M}(\theta_t)$& any & {\sf
                                                               v}/{\sf
                                                               e} &
                                                                    $O_t(k
                                                                    \log
                                                                    k)$\\
                                                                    \hline
\end{tabular}

\subsubsection{Minor models}
For every digraph $D$, we denote by $\vec{\mathcal{M}}(D)$
(respectively $\vec{\mathcal{T}}(G)$, $\vec{\mathcal{I}}(G)$) the class of all digraphs that
contain $D$ as a directed minor (respectively directed topological minor, directed immersion). Refer to \cite{chudnovsky2011well, Chudnovsky2012tourn, Fradkin2013tourn} for a definition of these notions.

We also denote by $\vec{\mathcal{M}}_{\bowtie}(D)$
(respectively $\vec{\mathcal{T}}_{\bowtie}(G)$) the class of all digraphs that
contain $D$ as a butterfly-minor (respectively as a butterfly
topological minor). $\vec{\mathcal{P}}$ (respectively
$\vec{\mathcal{W}}$) is the class of all graphs that are butterfly
minors of a cylindrical directed grid (respectively butterfly
topological minors of a cylindrical directed wall). See for
instance~\cite{AkhoondianKKW09thee} for a definition of the
cyclindrical directed grid and wall and~\cite{JohnsonRST01dire,AkhoondianKKW09thee} for a definition of butterfly (topological) minors.

For every $s\in \N$, a digraph is said to be \emph{$s$-semicomplete} if 
for every vertex $v$ there are at most $s$ vertices that are not connected to $v$ by an arc (in either direction). A \emph{semicomplete} digraph is a 0-semicomplete digraph.
\medskip

\noindent
\begin{tabular}{|p{2cm}|p{3.5cm}|p{3.5cm}|p{0.55cm}|p{3.95cm}|}
\hline
   Ref.& Guest class $\mathcal{H}$& Host class $\mathcal{G}$& T. &Up.-bound on the gap\\ \hline\hline
  \multirow{2}{*}{\parbox{2cm}{\cite{RobertsonS86GMV}, \\ \autoref{ep-tree}}} & \multirow{2}{*}{$\mathcal{M}(H)$, $H$ planar} & any & \sv &
                                                                      --\\ \cline{3-5}
& & $\{G,\ \tw(G)\leq t\}$ & \sv & $(t-1)(k\cc(H)-1)$\\\hline
  \cite{2010arXiv1003.3915D} & $\mathcal{M}(K_t)$ & $O(kt)$-connected
                                                    graphs & \sv &-- \\ \hline
 \cite{FominST11stre} & $\mathcal{M}(H)$, $H$ planar connected & $K_{t}$-minor free & \sv & $O_{H,t}(k)$\\ \hline
  \cite{RaymondT13poly} & $\mathcal{M}(H),\ \pw(H)\!\leq\! 2$ and $H$ connected & any & \sv
                                                           &
                                                             $2^{O(|H|²)}\cdot k²\log k$\\ \hline
  \cite{ChekuriC13Larg}+\newline \cite{ChekuriC13poly}, \coref{epconn} & $\mathcal{M}(H)$, $H$ planar connected & any & \sv & $O(|H|^{O(1)}\cdot k \polylog k)$\\ \hline
 \multirow{2}{*}{\parbox{2cm}{\cite{Chatzidimitriou2015logopt}, \coref{thetaEP}}} & $\mathcal{M}(H)$, $H$ connected & $\{G,\ \tpw(G) \leq t\}$  & \sv/\se & $O_{H,t}(k)$\\
\cline{2-5}
& $\mathcal{M}(\theta_{t,t'})$ & simple graphs & \se & --\\
\hline
\scalebox{0.95}{\cite{AkhoondianKKW09thee}} & $\vec{\mathcal{M}}_{\bowtie}(H),\ H \in \vec{\mathcal{P}}$ & any digraph& \sv & --\\
\hline
\end{tabular}

\subsubsection{Subdivisions}
For every $t\in \N$, $\mathcal{T}_{(0 \!\!\mod t)}(H)$ denotes the class of 
subdivisions of $H$ where every edge
is subdivided $0 \!\!\mod t$ times.
$\mathcal{L}$ is a
graph class defined in the (unpublished) manuscript~\cite{Liu2014topo}.
See the previous section for the definition of $\vec{\mathcal{T}}(G)$ and $\vec{\mathcal{W}}$.
\medskip

\noindent
\begin{tabular}{|p{2cm}|p{3.5cm}|p{3.5cm}|p{0.55cm}|p{3.95cm}|}
\hline
   Ref.& Guest class $\mathcal{H}$& Host class $\mathcal{G}$& T. &Up.-bound on the gap\\ \hline\hline
  \cite{thomassen1988presence}  & $\mathcal{T}_{(0 \!\!\mod t)}(H)$, $H$ planar subcubic & any &
                                                                       \sv & --\\
\hline
\cite{Liu2014topo} & $\mathcal{T}(H), H \in \mathcal{L}$ & any & \sv & --\\\hline
\scalebox{0.95}{\cite{AkhoondianKKW09thee}} & $\vec{\mathcal{T}}_{\bowtie}(H),\ H \in \vec{\mathcal{W}}$ & any digraph& \sv & --\\\hline
\end{tabular}

\subsubsection{Immersion expansions}
A graph $H$ is a {\em half-integral immersion} of a graph $G$
is $H$ is an immersion of the graph obtained by $G$  after duplicating 
the multiplicity of all its edges. We denote by 
$\mathcal{I}_{1/2}(H)$ the class  of all graphs containing $H$ as a 
half-integral immersion.
See above the definition of $\vec{\mathcal{I}}(G)$.
\medskip

\noindent
\begin{tabular}{|p{2cm}|p{3.5cm}|p{3.5cm}|p{0.55cm}|p{3.95cm}|}
  \hline
  Ref.& Guest class $\mathcal{H}$& Host class $\mathcal{G}$& T. &Up.-bound on the gap\\ \hline\hline
\cite{Liu2015EP} & $\mathcal{I}(H)$ & 4-edge-connected & \se & --\\ \hline
\multirow{3}{*}{\parbox{2cm}{\cite{Giannopoulou2016packing},
  \autoref{speimm} \coref{corrsplit}}} & $\mathcal{I}(H)$, $H$ planar subcubic
                                 connected non-trivial& any & \se
                                                                          & \parbox[t]{3cm}{$(\|H\| \cdot k)^{O(1)}$}\\
\cline{2-5}
& \multirow{2}{*}{\parbox{3.5cm}{$\mathcal{I}(H)$, $H$ connected non-trivial}} & $\{G, \tpw(G) \leq t\}$
                                                            & \multirow{2}{*}{\se} &
                                                                    \multirow{2}{*}{$\|H\|\cdot
                                                                    t^2 \cdot
                                                                    k$}\\
  \cline{3-3}
&&$\{G, \tcw(G) \leq t\}$ & & \\ \hline
\cite{Liu2015EP} & $\mathcal{I}_{1/2}(H)$ & any & $\se_{1/2}$ & --\\ \hline
\end{tabular}

\subsubsection{Patterns with prescribed vertices}

Let us first present the two settings of \ep\ problems with prescribed
vertices that we want to deal with here.
The first type is when the guest class consists of fixed subgraphs of the
host graph. For instance, one can consider a family $\mathcal{F}$ of (non
necessarily disjoint) subtrees of a tree $T$, and compare the maximum
number of disjoint elements in $\mathcal{F}$ with the minimum number
of vertices/edges of $T$ meeting all elements of~$\mathcal{F}.$ We will
refer to these guest classes by words indicating that we are dealing
with substructures (like ``subtrees''). We stress that in this
setting, the host class is allowed to contain one subgraph $F$ of the
host graph, but not one other subgraph $F'$ even if $F$ and $F'$ are isomorphic.
For every positive integer $t$, a $t$-path is a disjoint union of $t$
paths, and a $t$-subpath of a $t$-path $G$ is a subgraph that has a
connected intersection with every connected component of $G.$ The
concept of $t$-forests and $t$-subforests is defined similarly.

In order to introduce the second type of problem, we need the
following definition. Let $\sx \in \{\sv, \se\}.$ If $\mathcal{H}$ is a class of graphs, $G$ is a
graph and $S\subseteq A_\sx(G)$, then a $S$-$\mathcal{H}$-subgraph of
$G$ is a subgraph of $G$ isomorphic to some member of $\mathcal{H}$
and that contain one edge/vertex of $S.$
We are now interested in comparing, for every graph $G$ and every
$S\subseteq A_\sx(G)$, the maximum number of $S$-$\mathcal{H}$-subgraph of
$G$ with the minimum number of elements of $A_\sx(G)$ that meet all
$S$-$\mathcal{H}$-subgraphs of~$G.$ We refer to these problems by
prefixing the guest class with an~``$S$'' (like in ``$S$-cycles'').
The authors of \cite{2016arXiv160507082H} consider $(S_1,S_2)$-cycles for $S_1,S_2 \subseteq V(G)$: such cycles must meet both of $S$ and~$S'$.
A generalization of this type of problem has been introduced in
\cite{Kwon2015ep}: instead of one set $S$, one considers three subsets $S_1, S_2, S_3$ of $V(G)$ and a $(S_1,S_2,S_3)$-subgraph is required to intersect at least two sets of~$S_1$, $S_2$ and $S_3$.
Note that some results on patterns with prescribed vertices have been stated in the table on acyclic patterns.
Recall that $\cc(G)$ denotes the number of connected components of the graph~$G$.
\medskip

\noindent
\begin{tabular}{|p{2cm}|p{3.5cm}|p{3.5cm}|p{0.55cm}|p{3.95cm}|}
\hline
   Ref.& Guest class $\mathcal{H}$& Host class $\mathcal{G}$& T. &Up.-bound on the gap\\ \hline\hline
  \cite{hajnal1958auflosung} & subpaths & paths & \sv & $k$\\ \hline
\multirow{3}{*}{\cite{gyarfas1970helly}}& $t$-subpaths & $t$-paths & \sv & $O(k^{t!})$\\ \cline{2-5}
& subgraphs $H$ with $\cc(H) \leq t$ & paths & \sv &--\\ \cline{2-5}
& $t$-subforests & $t$-forests& \sv & --\\ \hline
\cite{gyarfas1970helly} & subtrees of a tree & trees & \sv & k\\ \hline
\cite{Kaiser}& $t$-subpaths & $t$-paths & \sv & $(t^2-t+1)k$\\ \hline
\cite{alon98piercing}& $t$-subpaths & $t$-paths & \sv & $2t^2k$\\ \hline
\multirow{3}{*}{\cite{Alon02cove}} & subgraphs $H$ with $\cc(H) \leq t$ & trees & \sv & $2t^2k$\\ \cline{2-5}
& subgraphs $H$ with $\cc(H) \leq t$ & $\{G,\ \tw(G)\leq w\}$ & \sv & $2(w+1)t^2k$\\ \hline  
\cite{Kakimura2011378} & $S$-cycles & any & \sv & $O(k^2 \log k)$
                                                                        \\
  \hline \cite{Pontecorvi20121134} & $S$-cycles & any &
                                                                       \sv/\se &
                                                              $O(k \log k)$ \\ \hline
  \cite{2014arXiv1412.2894B} & $S\text{-cycles} \cap \mathcal{C}_{\geq t}$ & any & \sv &
                                                               $O(tk\log
                                                                                  k)$ \\ \hline
\cite{2014arXiv1411.6554J} & odd $S$-cycles & $50k$-connected graphs & \sv & $O(k)$\\\hline 
\cite{Kakimura2013} & odd $S$-cycles & any & $\sv_{1/2}$ & -- \\\hline
\cite{doi:10.1137/100786423} & directed $S$-cycles & all diraphs & $\sv_{1/5}$ & --\\hline
\cite{kawarabayashi2013packing} & odd directed $S$-cycles & any digraph & $\sv_{1/2}$ & -- \\\hline
\cite{2016arXiv160507082H} & $(S_1,S_2)$-cycles & any & $\sv$ & -- \\\hline
\cite{Kwon2015ep} & $(S_1,S_2,S_3)$-$\mathcal{M}(H)$, $H$ planar & any & \sv & --\\
\hline 
\end{tabular}

\subsubsection{Classes with bounded parameters}
\noindent
\begin{tabular}{|p{2cm}|p{3.5cm}|p{3.5cm}|p{0.55cm}|p{3.95cm}|}
  \hline
Ref.& Guest class $\mathcal{H}$& Host class $\mathcal{G}$&
                                                              T. &Up.-bound on the gap \\ \hline\hline
\cite{thomassen1988presence} & any family of connected graphs & $\{G,\ \tw(G)\leq t\}$ & \sv & $k(t+1)$\\\hline
\cite{FioriniJW12excl}   & $\{H,\ \pw(H) \geq t\}$ & any & \sv & $O_t(k)$\\ \hline
\cite{Chatzidimitriou2015logopt} & any finite family of connected graphs & $\{G,\ \tpw(G) \leq t\}$ & \sv/\se &  $O_{t}(k)$\\
\hline
\end{tabular}

\subsection{Negative results}
\label{sec:neg}
The next table presents lower bounds on the gap for several graph
classes, as well as graph classes that do not have the \ep\ 
property. It indicates
to which extend the results of the table of \autoref{posires} are best possible.
The notation used here are the same as in the previous section, where
they are defined.

\subsubsection{Cycles and paths}

\noindent
\begin{tabular}{|p{2cm}|p{3.5cm}|p{3.5cm}|p{0.55cm}|p{3.95cm}|}
  \hline
Ref.& Guest class $\mathcal{H}$& Host class $\mathcal{G}$&
                                                              T. &Gap 
  \\ \hline\hline
\cite{Tuza90aconj} & triangles & all graphs & \se & $\geq 2k$\\\hline
\cite{erdHos1965independent} & cycles & all graphs & \sv & $\Omega(k \log
                                                           k)$\\\hline
  \cite{Simonovits67genera} & cycles & all graphs& \sv & $> \left (\frac{1}{2}
                                                         + o(1) \right ) k \log k$\vspace{1mm}\\ \hline
\cite{voss1968some} & cycles & all graphs & \sv & $\geq \left (\frac{1}{8} + o(1) \right )k \log k$\vspace{1mm}\\\hline
\cite{KloksLL02newa} & cycles & planar graphs & \sv & $\geq2k$\\ \hline
\cite{{ma2013approximate}} & cycles & planar graphs & \se & \parbox{2.4cm}{$\geq
                                                            4k-c$,\\ $c\in \N$\vspace{1mm}}\\\hline
\cite{dejter1988unboundedness} & odd cycles & all graphs & \sv & arbitrary\\
  \hline
\cite{thomassen1988presence}  & cycles of length\newline $p \!\!\mod t$, $p \in \intv{1}{t-1}$ & all graphs &
                                                                       \sv & arbitrary\\
\hline
\cite{reed1999mangoes} & odd cycles & all graphs & \se & arbitrary\\ \hline
\cite{thomassen01conn} & odd cycles & planar graphs & \sv & $\geq 2k-2$\\ \hline
\cite{Kral2004107} & odd cycles & planar graphs & \se & $\geq 2k$ \\ \hline
   \cite{Pontecorvi20121134} & $S$-cycles & any &
                                                                       \sv &
                                                              $\Omega(k
                                                                             \log k)$ \\ \hline
\cite{Kakimura2013} & odd $S$-cycles & all graphs & \sv & arbitrary \\\hline
\cite{doi:10.1137/100786423} & directed $S$-cycles & all diraphs & \sv/\se & arbitrary\\\hline
\scalebox{0.95}{\cite{kawarabayashi2013packing}} & odd directed\newline $S$-cycles & all digraphs & \sv & arbitrary \\\hline
\multirow{2}{*}{\cite{JGT:JGT21776}} & \multirow{2}{*}{$\mathcal{C}_{\geq t}$} & \multirow{2}{*}{all graphs} & \multirow{2}{*}{\sv} &
                                                                  \parbox{2.4cm}{$\Omega(k\log
                                                                  k)$,\\
                                                                  $t$
                                                                  fixed\vspace{1mm}}\\\cline{5-5}
&&&&$\Omega(t)$, $k$ fixed\\ \hline
\multirow{2}{*}{\cite{Mousset2016NSWtight}} & \multirow{2}{*}{$\mathcal{C}_{\geq t}$} & \multirow{2}{*}{all graphs} & \multirow{2}{*}{\sv} & $\geq(k-1)t$\\\cline{5-5}
&&&& $\geq \frac{(k-1)\log k}{8}$\\\hline
\cite{Simonovits67genera} & dumb-bells & all graphs& \sv & $>(1+o(1))k \log k$\\ \hline
\cite{marx2015exact} & $H$-valid paths,\newline $H$ with no\newline matching of size $t$ & all graphs & \sv & unavoidable dependency in $t$\\\hline
\end{tabular}

\subsubsection{Patterns related to containment relations}

\noindent
\begin{tabular}{|p{2cm}|p{5cm}|p{3.5cm}|p{0.55cm}|p{2.5cm}|}
  \hline
Ref.& Guest class $\mathcal{H}$& Host class $\mathcal{G}$&
                                                              T. &Gap
  \\ \hline\hline
{from \cite{erdHos1965independent}} & $\mathcal{M}(H)$, $H$ has a cycle & all graphs & \sv & $\Omega(k \log
                                                           k)$\\\hline
\cite{RobertsonS86GMV} & $\mathcal{M}(H)$, $H$ non-planar & all graphs
                                                         & \sv & arbitrary\\
  \hline
\autoref{neg-min-e-plus} & $\mathcal{M}(H)$, $H$ non-planar & all graphs
                                                         & \se & arbitrary\\
  \hline
\autoref{nega} & $\mathcal{T}(H)$, $H$ non-planar & all graphs
                                                         & \sv & arbitrary\\\hline
\cite{thomassen1988presence}  & $\mathcal{T}_{(p \!\!\mod t)}(H)$, $H$ planar subcubic, $p \in \intv{1}{t-1}$ & all graphs &
                                                                       \sv & arbitrary\\
\hline
 \cite{thomassen1988presence} & $\mathcal{T}(H)$, for infinitely many
                                trees $H$ with $\Delta(H)=4$ & planar graphs & \se & arbitrary\\
  \hline
\autoref{negae} & $\mathcal{T}(H)$, $H$ non-planar subcubic& all graphs
                                                         & \se & arbitrary\\
  \hline
 copying \cite{thomassen1988presence} & $\mathcal{I}(H)$, for infinitely many
                                trees $H$ with $\Delta(H)=4$ & planar graphs & \se & arbitrary\\
  \hline
\coref{neg-min} & $\mathcal{I}(H)$, $H$ non-planar & all graphs
                                                         & \sv & arbitrary\\\hline
\coref{neg-min-e} & $\mathcal{I}(H)$, $H$ non-planar subcubic& all graphs
                                                         & \se & arbitrary\\
  \hline
                                                         
  \scalebox{0.95}{\cite{Giannopoulou2016packing}} & $\mathcal{I}(H)$, for some 3-connected $H$
                           with $\Delta(H)=4$ & planar graphs & \se &
                                                                      arbitrary\\
  \hline
\cite{Liu2015EP} & $\mathcal{I}(H)$, for every $H$ & 3-edge-connected graphs & \se & arbitrary\\\hline
\multirow{2}{*}{\cite{AkhoondianKKW09thee}} & $\vec{\mathcal{M}}(G),\ G \not \in \vec{\mathcal{P}}$ &\multirow{2}{*}{all digraphs} & \multirow{2}{*}{\sv} & \multirow{2}{*}{arbitrary}\\
\cline{2-2}
 & $\vec{\mathcal{T}}(G),\ G \not \in \vec{\mathcal{W}}$&&&\\
\hline
\end{tabular}

\subsection{Some questions and conjectures}

Clearly, the most general question on the
 \ep\ property is to characterize the class $\mathcal{EP}^{\sx}_{\lleq}$ (defined 
in \autoref{sec:cr}) for various instantiations of $\sx$ and $\lleq$ and optimize the corresponding gap.
In what follows we sample some related conjectures and questions that have appeared in the bibliography.

\begin{question}[\!\!\cite{thomassen1988presence}]
  Is it true that for every class $\mathcal{H}$ of graphs, either
  $\mathcal{H}$ has the \sv-\ep\ property or there is no integer $q$ such for every graph $G$ with $\vpack_{\mathcal{H}}(G)\leq 1$ it holds that $\vcover_{\mathcal{H}}(G)\leq q.$
In particular, it is true when $\mathcal{H}$
  consists of connected graphs and is closed under topological minors?
\end{question}

\begin{conjecture}[Tuza's conjecture~\cite{Tuza90aconj}]
  For every graph $G$ it holds that
\[
  \ecover_{\{K_3\}}(G) \leq 2\cdot  \epack_{\{K_3\}}(G).
\]
\end{conjecture}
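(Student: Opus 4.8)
The plan is to attack this via linear-programming duality together with a local rounding argument, in the spirit of the methods that currently give the best constants for this problem. Write $\nu(G)=\epack_{\{K_3\}}(G)$ and $\tau(G)=\ecover_{\{K_3\}}(G)$, and let $\nu^*(G)$ and $\tau^*(G)$ be the natural fractional relaxations: $\nu^*$ maximises a nonnegative weighting of the triangles of $G$ subject to the weights summing to at most $1$ on each edge, while $\tau^*$ minimises a nonnegative weighting of the edges subject to the weights on the three edges of every triangle summing to at least $1$. LP duality gives $\nu^*=\tau^*$, and trivially $\nu\le\nu^*=\tau^*\le\tau$. The already-known bounds $\tau^*\le 2\nu$ and $\tau\le 2\nu^*$ (Krivelevich) show that the conjecture is exactly the assertion that one of these two integrality gaps is in fact no gap at all; this is the reformulation I would try to exploit.

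First I would set up a minimal-counterexample framework: choose $G$ minimising $|V(G)|+|E(G)|$ subject to $\tau(G)>2\nu(G)$, fix a maximum edge-disjoint triangle packing $\mathcal{F}$ with $|\mathcal{F}|=\nu$, and let $F=\bigcup\mathcal{F}$ be the set of $3\nu$ edges it uses. Since every triangle of $G$ shares an edge with some member of $\mathcal{F}$ (otherwise it could be added to $\mathcal{F}$), the set $F$ is already a cover, so $\tau\le 3\nu$ comes for free; the whole difficulty is to save $\nu$ edges. The natural route is to keep, for each $t\in\mathcal{F}$, only two of its three edges, chosen so that the union over all $t$ remains a cover. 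A triangle that escapes such a cover must meet $F$ only in ``discarded'' edges, so the task is to choose the discarded edge of each $t\in\mathcal{F}$ consistently, ruling out triangles hanging off these edges and triangles that weave through two or three packed triangles using only discarded edges.

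The heart of the matter is then a charging / exchange analysis on the link structure of the packed triangles: around $t=xyz\in\mathcal{F}$ the triangles of $G$ meeting $t$ in one edge organise into three neighbourhood lists, and an augmenting-path style argument (as in Haxell's proof of $\tau\le(3-\tfrac{3}{23})\nu$) would have to certify, for most of $\mathcal{F}$, that a valid two-out-of-three choice exists and, for the rest, bound how often the third edge must be paid. I expect the main obstacle to be precisely where every existing approach stalls: configurations in which the private neighbourhoods around the triangles of $\mathcal{F}$ interlock so tightly that no globally consistent two-out-of-three selection exists and a constant fraction of $\mathcal{F}$ forces the third edge. Resolving this seems to need either a genuinely new structural theorem about locally triangle-dense graphs or a way to upgrade the fractional bound $\tau\le 2\nu^*$ to an integral one --- and it is exactly this last step that has resisted proof since Tuza posed the conjecture in 1981, so I would be candid that what I can offer is a programme rather than a complete proof, with the planar, $\mathcal{T}(K_{3,3})$-free and $4$-colourable cases recorded in the tables above standing as evidence that $2$ is the correct constant.
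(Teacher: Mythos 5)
This statement is Tuza's conjecture, which the paper records as an \emph{open problem} in its final section; the paper contains no proof of it, and to date none exists in the literature. Your write-up is therefore not being measured against a hidden argument you failed to reproduce --- there is nothing to reproduce --- but it also is not a proof, as you yourself acknowledge. The genuine gap is the entire core of the argument: after the (correct and standard) observations that a maximum packing $\mathcal{F}$ of $\nu$ triangles yields a cover of size $3\nu$ by maximality, that the fractional bounds $\tau^*\le 2\nu$ and $\tau\le 2\nu^*$ of Krivelevich reduce the conjecture to closing one of two integrality gaps, and that the task is to discard one edge from each packed triangle while keeping a cover, you arrive at the step where a globally consistent two-out-of-three selection must be produced or its failures charged --- and there you state explicitly that you have no mechanism for handling the interlocking configurations. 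That step is not a technical detail to be filled in; it is precisely the content of the conjecture, and every known approach (including Haxell's, which you cite and which only reaches the constant $3-\tfrac{3}{23}$) breaks on it. A further small caution: the local ``keep two of three edges of each $t\in\mathcal{F}$'' scheme cannot succeed in full generality without also covering triangles that use no edge of $F=\bigcup\mathcal{F}$ at all --- such triangles share an edge with some member of $\mathcal{F}$ only in the sense of maximality if every triangle of $G$ is edge-intersecting with $\mathcal{F}$, which is true, but a triangle may intersect $\mathcal{F}$ only in the single edge you chose to discard, and may do so for each of the up-to-three packed triangles it touches; bounding these ``weaving'' triangles is exactly where a constant fraction of $\mathcal{F}$ can be forced to retain its third edge. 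In short: your reformulations and reductions are accurate and match the state of the art reflected in the paper's tables (the planar, $\mathcal{T}(K_{3,3})$-free, tripartite, and $K_4$-free cases), but the proposal establishes nothing beyond what is already known, and the statement remains open.
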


\begin{conjecture}[\!\!\cite{BirmeleBR07}]\label{cover2}
Let $l\geq 6$ be an integer. Let $G$ be a graph containing no \sv-$\mathcal{C}_{\geq l}$-packing of
size 2. Then there exists a \sv-$\mathcal{C}_{\geq l}$-cover of $G$ of
size at most~$l.$
\end{conjecture}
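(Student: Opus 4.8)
A natural line of attack on this conjecture is a structural argument that first reduces to $2$-connected host graphs and then splits according to the set of cycle lengths of $G$ near the value~$l$. One may assume $G$ is connected and has at least one cycle of length $\geq l$: otherwise the empty set is a cover, and cycles of length $\geq l$ in two distinct components of $G$ would already form a $\sv$-$\mathcal{C}_{\geq l}$-packing of size~$2$. Since every cycle lies inside a single block, call a block of $G$ \emph{heavy} if it contains a cycle of length $\geq l$. If $G$ has two distinct heavy blocks, then no two of them can be non-adjacent in the block--cut tree (a cycle of length $\geq l$ in each would be vertex-disjoint), so they all share a common cut vertex $v^{*}$; and then no heavy block contains a cycle of length $\geq l$ avoiding $v^{*}$ (such a cycle together with a cycle of length $\geq l$ in another heavy block would be vertex-disjoint), so every cycle of length $\geq l$ passes through $v^{*}$ and $\{v^{*}\}$ is a cover of size $1 \leq l$. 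Hence we may assume that all cycles of length $\geq l$ of $G$ lie inside a single block, i.e.\ that $G$ is $2$-connected.

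The decisive case distinction is whether $G$ contains a cycle of length \emph{exactly}~$l$. If $D$ is such a cycle, then $D$ is itself a cycle of length $\geq l$, so by hypothesis every cycle of length $\geq l$ meets $D$; hence $V(D)$ is a $\sv$-$\mathcal{C}_{\geq l}$-cover of size exactly~$l$, and we are done. So the whole difficulty is concentrated in the case where $G$ has \emph{no} cycle of length exactly~$l$: the shortest cycle $C$ of length $\geq l$ then has $|V(C)| = m > l$, and $G$ has no cycle of any length in $\{l, l+1, \dots, m-1\}$, i.e.\ there is a gap in the cycle spectrum of $G$ just above~$l$.

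In this remaining case $V(C)$ is useless as a cover: for $r \geq 3$ internally disjoint paths of common length $L$ between two fixed vertices $a,b$, every cycle has length exactly $2L$, so whenever $l \leq 2L$ this graph has $\vpack_{\mathcal{C}_{\geq l}} = 1$ and its shortest cycle of length $\geq l$ has length $2L$ (which is unbounded), yet $\{a\}$ is a cover. The plan is therefore to prove a structural classification of the $2$-connected graphs $G$ with $\vpack_{\mathcal{C}_{\geq l}}(G) \leq 1$ and no cycle of length exactly~$l$ --- a long-cycle analogue of the Lov\'asz--P\'osa description of the graphs without two vertex-disjoint cycles. Starting from an ear decomposition $C, P_1, \dots, P_t$ of $G$ with initial cycle $C$, each ear $P_i$ together with the two arcs of $C$ it cuts off yields two cycles whose lengths sum to $m + 2\|P_i\|$; the spectrum-gap hypothesis forces each of these to be either short (of length $< l$) or of length at least $m$, which severely constrains the lengths of the ears and the positions of their endpoints on $C$. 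Feeding this into quantitative facts about the set of cycle lengths of a $2$-connected graph (in the spirit of the Bondy--Vince theorem on cycles of nearly equal lengths) should confine $G$ to a short list of extremal families --- generalized theta graphs, subdivisions of a few small graphs, near-bipartite configurations --- each of which one checks by hand to have a cover of size at most~$l$ (usually far smaller).

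The main obstacle is exactly this structural classification, and above all forcing the constant to be precisely $l$ rather than merely $O(l)$: a gap that is some fixed function of $l$ for the case $k = 1$ already follows from the machinery of \cite{BirmeleBR07}, but the conjectured sharp value $l$ requires a complete understanding of the extremal configurations in the spectrum-gap case, which is what is currently missing. A secondary difficulty, already visible in the ear analysis, is controlling the cycles of length $\geq l$ that meet the chosen cycle $C$ in only one or two vertices: such cycles behave like $C$-paths, which suggests invoking a Gallai-type packing/covering bound for $\sv$-packings of $S$-paths \cite{Gallai1964} (where the absence of two disjoint $S$-paths yields a cover of size~$2$), but combining such a bound with the length constraint ``length $\geq l$'' without losing the sharp constant appears delicate.
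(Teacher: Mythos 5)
This statement is recorded in the paper as an open \emph{conjecture} from \cite{BirmeleBR07}; the paper contains no proof of it, so there is nothing to compare your argument against except the statement itself. More importantly, what you have written is not a proof. The parts you actually carry out are correct: the reduction to a connected host, the block--cut-tree argument showing that either all cycles of length $\geq l$ live in one $2$-connected block or a single cut vertex is a cover, and the observation that if $G$ contains a cycle $D$ of length exactly $l$ then $V(D)$ is a $\sv$-$\mathcal{C}_{\geq l}$-cover of size $l$ (since any long cycle disjoint from $D$ would give a packing of size $2$). Your generalized-theta example also correctly shows that the shortest long cycle cannot in general serve as the cover.

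The genuine gap is the entire remaining case, which is where the whole content of the conjecture lies: a $2$-connected graph $G$ with $\vpack_{\mathcal{C}_{\geq l}}(G)\leq 1$ whose shortest cycle of length $\geq l$ has length $m>l$. For such graphs you propose a ``structural classification'' via ear decompositions, cycle-spectrum gaps, Bondy--Vince-type results, and a Gallai-type bound for long $S$-paths, but no such classification is stated, let alone proved, and you say yourself that it ``is what is currently missing.'' The appeal to Gallai's theorem is also not load-bearing as written: it controls $S$-paths without any length constraint, and combining it with the requirement ``length $\geq l$'' while keeping the cover size at exactly $l$ is precisely the unsolved difficulty. What is known from \cite{BirmeleBR07} is a cover of size $O(l)$ for $k=1$; closing the gap to the sharp constant $l$ is the open problem, and your proposal does not close it. So the correct assessment is: partial correct reductions, honest identification of the obstacle, but no proof.
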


\begin{conjecture}[Jones' conjecture~\cite{KloksLL02newa}]
Let $\mathcal{C}$ denote the class of all cycles.
  For every planar graph $G$, it holds that \[\vcover_{\mathcal{C}}(G)
  \leq 2 \cdot \vpack_{\mathcal{C}}(G).\]
\end{conjecture}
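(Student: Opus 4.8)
The plan is to induct on $\nu := \vpack_{\mathcal{C}}(G)$. The base case $\nu = 0$ is immediate, since then $G$ is a forest and $\vcover_{\mathcal{C}}(G) = 0$. For the step it suffices to show that \emph{any planar graph with $\nu \geq 1$ has a set $Z$ of at most two vertices such that $\vpack_{\mathcal{C}}(G - Z) \leq \nu - 1$}: applying the induction hypothesis to the planar graph $G - Z$ then gives $\vcover_{\mathcal{C}}(G) \leq \vcover_{\mathcal{C}}(G - Z) + |Z| \leq 2(\nu-1) + 2 = 2\nu$. Before hunting for $Z$ I would preprocess $G$ with the standard moves that preserve both $\vpack_{\mathcal{C}}$ and $\vcover_{\mathcal{C}}$ while shrinking $|V(G)| + |E(G)|$ — deleting vertices of degree at most $1$, dissolving vertices of degree $2$, and breaking up small separations — so that in a minimal counterexample $G$ may be assumed $3$-connected, of minimum degree $\geq 3$, and hence with a unique planar embedding, which I fix.

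The core is to locate $Z$ inside this embedding. Fix a maximum packing $C_1, \dots, C_\nu$; since the $C_i$ are pairwise vertex-disjoint, the closed discs $\Delta_i$ they bound in the plane form a laminar family. I would single out a ``tip'' of $G$: pick a cycle $C$ whose disc is inclusion-minimal among all cycles of $G$, so $C$ is a facial cycle, and consider the subgraph $R$ of $G$ consisting of what is ``trapped'' in a bounded region around $C$ — the part that cannot leave without passing through $C$. The aim is to show $\vpack_{\mathcal{C}}(R) = 1$, invoke the planar instance of Lov\'asz's structure theorem for graphs with no two vertex-disjoint cycles to obtain a two-vertex cover $Z$ of $R$, and take this $Z$. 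Removing $Z$ then destroys $C$ and every cycle living in $R$; the remaining, and crucial, point is to check that $\nu$ actually drops, i.e.\ that $Z$ hits a maximum packing of $G$ and cannot be ``recycled'' to reconstruct a replacement cycle in $G - Z$ elsewhere.

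That last check is exactly where the conjecture remains open, and is the step I expect to be the genuine obstacle. The two vertices returned by the structure theorem may lie on the facial cycle $C$, and two vertices of $C$ can interfere unpredictably with packing cycles that only touch $C$, so a priori $\vpack_{\mathcal{C}}(G - Z)$ need not decrease, while one cannot always force $Z$ into the open interior of the face. Making the choice of the region and of the covering pair rigid enough to guarantee a decrease of exactly one — rather than merely a bound on $|Z|$ in terms of the local connectivity or treewidth around $C$ — seems to require a substantially finer description of $3$-connected planar graphs near a facial cycle, for instance via their $SPQR$-trees or via a discharging argument on the embedding that allots two vertices per packing cycle. Loosening the target ratio from $2$ to $3$ removes the difficulty, as one may then charge a full separating triple to each packing cycle; this is the route of \cite{ma2013approximate}.
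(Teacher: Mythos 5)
There is no proof to compare against: the statement you were asked about is \emph{Jones' conjecture}, which the paper records as an open conjecture (it appears in the ``Some questions and conjectures'' subsection precisely because no proof is known). What you have written is not a proof but a sketch of a natural line of attack, and your own final paragraph concedes that the decisive step is missing. Concretely, the gap is exactly where you place it: after extracting an innermost region $R$ with $\vpack_{\mathcal{C}}(R)=1$ and a two-vertex cover $Z$ of $R$, nothing forces $\vpack_{\mathcal{C}}(G-Z)\leq \vpack_{\mathcal{C}}(G)-1$. A maximum packing of $G$ need not use any cycle contained in $R$, and the two vertices of $Z$ may sit on the boundary cycle $C$ in a way that destroys no cycle of some maximum packing of $G$; so the induction on $\nu$ does not close. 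This is not a technicality you could expect to patch with more care in choosing $R$ or $Z$ --- it is the open content of the conjecture. Two further points in your preprocessing also deserve caution: the reduction of a minimal counterexample to the $3$-connected case is itself nontrivial for a ratio-$2$ statement (splitting along a $2$-separation can lose a factor), and the planar instance of Lov\'asz's characterization of graphs without two disjoint cycles does give $\vcover_{\mathcal{C}}\leq 2$ when $\vpack_{\mathcal{C}}=1$, but only because $K_5$ and $K_{3,k}$ ($k\geq 3$) are non-planar --- you should say this explicitly rather than cite the structure theorem as a black box.

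For context on what \emph{is} known and already tabulated in the paper: the best unconditional bounds for cycles in planar graphs are $\vcover_{\mathcal{C}}(G)\leq 3\,\vpack_{\mathcal{C}}(G)$ (Ma, Yu, and Zang~\cite{ma2013approximate}), improving the earlier $5k$ bound of~\cite{KloksLL02newa}, and the lower-bound table shows the ratio $2$ in the conjecture would be tight. Your closing remark that relaxing the ratio to $3$ makes the argument go through is essentially the route of~\cite{ma2013approximate}, so your sketch is a reasonable reconstruction of the state of the art --- but it is not, and cannot currently be turned into, a proof of the conjectured bound $2k$.
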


A \emph{hole} is an induced cycle of lenght at least~4.
\begin{question}[\cite{jansen2016approximation}]
  Is there a function $f\colon \N \to \N$ such that for every graph
  $G$ and every $k\in \N$, the following holds:
  \begin{itemize}
  \item $G$ has $k$ vertex-disjoint holes; or
  \item there is a set $X\subseteq V(G)$ such that $G\setminus X$ has
    no hole?
  \end{itemize}
\end{question}




\section*{Acknowledgements}

The authors wish to thank Dimitris Chatzidimitriou, Fedor V. Fomin, Gwenaël Joret, Daniel Lokshtanov, Ignasi Sau, and Saket Saurabh
for sharing their ideas and viewpoints on the \ep\ property, as well as the anonymous referees for their useful comments.

\medskip

\newcommand{\etalchar}[1]{$^{#1}$}

\end{document}